\date{\today}
\numberwithin{equation}{section}
\newcommand{\bbD}{{\mathbb{D}}}
\newcommand{\bbN}{{\mathbb{N}}}
\newcommand{\bbR}{{\mathbb{R}}}
\newcommand{\bbZ}{{\mathbb{Z}}}
\newcommand{\bbC}{{\mathbb{C}}}
\newcommand{\cA}{{\mathcal{A}}}
\newcommand{\cB}{{\mathcal{B}}}
\newcommand{\cC}{{\mathcal{C}}}
\newcommand{\cD}{{\mathcal{D}}}
\newcommand{\cE}{{\mathcal{E}}}
\newcommand{\cH}{{\mathcal{H}}}
\newcommand{\cK}{{\mathcal{K}}}
\newcommand{\cL}{{\mathcal{L}}}
\newcommand{\cM}{{\mathcal{M}}}
\newcommand{\cN}{{\mathcal{N}}}
\newcommand{\cQ}{{\mathcal{Q}}}
\newcommand{\cR}{{\mathcal{R}}}
\newcommand{\cT}{{\mathcal{T}}}
\newcommand{\cX}{{\mathcal{X}}}
\newcommand{\bA}{{\mathbf{A}}}
\newcommand{\bB}{{\mathbf{B}}}
\newcommand{\fA}{{\mathfrak{A}}}
\newcommand{\fB}{{\mathfrak{B}}}
\newcommand{\fj}{{\mathfrak{j}}}
\newcommand{\ba}{{\mathbf{a}}}
\newcommand{\bb}{{\mathbf{b}}}
\newcommand{\E}{{\mathsf{E}}}
\newcommand{\sE}{{\mathsf{E}}}
\newcommand{\sF}{{\mathsf{F}}}
\newcommand{\PSU}{{\mathrm{PSU}}}
\newcommand{\G}{{\Gamma}}
\renewcommand{\a}{\alpha}
\renewcommand{\b}{\beta}
\newcommand{\g}{\gamma}
\newcommand{\e}{{\epsilon}}
\newcommand{\s}{{\sigma}}
\newcommand{\vk}{\varkappa}
\newcommand{\vt}{\vartheta}
\renewcommand{\l}{\lambda}
\renewcommand{\L}{\Lambda}
\newcommand{\om}{{\omega}}
\newcommand{\z}{\zeta}
\def\u{\upsilon}
\newcommand{\vp}{\varphi}
\newcommand{\pd}{{\partial}}
\renewcommand{\Re}{\text{\rm Re}\,}
\renewcommand{\Im}{\text{\rm Im}\,}
\newcommand{\tr}{\text{\rm tr}\,}
\newcommand{\ess}{\text{\rm ess}}
\newcommand{\Res}{\text{\rm Res}\,}
\newtheorem{theorem}{Theorem}[section]
\newtheorem{lemma}[theorem]{Lemma}
\newtheorem{proposition}[theorem]{Proposition}
\newtheorem{corollary}[theorem]{Corollary}
\newtheorem{conjecture}[theorem]{Conjecture}
\theoremstyle{definition}
\newtheorem{definition}[theorem]{Definition}
\newtheorem{remark}[theorem]{Remark}
\newtheorem{problem}[theorem]{Problem}
\date{\today}
\title[The Deift Conjecture]{ The Deift Conjecture: \\ A Program to Construct a Counterexample}
\author[D.\ Damanik]{David Damanik}
\address{Department of Mathematics, Rice University, Houston, TX 77005, USA}
\email{damanik@rice.edu}
\author[M.\ Luki\'c]{Milivoje Luki\'c}
\address{Department of Mathematics, Rice University, Houston, TX 77005, USA}
\email{lukic@rice.edu}
\author[A.\ Volberg]{Alexander Volberg}
\address{Department of Mathematics, Michigan State University, East Lansing, MI 48824, USA}
\email{volberg@msu.edu}
\author{Peter Yuditskii}
\address{Institut f\"ur Analysis, Johannes Kepler Universit\"at Linz, 4020 Linz, Austria}
\email{peter.yuditskii@gmail.com}
\begin{document}


\maketitle

\tableofcontents

\newpage

\section{Introduction}

Deift proposed the following conjecture:

\begin{conjecture}[Deift \cite{De,De2}]
For the KdV equation $\partial_t u - 6 u \partial_x u + \partial_x^3 u = 0$ with almost periodic initial data $u(x,0) = V(x)$,  the solution evolves almost periodically in time.
\end{conjecture}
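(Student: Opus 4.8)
The plan is to reformulate the problem spectrally and to reduce almost periodicity in time to the assertion that the KdV flow is a \emph{linear} flow on a compact torus, accessed through a continuous parametrization. The key structural fact is that KdV is isospectral for the Schr\"odinger operator $L(t) = -\partial_x^2 + u(\cdot,t)$: the spectrum $E := \sigma(L(0)) = \sigma(L(t))$ is conserved, so the entire orbit $\{u(\cdot,t)\}_{t\in\bbR}$ is confined to the isospectral set $\is$ of potentials whose Schr\"odinger operator has spectrum $E$. First I would record that, by Kotani--Remling theory, an almost periodic potential with purely absolutely continuous spectrum is reflectionless on $E$, and that this property is preserved by the flow; hence the orbit in fact lies in the reflectionless part of $\is$, which is the object one can hope to parametrize.

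The second step is to parametrize $\is$. Assuming $E$ is a homogeneous set in the sense of Carleson for which the \emph{Direct Cauchy Theorem} (DCT) holds on the domain $\Omega := \overline{\bbC}\setminus E$, the theory of character-automorphic Hardy spaces (Sodin--Yuditskii) identifies the reflectionless isospectral set with the compact character group $\G^* := \widehat{\pi_1(\Omega)}$, through a homeomorphism $\a \mapsto V_\a$. In these coordinates each reflectionless potential is encoded by its Dirichlet divisor, whose image under an Abel-type map is the character $\a$; the DCT is precisely what guarantees that $\a\mapsto V_\a$ is continuous and that the Abel map linearizes the dynamics.

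Third, I would show that both spatial translation $V\mapsto V(\cdot+s)$ and the KdV time evolution act as \emph{commuting linear flows} on $\G^*$. Translation corresponds to $\a \mapsto \a + s\,\eta^{(1)}$, where $\eta^{(1)}\in\G^*$ is the frequency read off from the harmonic measure of $\Omega$ at infinity, while the third-order KdV flow corresponds to $\a\mapsto \a + t\,\eta^{(3)}$, with $\eta^{(3)}$ extracted from the next coefficient in the asymptotic expansion of the Weyl $m$-function at $\infty$. That these are genuine translations is the integrable-systems heart of the argument and follows from the trace formulas together with the DCT. Composing with the continuous map $\a\mapsto V_\a$, the function $t\mapsto u(\cdot,t)=V_{\a_0 + t\,\eta^{(3)}}$ is the continuous image of a linear flow on the compact abelian group $\G^*$; by Bohr's theorem it is almost periodic in $t$, uniformly in $x$, which is the assertion.

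The main obstacle is to \emph{verify the hypotheses for arbitrary almost periodic initial data}, and this is exactly where the argument becomes delicate. Two gaps must be closed. First, not every almost periodic $V$ has purely absolutely continuous spectrum, so reflectionlessness --- and with it the entire divisor picture --- need not be available. Second, even granting a.c. spectrum, the spectrum $E$ of a general almost periodic potential need not be homogeneous and the DCT may fail on $\Omega$; when it does, the map $\a\mapsto V_\a$ can fail to be a homeomorphism and the linear flow on $\G^*$ need no longer project to an almost periodic function of $t$. Bridging the distance between the full class of almost periodic potentials and the class of reflectionless potentials with DCT/homogeneous spectrum is the crux: it is the step that a complete proof must supply, and the step at which the present program instead searches for a counterexample.
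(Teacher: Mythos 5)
There is a genuine gap, and it is not one that can be patched: the statement you are asked to prove is an open conjecture, and the paper you are being compared against does not prove it --- the entire manuscript is a program to \emph{disprove} it. What you have written is the known conditional argument (essentially the McKean--Trubowitz/Sodin--Yuditskii scheme as implemented in \cite{BDGL1,EVY}): under the hypotheses that $V$ is reflectionless with spectrum $\sE$, that $\Omega = \bbC\setminus\sE$ is a Widom domain, and that DCT holds, the Abel map linearizes both the translation and KdV flows on the compact group $\pi_1(\Omega)^*$, and almost periodicity in $t$ follows by continuity of the inverse parametrization. That part of your sketch is sound and matches the machinery the paper develops (your $\eta^{(1)},\eta^{(3)}$ are the paper's $\eta,\eta^{(1)}$). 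But your final paragraph, which frames the failure of these hypotheses as ``the step a complete proof must supply,'' misreads the status of that step: it is not a technical lacuna, it is the precise locus where the conjecture is expected to be \emph{false}. The VY-dichotomy quoted in the paper says that for generic Widom spectral sets, if DCT fails then \emph{no} reflectionless potential with spectrum $\sE$ is almost periodic --- so there is no route to extending your argument by weakening DCT; and the paper's program (Main Lemma \ref{mainl} together with Problem \ref{prob13sept21}) aims to construct a special non-generic $\sE$ with DCT failing, an almost periodic reflectionless initial datum $V$ whose translation orbit closure $\cT_\eta(\a_V)$ avoids the singular set $\Xi$, but whose time-frequency vector $\eta^{(1)}$ is generic, forcing the time orbit closure to meet $\Xi$ and the solution $u(\cdot,t)$ to fail almost periodicity in $t$.

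Concretely, then: your proposal proves a correct theorem, but a different one --- almost periodicity in time for reflectionless initial data with homogeneous/DCT spectrum, which is the already-published result of \cite{BDGL1,EVY} cited in the paper's introduction --- and it does not and cannot prove the conjecture as stated, because the reduction from ``almost periodic $V$'' to ``reflectionless with a.c.\ spectrum and DCT'' is invalid (almost periodic potentials can have singular spectrum, and even reflectionless almost periodic ones can, per this paper's program, live on sets where DCT fails). A blind attempt at this statement should either restrict the hypotheses and prove the special case honestly, or recognize that the general statement is open and, per the authors, conjecturally false.
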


The conjecture is motivated by the case of periodic initial data, for which it was proved by McKean--Trubowitz \cite{McKT}. In recent years, it has been proved for certain classes of reflectionless initial data \cite{BDGL1,EVY}.

However, in this manuscript  we present a program which intends to show that the conjecture is not true in general, by constructing  $C^\infty$ almost periodic initial data $V$ for which the solution is not almost periodic in $t$.\footnote{This program was presented in a lecture by P.~Y.\ at Johannes Kepler Universit\"at on November 16, 2021 (https://www.jku.at/en/institute-of-analysis/conferences/on-the-deift-question/), and the present manuscript expands on the material presented in that lecture.}

The key element of our construction is a \textit{dichotomy} that was found in \cite{VYInv}, see also \cite{DY}. In a moment we will recall such notions as
\begin{itemize}
\item[1.1] Widom domain
\item[1.2] DCT property
\item[1.3] Reflectionless 1-D Schr\"odinger operators
\end{itemize}
Using these terms, the VY-dichotomy is the following statement.
\medskip

\noindent
\textit{For a generic spectral set $\sE$ such that the domain $\Omega=\bbC\setminus \sE$ is of Widom type, the following dichotomy holds:
\begin{itemize}
\item[a)] If the DCT property holds in $\Omega$, then all reflectionless potentials $V$ with spectrum $\sE$ are almost periodic.
\item[b)] If DCT fails, no reflectionless potential with spectrum $\sE$ is almost periodic.
\end{itemize}
}

\textbf{Our main attempt} is to construct a \textit{special} spectral set $\sE$ such that DCT fails but
there exists an almost periodic reflectionless potential $V(x)$ with this spectral set; on the other hand the translation in time should be \textit{generic}, for an exact setting see Problem~\ref{prob13sept21}. Then the VY-dichotomy implies that the solution of the KdV equation with this almost periodic initial datum is not almost periodic in time.

\subsection{Widom Domain}

For our purpose it is enough to consider comparably simple spectral sets of the form
\begin{equation}\label{setE1}
\E = [0,\infty) \setminus \cup_{j=1}^\infty (a_j, b_j),
\end{equation}
where
\begin{equation}\label{setE2}
0 \leftarrow \dots < a_n < b_n < \dots <  a_2 < b_2 < a_1 < b_1.
\end{equation}
Thus, the spectral set $\sE$ represents a system of proper intervals $[b_{k+1},a_k]$, $k\ge 1$,
 accumulating to the origin only, together with the point of accumulation as well as the half axis $[b_1,\infty)$. We emphasize that the condition  \eqref{setE2} on the set $\sE$ in \eqref{setE1} holds throughout the whole manuscript.

We assume that $\Omega = \bbC \setminus \E$ is a Dirichlet regular domain. For any $z_0 \in \Omega$, we denote by $G(z,z_0) = G_\Omega(z,z_0)$ the \emph{Green function} with a logarithmic pole at $z_0$ for the domain $\Omega$; by Dirichlet regularity, $G(\cdot,z_0)$ is continuous on $\Omega$ and vanishes continuously on $\E \cup \{\infty\}$. In our case only the origin has to be checked in this sense for regularity.

\begin{definition}
Let $\tilde c_j$ be the critical points of $G(z,-1)$, i.e., $\nabla G(\tilde c_j,-1)=0$.  We say that $\Omega$ is of \emph{Widom type} if
\begin{equation}\label{11sept21}
\sum_{j\ge 1}G(\tilde c_j,-1)<\infty.
\end{equation}
\end{definition}

Note  that in our case each gap contains a unique critical point $\tilde c_j\in(a_j,b_j)$, $j\ge 1$.

The key property of Widom domains is the following.
Consider multi-valued meromorphic functions $f$ on $\Omega$ such that $\lvert f \rvert$ is single-valued. For such functions, there exists a character $\alpha=\alpha_f : \pi_1(\Omega) \to \bbR / \bbZ$ such that
\[
f \circ \gamma = e^{2\pi i \alpha(\gamma)} f, \qquad \forall \gamma \in \pi_1(\Omega).
\]
The function $f$ is said to be \emph{character-automorphic}.
By $H^\infty(\a)$ we understand  the collection of bounded character-automorphic functions with the given character $\a$.

Due to Widom, \eqref{11sept21} holds if and only if $H^\infty(\a)$ contains a non-constant function for all characters $\a$.

\subsection{DCT Property}

Statements about character-automorphic functions can also be viewed in terms of lifts to the universal cover $\bbD$ via the uniformization $\Omega \simeq \bbD / \Gamma$. More precisely, we denote by $\Lambda : \bbD / \Gamma \to \Omega$ a uniformization of $\Omega$, where $\Gamma \cong \pi_1(\Omega)$ is a Fuchsian group (discrete subgroup of $\PSU(1,1)$). In particular, $\L$ is surjective and $\Lambda(\z_1) = \L(\z_2)$ if and only if $\z_2 = \g(\z_1)$ for some $\g \in \G$. Now the multi-valued function $f$ lifts to a single-valued function $F$ on $\bbD$; formally, $F = f \circ \L$.

The function $f$ is said to have \emph{bounded characteristic} if its lift $F$ has bounded characteristic, i.e., if $F = F_1 / F_2$ for some $F_1, F_2  \in H^\infty(\bbD)$. If $F_2$ is outer, $f$ is said to be of \emph{Smirnov class}. This class of character-automorphic functions is denoted by $\cN_+(\Omega)$.

\begin{definition}
Let $f$ be single-valued in the domain such that $f\in N_+(\Omega)$ and
\begin{equation}\label{13sept21}
\int_{\sE}(|f(\l+i0)|+|f(\l-i0|)\frac{d\l}{\l+1}<\infty.
\end{equation}
We say that DCT (Direct Cauchy Theorem) holds in $\Omega$ if for all such $f$'s, we have
$$
f(z)=\frac 1 {2\pi i}\oint_{\pd\Omega }\frac{f(\l)d\l}{\l-z}.
$$

\end{definition}

\subsection{Reflectionless Operators}\label{subs1_3}

For any $V \in L^\infty(\bbR)$, the Schr\"odinger operator $L_V = -\partial_x^2 +V$ is a self-adjoint operator on $L^2(\bbR)$ with domain $D(L_V) = W^{2,2}(\bbR)$. In particular, it is limit point at $\pm \infty$, and for any $z \notin \sigma_\ess(L_V)$, there are Weyl solutions $\psi_\pm(x,z)$ which solve the eigenfunction equation
\[
- \partial_x^2 \psi_\pm (x,z) + V(x) \psi_\pm(x,z) = z \psi_\pm(x,z),
\]
are nontrivial, and  square integrable at $\pm \infty$, respectively. These Weyl solutions are defined uniquely up to normalization, and their logarithmic derivatives
\[
m_\pm(x,z) = \pm \frac{\partial_x \psi_\pm(x,z)}{\psi_\pm(x,z)},
\]
viewed as functions of $z$, are Herglotz functions (analytic maps of $\bbC_+$ to itself). Moreover, $m_\pm(x,\cdot)$ are spectral functions corresponding to half-line restrictions of $L_V$ to half-lines $[x,\pm \infty)$ with a Dirichlet boundary condition at $x$, and by the Borg--Marchenko theorem, they determine $V$ uniquely.

\begin{definition}
We will write $m_\pm (z) = m_\pm(0,z)$.
The potential $V$ is said to be \emph{reflectionless} if
\begin{equation}\label{reflectionless}
\lim_{\epsilon\downarrow 0} m_+(\lambda + i\epsilon) = - \lim_{\epsilon\downarrow 0} \overline{ m_-(\lambda + i\epsilon) }
\end{equation}
for Lebesgue-a.e.\ $\lambda \in \sigma(L_V)$. We denote by $\cR(\E)$ the set of all reflectionless potentials with $\sigma(L_V) = \E$.
\end{definition}

\section{Introduction: Details of the Plan}\label{sect2}

\subsection{Divisors (Dirichlet Data) and the (Generalized) Abel Map}

Following \cite{SY97}, we associate with a given $V\in\cR(E)$ a divisor $D\in\cD(\sE)$ to which we can apply the Abel map acting from $\cD(\sE)$ to the group of characters $\pi_1(\Omega)^*$.
This Abel map linearizes each flow in the KdV hierarchy.

\begin{definition}\label{def127}
To each gap $(a_j,b_j)$, $j\ge 1$, we associate a topological circle
\[
I_j = [a_j, b_j] \times \{ -1, +1 \} /_{\substack{(a_j,-1)\sim (a_j, +1) \\ (b_j,-1)\sim (b_j, +1)}}
\]
The product $\cD(\E) =  \prod_{j=1}^\infty I_j$ is called the \emph{torus of divisors} associated to $\sE$ or the \emph{torus of Dirichlet data}.
\end{definition}

Thus $D\in\cD(\sE)$ is the collection
$$
D = \{ (\lambda_j, \epsilon_j) \}_{j=1}^\infty,
$$
where $\l_j\in[a_j,b_j]$ and $\e_j=\pm 1$ with the mentioned identification
$$
(a_j,-1)\sim (a_j, +1),\quad (b_j,-1)\sim (b_j, +1).
$$

\begin{lemma} There is a homeomorphism
$$
\mathcal{B} : \cR(\E) \to \cD(\E), \; V \mapsto D.
$$
\end{lemma}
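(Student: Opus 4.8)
The map $\cB$ is the classical Dirichlet-data correspondence, and the plan is to exhibit it as a continuous bijection and then promote it to a homeomorphism by a compactness argument. To define $\cB$, take $V\in\cR(\E)$: the Weyl functions $m_\pm(z)=m_\pm(0,z)$ are Herglotz and, since $\sigma(L_V)=\E$, continue meromorphically across each gap $(a_j,b_j)$, where they are real. The diagonal Green function $g(z)=-(m_+(z)+m_-(z))^{-1}$ is then Herglotz, analytic on all of $\Omega$ (it has no poles in the gaps, since $m_+(z)+m_-(z)=0$ would make the two Weyl solutions proportional and force $z$ to be an eigenvalue of $L_V$, impossible in a gap), and real and monotone on each gap. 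By Herglotz monotonicity $g$ has at most one zero $\lambda_j$ in the closed gap $[a_j,b_j]$; there $m_+$ or $m_-$ has a pole, and I set $\epsilon_j=+1$ when $\psi_+(0,\lambda_j)=0$ and $\epsilon_j=-1$ when $\psi_-(0,\lambda_j)=0$. The reflectionless sign structure of $g$ at the band edges promotes this to exactly one point of the circle $I_j$ (an interior point generically, a band edge in the degenerate case, where the pole is absorbed into the band and the sign is irrelevant, matching the identification built into $I_j$). Thus $\cB(V)=\{(\lambda_j,\epsilon_j)\}_{j\ge1}$ is a well-defined element of $\cD(\E)$.

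Injectivity follows from inverse spectral theory. The divisor records, gap by gap, the location of the pole and which of $m_+,m_-$ carries it ($\epsilon_j=+1$ for $m_+$, $\epsilon_j=-1$ for $m_-$). Together with the asymptotics at $\infty$ and the reflectionless boundary relation \eqref{reflectionless} on $\E$, this determines the Herglotz functions $m_\pm$ uniquely as single-valued functions on $\Omega$. By the Borg--Marchenko theorem recalled in Subsection~\ref{subs1_3}, $m_\pm$ determine $V$, so two reflectionless potentials with the same divisor coincide.

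Surjectivity is the crux and the step I expect to be hardest. Given an arbitrary $D\in\cD(\E)$ I would build $m_+$ from $D$ and $\E$ by an exponential Herglotz representation over $\Omega$ (whose boundary phase is fixed by $\E$ and whose gap poles are fixed by $D$), define $m_-$ through the reflectionless relation, and then verify that $(m_+,m_-)$ genuinely arises from a Schr\"odinger operator $L_V$ with $V\in L^\infty(\bbR)$, $\sigma(L_V)=\E$, $V$ reflectionless, and $\cB(V)=D$. The analytic content (that the reconstructed $m_\pm$ really are the Weyl functions of a \emph{bounded} potential, and that the trace/sum-rule series producing $V$ converge) is exactly where the Widom condition on $\Omega$ and the machinery of \cite{SY97} enter, and this is the main obstacle.

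Finally, the passage to a homeomorphism is soft. I would topologize $\cR(\E)$ by local uniform convergence of $m_+$ on $\bbC_+$ (equivalently, local uniform convergence of potentials), which makes it Hausdorff, and check that the reconstruction $\cB^{-1}\colon D\mapsto V$ is continuous: as $D$ moves in $\cD(\E)$ the gap poles of $m_+$ move continuously (including sliding to and from the band edges in accordance with the topology of $I_j$), the exponential representation depends continuously on them, and hence so do $m_\pm$ and $V$. Since $\cD(\E)=\prod_{j\ge1}I_j$ is compact by Tychonoff's theorem, the continuous bijection $\cB^{-1}$ from a compact space to a Hausdorff space is automatically a homeomorphism; therefore $\cB$ is a homeomorphism as well.
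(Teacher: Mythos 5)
Your proposal follows the same overall route as the paper's actual construction (the lemma is stated without proof in Section~\ref{sect2}; the content is carried out in Section~\ref{sect5}, in the subsection on parametrization of reflectionless operators by divisors): read off $\lambda_j$ from the sign change of $R=-1/(m_++m_-)$ in each gap and $\epsilon_j$ from which of $m_\pm$ carries the pole; prove injectivity by reconstructing $m_\pm$ from $D$ and invoking Borg--Marchenko; prove surjectivity by building a reflectionless pair from an arbitrary divisor $D$.

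The differences lie in how the two hard steps are completed. For injectivity, the paper does not merely invoke ``inverse spectral theory'': it writes the explicit splitting formula \eqref{additivesplitmpm}, where $R_D$ is determined by $D$ alone via the product \eqref{resolventproduct}, and then splits the Herglotz measure of $-1/R_D$. The two points your sketch glosses over are exactly where the work is: one must show this measure has \emph{no singular part on} $\E$ (no point masses at band edges, none at the origin) --- otherwise the reflectionless condition \eqref{reflectionless} would not pin down the split --- and the residual additive real constant in the split is forced to vanish by Atkinson's asymptotics $m_\pm(z)=-\sqrt{-z}+o(1)$, $z\to-\infty$. For surjectivity --- the step you flag as the main obstacle and leave open --- the paper's fix is much softer than the Widom/\cite{SY97} machinery you anticipate: the pair $m_\pm$ defined by \eqref{additivesplitmpm} for an arbitrary $D$ is reflectionless and its $\cM$-matrix \eqref{matrixHerglotzRep} has spectrum $\E$; since $\E$ contains a half-line, this pair lies in the Kotani--Marchenko class, and existence of a potential $V\in\cR(\E)$ with these Weyl functions follows directly from that theory. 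Finally, your Tychonoff argument for promoting the continuous bijection to a homeomorphism is sound and is in fact more than the paper records (the topological part is not proved explicitly there); but note that you still owe the continuity of $D\mapsto m_\pm$ in the local uniform topology, i.e.\ uniform control of the tails of the products and sums in the reconstruction formula --- this is where the Widom condition \eqref{11sept21} genuinely enters, rather than in the existence step where you placed it.
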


Characters of the group $\pi_1(\Omega)$ form a group, which we denote by $\pi_1(\Omega)^*$. The group $\pi_1(\Omega)$ has a collection of free generators: loops $\g_k$ starting at $-1$ and going through the gap $(a_k,b_k)$. Thus each character $\a$ can be identified with a sequence
$\{\a_k\}_{k\ge 1}$,
$\a_k\in\bbR/\bbZ$, where
$$
\a_k=\a(\g_k).
$$
In this way we fix an identification $\pi_1(\Omega)^*\simeq(\bbR/\bbZ)^\infty$.

\begin{definition}\label{def227}
Let $\omega(\sF,z)$ be the harmonic measure of the set $\sF\subset \sE$ in the domain w.r.t. $z\in\Omega$  and $\sE_k=\sE\cap[0,a_k]$. The \emph{Abel map} $\cA:D(\sE)\to\pi_1(\Omega)^*$ is given by
\begin{equation}\label{12sept21}
\cA_k(D)=\frac 1 2 \sum_{j\ge 1}\left(\omega(\l_j,\sE_k)-\omega(b_j,\sE_k)\right)\e_j \mod 1.
\end{equation}
\end{definition}

\begin{conjecture}\label{th13sept21}
The composition map $\cA\circ\cB:\cR(\sE)\to\pi_1(\Omega)^*$ linearizes the shifts of a reflectionless potential both in space and time, that is, there exists $\eta=\{\eta_k\}_{k\ge 1}$ such that
$$
\cA(\cB(V(x+x_0))=\{\a_k-\eta_k x_0\mod 1\}_{k\ge 1}
$$
and there exists  $\eta^{(1)}=\{\eta^{(1)}_k\}_{k\ge 1}$ such that
$$
\cA(\cB(u(x,t)))=\{\a_k-\eta^{(1)}_k t \mod 1\}_{k\ge 1}
$$
where $u(x,t)$ is the solution of the KdV equation with $u(x,0)=V(x)$.
\end{conjecture}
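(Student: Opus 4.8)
The plan is to reduce the statement to the classical finite-gap linearization of Dubrovin and McKean--Trubowitz \cite{McKT} and then to promote it to the infinite-gap Widom setting via the Hardy-space machinery of \cite{SY97}. It suffices to treat the space and time flows in parallel, since both reduce to showing that each coordinate $\cA_k$ of the image point moves at a constant speed.

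First I would make the Dirichlet data dynamical. For $V\in\cR(\sE)$ let
\[
g(x,z)=\big(m_+(x,z)+m_-(x,z)\big)^{-1}
\]
be the diagonal of the resolvent of $L_V$; it is analytic on $\Omega$, purely imaginary on $\sE$ by \eqref{reflectionless}, and has exactly one zero in each gap. That zero is $\lambda_j(x)$, and the Weyl solution that vanishes there ($\psi_+$ or $\psi_-$) fixes the sign $\epsilon_j(x)$; by the definition of $\cB$, the pair $(\lambda_j(x),\epsilon_j(x))$ is the $j$-th coordinate of $\cB(V(\cdot+x))\in I_j$. Since translating the potential translates the base point of the Weyl solutions, tracking $\cA(\cB(V(\cdot+x)))$ is the same as tracking the $x$-evolution of the zeros of $g$.

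Next, the Dubrovin equations. From the Riccati equations $\partial_x m_\pm=\pm\big((V-z)-m_\pm^2\big)$ one gets $\partial_x g=(m_+-m_-)\,g$, and expanding both sides at the simple zero $\lambda_j(x)$ yields
\[
\dot\lambda_j=-\epsilon_j\,\mathrm{Res}_{z=\lambda_j} m_{\epsilon_j},
\]
which in the standard geometric normalization is a speed carrying the sheet sign $\epsilon_j$ together with a square-root factor that vanishes at the gap edges $a_j,b_j$; this vanishing is exactly what makes the motion smooth across the identifications defining $I_j$, so the flow is a genuine flow on $\cD(\sE)$. For the KdV flow the Lax pair produces the analogous time-Dubrovin equation in which the same square-root speed is multiplied by a low-degree polynomial in $\lambda_j$ dictated by the dispersion, with $\epsilon_j$ unchanged. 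Differentiating the Abel map and using that $\omega(b_j,\sE_k)$ is constant,
\[
\frac{d}{dx}\cA_k=\frac12\sum_{j\ge1}\partial_\lambda\omega(\lambda_j,\sE_k)\,\epsilon_j\,\dot\lambda_j .
\]
The density $\partial_\lambda\omega(\lambda,\sE_k)\,d\lambda$ is the gap restriction of the first-kind differential $d\omega_k$, so each summand is the residue of a single meromorphic differential; summing over all gaps, Cauchy's theorem on $\Omega$ collapses the sum to the coefficient of $d\omega_k$ in the local expansion at the distinguished point. This coefficient depends only on $\sE$ and the flow, giving $\frac{d}{dx}\cA_k=-\eta_k$ and, with the KdV speed, $-\eta_k^{(1)}$; the frequencies are read off from the first two coefficients of the quasimomentum at $z=\infty$ (the momentum and energy of the $k$-th gap). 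Integrating gives the asserted linear motion.

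The hard part will be the infinite-gap limit: the residue computation above is literally the finite-gap identity, and the real content is to justify every step when the gaps accumulate at $0$. This requires (i) existence, smoothness and uniform control of the Dubrovin flow on $\cD(\sE)$, so that $\sum_j\partial_\lambda\omega(\lambda_j,\sE_k)\,|\dot\lambda_j|<\infty$ and termwise differentiation is legitimate, and (ii) the validity of the Cauchy/residue step on the infinite-genus domain, i.e.\ the vanishing of the contributions near $0$ and $\infty$. Both are precisely where the hypotheses enter: the Widom condition \eqref{11sept21} supplies the bounded-characteristic factorizations of $g$ and $m_\pm$ and the summability of the $d\omega_k$, while the DCT property is exactly the tool that turns a boundary Cauchy integral back into the function, which is the step used to evaluate the collapsed sum. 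Under Widom together with DCT I expect the argument to run as in \cite{SY97}; the delicacy of these uniform analytic estimates, carried out simultaneously in $x$ and $t$, is presumably why the statement is recorded here as a conjecture rather than a theorem.
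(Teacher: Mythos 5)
The statement you are trying to prove is a \emph{conjecture} in the paper: the authors do not prove it, and the comparison point is what they actually establish, namely only the space-shift half, Proposition~\ref{prop13sept21}, under the additional hypothesis $\cB(V)\notin\cD_b(\sE)$ (absolutely continuous spectrum). Moreover, their partial proof is not via Dubrovin equations at all: it is a word-by-word adaptation of \cite[Sect.~4]{DY}, resting on the character-automorphic Hardy space machinery of Sect.~\ref{sect5} (the canonical products $L_D$, the kernels $k_D$, the spaces $\cH^2_D$ and the $J$-contractive matrix functions), a functional-model technique designed precisely to operate \emph{without} the Direct Cauchy Theorem.

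This points to the genuine gap in your proposal. Your key step --- collapsing $\sum_j \partial_\lambda\omega(\lambda_j,\sE_k)\,\epsilon_j\,\dot\lambda_j$ by a Cauchy/residue argument on $\Omega$ --- is exactly the step you yourself flag as requiring DCT, and accordingly you only conclude ``under Widom together with DCT.'' But the conjecture is stated, and is needed for the paper's program, precisely for domains in which DCT \emph{fails}: the whole construction aims to build a counterexample to the Deift conjecture from a DCT-violating $\Omega$, and the authors explicitly note that \cite{EVY} (where your style of argument is carried out) is outside the scope for this reason. When DCT fails, Lemma~\ref{l116jun21} shows that the boundary Cauchy integral of an admissible function differs from the function by a nontrivial entire function of $1/z$, so your collapsed residue sum acquires an uncontrolled defect term; in addition $\Xi\neq\emptyset$ means the Abel map is no longer injective, which is why even the space-shift statement requires the hypothesis $\cB(V)\notin\cD_b(\sE)$ that your sketch never imposes. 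In short, your argument re-derives (modulo the deferred estimates) what is already known from \cite{SY97,McKT,EVY} in the DCT setting, but it does not touch the actual content of Conjecture~\ref{th13sept21}; and your closing diagnosis --- that the statement is a conjecture because of delicate uniform estimates --- misidentifies the real obstruction, which is the failure of DCT.
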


We refer the reader to \cite{EVY}, but emphasize that our setting is outside the formal scope of the main results of \cite{EVY} since our domains do not satisfy DCT.
In Sect.~\ref{sect5} we develop a technique that in principle should allow one to remove this restriction. Based on this technique we prove Proposition~\ref{prop13sept21} supporting the conjecture in the part dealing with the space shift.

\subsection{Minimal Violations of DCT and the Main Lemma}

The following lemma is an easy consequence of the Morera theorem \cite{Gar}.

\begin{lemma}\label{l116jun21}
Let the single-valued $f\in N_+(\Omega)$ obey \eqref{13sept21}.
Then
$$
h_f(z)=\frac 1 {2\pi i}\oint_{\pd\Omega }\frac{f(\l)d\l}{\l-z}-f(z)
$$
is an entire function w.r.t. $1/z$.
\end{lemma}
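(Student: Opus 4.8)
The plan is to show that $h_f$ extends analytically across the spectrum $\sE$ and across $\infty$, so that its only possible singularity is at $z=0$, which is the accumulation point of the gaps. The function
\[
h_f(z) = \frac{1}{2\pi i}\oint_{\pd\Omega}\frac{f(\l)\,d\l}{\l-z} - f(z)
\]
is built from a Cauchy-type integral over $\pd\Omega$ minus $f$ itself. For $z$ inside $\Omega$ (away from the boundary contour), the Cauchy integral is manifestly analytic as a function of $z$, and $f$ is analytic on $\Omega$, so $h_f$ is analytic on all of $\Omega$. The real content is to understand the behavior of $h_f$ as $z$ approaches the boundary set $\sE \cup \{\infty\}$ and to rule out singularities there, except possibly at the origin.

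First I would make the boundary contour $\pd\Omega$ precise: it runs along both sides of each band $[b_{k+1},a_k]$ and the half-line $[b_1,\infty)$, so the integral picks up the jump $f(\l+i0)-f(\l-i0)$ across $\sE$. The hypothesis \eqref{13sept21} that $\int_{\sE}(|f(\l+i0)|+|f(\l-i0)|)\frac{d\l}{\l+1}<\infty$ is exactly what guarantees absolute convergence of this Cauchy integral, including control of the contribution from the unbounded band $[b_1,\infty)$ and near the accumulation point $0$. Next, I would apply Morera's theorem on a neighborhood of a typical boundary point $\l_0\in\sE\setminus\{0\}$ in the interior of a band: the point is that the Cauchy integral reproduces the boundary values of $f$ in the Plemelj–Sokhotski sense, so that across each band the jump of the Cauchy integral exactly cancels the jump of $f$. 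Concretely, for $\l_0$ in the interior of a band, the boundary values of the Cauchy integral from above and below differ by $f(\l_0+i0)$ on one side and by $f(\l_0-i0)$ on the other; subtracting $f$ leaves a function whose boundary values from the two sides of $\sE$ agree, and a contour-integral computation shows $\oint_\triangle h_f(z)\,dz=0$ for every small triangle $\triangle$ straddling the band. Morera then gives analytic continuation of $h_f$ across the interior of each band.

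The analogous argument handles the point at $\infty$: since $f\in N_+(\Omega)$ satisfies the weighted integrability \eqref{13sept21}, the Cauchy integral has a convergent expansion in powers of $1/z$ as $z\to\infty$, and together with the behavior of $f$ at $\infty$ this shows $h_f$ is analytic in $1/z$ near $\infty$. Combining the continuations across all band interiors with the continuation across $\infty$, the only boundary point that is not an interior band point and not $\infty$ is the accumulation point $0$. Hence $h_f$ extends to an analytic function on $\bbC\setminus\{0\}$, i.e., an entire function of $1/z$, as claimed. The main obstacle I anticipate is the uniformity of the Morera estimate near the accumulation point $0$: because the bands and gaps pile up at the origin, one must check that the jump cancellation and the convergence of the Cauchy integral are controlled uniformly as $k\to\infty$, so that the continuations across the individual bands genuinely glue to a single analytic function on a full punctured neighborhood of $0$. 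This is precisely where the weight $\frac{1}{\l+1}$ in \eqref{13sept21} and the Smirnov/$N_+$ membership of $f$ must be used to dominate the accumulated boundary contributions.
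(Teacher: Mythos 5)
Your overall strategy --- absolute convergence of the Cauchy integral from \eqref{13sept21}, Plemelj--Sokhotski jump cancellation, and Morera across the spectrum --- is exactly the route the paper intends: its entire ``proof'' is the remark that the lemma is an easy consequence of the Morera theorem, with a reference to \cite{Gar}. But your execution has a genuine gap at the band endpoints. Morera, as you apply it, only continues $h_f$ across the \emph{interiors} of the bands $[b_{k+1},a_k]$ and $(b_1,\infty)$; afterwards the gap edges $a_k,b_k$ remain isolated singularities, so your claim that ``the only boundary point that is not an interior band point and not $\infty$ is the accumulation point $0$'' is false. Nothing you prove rules out, say, a simple pole of $h_f$ at $a_k$: the function $1/(z-a_k)$ is area-integrable near $a_k$ and has identical boundary values from the two sides of the band, so it is invisible both to the jump computation and to triangle integrals straddling the band interior. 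Removability at the endpoints needs a separate argument (e.g.\ straightening the endpoint via $w\mapsto a_k+w^2$ and exploiting the Smirnov hypothesis, or an estimate $\max_{|z-a_k|=r}|h_f|=o(1/r)$), and this is precisely where $f\in N_+(\Omega)$ must genuinely be used: a Smirnov-class function with $L^1$ boundary data can still blow up at a single boundary point (compare $(1-z)^{-1/2}$ on the disk), so neither term of $h_f=C-f$ is a priori bounded near $a_k$.

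Second, your treatment of $z=\infty$ does not work as stated. Since $\sE\supset[b_1,\infty)$ is unbounded, the contour $\pd\Omega$ passes through $\infty$, so $\infty$ is a \emph{boundary} point of the domain of analyticity of the Cauchy integral, and the claimed ``convergent expansion in powers of $1/z$'' does not exist in general: a Cauchy integral over an unbounded band typically acquires a logarithmic branch point at $\infty$ (e.g.\ $\int_1^\infty \l^{-2}(\l-z)^{-1}\,d\l = -z^{-1}-z^{-2}\log(1-z)$). Only the combination $C-f$ can be analytic there, and in the chart $w=1/z$ the point $\infty$ is the \emph{endpoint} $w=0$ of the band $[0,1/b_1]$, so it requires the same endpoint-removability argument that is missing above. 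Conversely, the difficulty you flag at the end --- uniform ``gluing'' of the continuations near the accumulation point $0$ --- is vacuous: analyticity is a local property, each local continuation agrees with $h_f$ on $\Omega$ and hence with the others by the identity theorem, and no assertion at $z=0$ itself is being made, since $h_f$ is allowed an essential singularity there.
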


We say that $\Omega$ is a domain with a \textit{minimal violation of DCT} if $h_f$ is a polynomial of $1/z$ for an arbitrary admissible function $f$.
Recall that DCT holds if $h_f=0$ for an arbitrary such function $f$.

For Widom domains, the Abel map is well defined and it is continuous. DCT is a criterion for uniqueness \cite{SY97, VYInv}. Thus, as soon as DCT is violated, we have a set $\Xi$ of \textit{singular} characters, which we define as follows,
\begin{equation}\label{23sept21}
\Xi=\{\a\in\pi_1(\Omega)^*: \ \a=\cA(D_1)=\cA(D_2),\quad D_1\not=D_2\}.
\end{equation}

Finding a description of the singular set $\Xi$ we consider as one of the most interesting and delicate questions in the modern development of Widom theory. In particular, it underlies our proposed solution of the Deift problem.

\begin{theorem}\label{l13sept21}
Let $\Omega$ be a Widom domain with a minimal violation of DCT. Assume that
\begin{equation}\label{33sept21}
\int_{\sE\cap[0,b_1]}\frac{d\l}{\l}<\infty
\end{equation}
but
\begin{equation}\label{43sept21}
\int_{\sE}\frac{d\l}{\l^2}=\infty.
\end{equation}
Then $\a\in\Xi$ if and only if there exists $D$ such that $\a=\cA(D)$ and
 \begin{equation}\label{53sept21}
\sum\log\frac{b_j}{\l_j}<\infty.
\end{equation}
\end{theorem}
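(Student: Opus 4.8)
The plan is to translate the combinatorial condition defining $\Xi$ in \eqref{23sept21} into the existence of a single nonconstant character-automorphic inner function, and then to decide that existence by a scalar test which, under \eqref{33sept21}--\eqref{43sept21}, reduces to \eqref{53sept21}. By the uniqueness theory of \cite{SY97,VYInv}, $\alpha=\cA(D_1)=\cA(D_2)$ with $D_1\neq D_2$ is equivalent to the existence of a nonconstant single-valued $B\in N_+(\Omega)$ with $|B|=1$ a.e.\ on $\sE$ whose zero divisor is $D_1$ and pole divisor is $D_2$; DCT is exactly the Smirnov-type maximum principle that would force every such $B$ to be constant, and the hypothesis of a \emph{minimal} violation guarantees, via Lemma~\ref{l116jun21}, that the obstruction to the existence of $B$ is carried by the single coefficient of $1/z$ in $h_f$. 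The problem thus becomes: for which $D$ does the witness built with zero divisor $D$ survive this one-dimensional defect and remain nonconstant?

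I would build the witness from complex Green's (Blaschke) factors $b(z,\cdot)$, taking $B$ with zeros at the points $\lambda_j$ of $D$ and with the band-endpoint normalization from \eqref{12sept21} recording the character, so that the pole divisor $D'$ and the single-valuedness condition $\cA(D')=\cA(D)=\alpha$ are fixed en route. Since $\Omega$ is of Widom type, \eqref{11sept21} gives $\sum_j G(\tilde c_j,-1)<\infty$, and as each $\lambda_j$ lies in the $j$-th gap we have $G(\lambda_j,-1)\le G(\tilde c_j,-1)$; hence the Blaschke condition holds and $B$ converges locally uniformly in $\Omega$. The content is therefore not convergence but \emph{non-degeneracy at the accumulation point} $0$: whether the limit is a genuine nonconstant inner function or collapses because its zeros near the origin are absorbed into a trivial factor. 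Using the boundary asymptotics of $G(\cdot,-1)$ and of harmonic measure near $0$, the logarithmic contribution of the $j$-th factor is comparable to $\log(b_j/\lambda_j)$ up to a background controlled by $\int_{\sE\cap[0,b_1]}d\lambda/\lambda$, so that $B$ is nonconstant exactly when \eqref{53sept21} holds.

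The two integral hypotheses calibrate this dichotomy. Condition \eqref{33sept21} makes the Cauchy transform well behaved at the origin and confines the violation to first order, which is what guarantees both that the defect is the single functional above and that the comparison with $\log(b_j/\lambda_j)$ carries a summable background. Condition \eqref{43sept21} guarantees that this first-order defect is genuinely present, so that DCT truly fails and the origin can support the behavior that makes non-uniqueness possible. Granting the comparison, both directions follow: for ($\Leftarrow$), a representative $D$ of $\alpha$ obeying \eqref{53sept21} yields a nonconstant $B$, whose pole divisor $D'\neq D$ satisfies $\cA(D')=\alpha$, so $\alpha\in\Xi$; for ($\Rightarrow$), a pair $D_1\neq D_2$ witnessing $\alpha\in\Xi$ produces a nonconstant $B$, whose non-degeneracy at $0$ forces \eqref{53sept21} for one of $D_1,D_2$, which is then the required representative.

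The step I expect to be the main obstacle is the comparison in the second paragraph: showing that the abstract one-dimensional defect, namely the $1/z$-coefficient of $h_f$ evaluated on the Green-function product, coincides up to a convergent correction with the explicit series $\sum_j\log(b_j/\lambda_j)$. This requires sharp estimates, uniform in $j$, for the complex Green's function and harmonic measure near the accumulation point $0$, precisely the regime where Dirichlet regularity at the origin is most delicate; here \eqref{33sept21} and \eqref{43sept21} must be used in tandem, the former to extract a summable background and the latter to ensure the divisor-dependent part dominates. Making this estimate robust enough to apply to every representing divisor, rather than one convenient choice, is the crux of the argument.
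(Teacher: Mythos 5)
Your proposal does not follow the paper's route, and in both directions the step you label as "the crux" is in fact missing in a way that I do not believe can be repaired along the lines you sketch. The central misconception is that the dichotomy is decided by convergence/non-degeneracy of a Blaschke product near the accumulation point $0$. In a Widom domain this is a non-issue: for \emph{every} divisor $D$, the product of complex Green's functions $\prod_j \Phi_{\l_j}$ converges (since $G(\l_j,-1)\le G(\tilde c_j,-1)$, as you note) and is automatically a nonconstant inner function -- it vanishes at its zeros. There is no "collapse" phenomenon, so no scalar test on this product can detect \eqref{53sept21}. Moreover, your witness construction for the $(\Leftarrow)$ direction is circular: to build $B$ with zero divisor $D$ and pole divisor $D'$ you must already have the second divisor $D'$, and nothing in your argument produces it. The paper's mechanism is entirely different: under \eqref{33sept21} together with the divisor condition, the function $-1/R_D$ has a finite limit at $0$ (equivalently, the associated measure has a point mass at $0$), and then the splitting of $-1/R_D$ into a reflectionless Weyl pair $m_\pm$ is non-unique -- there is a one-parameter family $(m_D)_+ - \mu$, $(m_D)_- + \mu$ of admissible splittings, whose zero divisors $D_1^\mu$ are distinct but all carry the same character (Lemma~\ref{theo53} and its Corollary). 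That is where the second divisor comes from, and it is why $\cD_b(\sE)$ (eigenvalue at $0$) is the right class.

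For the $(\Rightarrow)$ direction -- the hard part -- your proposal contains only the assertion that "non-degeneracy at $0$ forces \eqref{53sept21} for one of $D_1, D_2$," with no mechanism behind it, and the role you assign to \eqref{43sept21} is inverted: it does not "guarantee that DCT truly fails" (the failure of DCT is a standing hypothesis via minimal violation); rather, \eqref{43sept21} is the hypothesis that gets \emph{contradicted}. The paper's argument runs: if $\a\in\Xi\setminus\cA(\cD_b(\sE))$, take $D\neq\hat D$ representing $\a$, form the matrix $\fA_D$ with $\cL_{\hat D}\fA_D = \cL_D$; this is $J$-contractive and entire in $1/z$, and the minimal-violation hypothesis is used precisely to conclude it is a \emph{polynomial} in $1/z$, hence a finite Blaschke--Potapov product \eqref{1-19feb21}. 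Tracking the two-term asymptotics of the transformed Weyl functions at $0$ factor by factor (together with Lemma~\ref{lemma2mar1} and Corollary~\ref{c11oct21}) yields $-R(z)^{-1} = w + \sigma z + o(|z|)$ with $w\neq 0$, $\sigma\neq 0$, and a Herglotz-representation argument then forces $\int_\sE d\l/\l^2 < \infty$, contradicting \eqref{43sept21}. None of this multiplicative matrix-function machinery (Hardy spaces $\cH^2_D$, defect subspaces, de Branges/Potapov theory) appears in, or is replaceable by, the scalar Green's-function estimates you propose; your own closing paragraph correctly identifies this comparison as unproven, and it is exactly the point where the proposal fails.
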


In what follows,
\begin{equation}\label{129sept21}
\cD_b(\sE)=\left\{D\in \cD(\sE):\ \sum\log\frac{b_j}{\l_j}<\infty\right\}.
\end{equation}

Note that the divisor $D_0=\{b_j\}_{j\ge 1}$ is mapped to the origin of $\pi_{1}(\Omega)^*$, see \eqref{12sept21}. Thus $\a\in\Xi$ if the character is close to the origin in a certain (precise) sense,
\eqref{53sept21}, and this is definitely an analytic condition on the character.

We say that a direction $\eta=\{\eta_k\}_{k\ge 1}$ is \emph{generic} if its coordinates are rationally independent. Otherwise
\begin{equation}\label{63sept21}
\cT_\eta=\text{clos}_{x\in\bbR}\{\a(x)=\{\a_k(x)\}:\ \a_k(x)=\eta_k x\mod 1\}
\end{equation}
is a proper subtorus of $(\bbR/\bbZ)^\infty$, or in our identification of $\pi_1(\Omega)^*$.

By Conjecture \ref{th13sept21} (in fact we will use Proposition \ref{prop13sept21}), the translation flow maps via $\mathcal{A} \circ \mathcal{B}$ to the translation flow in the direction $-\eta$, that is, with
$$
S_x : \cR(\E) \to \cR(\E), \; V(\cdot) \mapsto V(\cdot + x),
$$
we have $(\mathcal{A} \circ \mathcal{B})(S_x V) = \alpha_V - \eta x$, where $\alpha_V := (\mathcal{A} \circ \mathcal{B})(V)$ is the character corresponding to $V$.
Let us consider the orbit closure
$$
 \text{clos}\{ \alpha_V - \eta x : x \in \mathbb{R} \}=\{\b=\a_V+\a:\ \a\in\cT_\eta\}=\cT_\eta(\a_V).
$$

\begin{lemma}[Main Lemma]\label{mainl}
Suppose that $V \in \cR(\E)$ is such that
$$
\cT_\eta(\a_V) \cap \Xi = \emptyset, \quad \a_V=(\cA\circ\cB)(V).
$$
Then $V$ is almost periodic.
\end{lemma}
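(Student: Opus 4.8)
The plan is to deduce almost periodicity of $V$ from the structure theorem for Bohr almost periodic functions: a bounded continuous $V$ is almost periodic as soon as it factors as $V(x)=\tilde g(-\eta x)$ through a continuous map $\tilde g$ on the compact group $\cT_\eta$, because $x\mapsto -\eta x$ is a Kronecker flow with dense range in $\cT_\eta$. All of the \emph{dynamical} content is already linear, via the space-shift linearization $(\cA\circ\cB)(S_xV)=\a_V-\eta x$ (cf. Proposition~\ref{prop13sept21}); the only missing ingredient is continuity of an inverse of $\cA\circ\cB$ along the orbit closure, and this is precisely where the hypothesis $\cT_\eta(\a_V)\cap\Xi=\emptyset$ is used.

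First I would construct a continuous section of $\cA\circ\cB$ over $T:=\cT_\eta(\a_V)$. Put $D:=\cA^{-1}(T)\subset\cD(\sE)$. Since $\cD(\sE)=\prod_j I_j$ is compact and $\cA$ is continuous (Widom domain), $\cA(\cD(\sE))$ is compact, hence closed; as it contains the dense subset $\{\a_V-\eta x:x\in\bbR\}$ of $T$, it contains $T$, so $\cA|_D\colon D\to T$ is onto and $D$ is compact. By the definition \eqref{23sept21} of $\Xi$, every $\a\in T$ (recall $T\cap\Xi=\emptyset$) has a \emph{unique} divisor preimage, so $\cA|_D$ is injective. A continuous bijection from a compact space onto a Hausdorff space is a homeomorphism, whence $(\cA|_D)^{-1}\colon T\to D$ is continuous, and composing with the homeomorphism $\cB^{-1}$ yields a continuous map $\Psi:=\cB^{-1}\circ(\cA|_D)^{-1}\colon T\to\cR(\sE)$.

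Next I would read off the potential. By the intertwining relation, $\cA(\cB(S_xV))=\a_V-\eta x\in T$, and the injectivity just established forces $\cB(S_xV)=(\cA|_D)^{-1}(\a_V-\eta x)$, i.e. $S_xV=\Psi(\a_V-\eta x)$. Define $g\colon T\to\bbR$ by $g(\a)=[\Psi(\a)](0)$. Since the topology on $\cR(\sE)$ making $\cB$ a homeomorphism onto the compact $\cD(\sE)$ is that of local uniform convergence, the point evaluation $W\mapsto W(0)$ is continuous, so $g$ is continuous on the compact coset $T=\a_V+\cT_\eta$. Writing $\tilde g(\b)=g(\a_V+\b)$ for $\b\in\cT_\eta$, we then get $V(x)=[S_xV](0)=g(\a_V-\eta x)=\tilde g(-\eta x)$, a pullback of a continuous function on a compact group under a one-parameter flow with dense range; the structure theorem gives that $V$ is Bohr almost periodic.

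The main obstacle is the continuity step together with the matching of topologies: one must use the \emph{local uniform} topology on $\cR(\sE)$ for which $\cB$ is a homeomorphism onto the compact divisor torus — this choice is forced, since when DCT fails the uniform hull of a generic reflectionless potential is noncompact — and then verify that local uniform continuity of $\Psi$, combined with continuity of point evaluation, indeed upgrades to genuine (uniform) almost periodicity of $x\mapsto V(x)$. A secondary delicate point is that injectivity of $\cA$ must hold over the \emph{entire} orbit closure $T$, not merely along the orbit $\{\a_V-\eta x\}$: it is exactly the avoidance $\cT_\eta(\a_V)\cap\Xi=\emptyset$ of the closed set $T$ that makes $(\cA|_D)^{-1}$ single-valued and continuous on all of $T$, and hence makes $\Psi$ well defined there.
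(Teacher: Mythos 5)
Your proposal follows essentially the same route as the paper: the hypothesis $\cT_\eta(\a_V)\cap\Xi=\emptyset$ gives injectivity of $\cA$ over the orbit closure, compactness of $\cD(\sE)$ and of the closure upgrade the continuous bijection to a homeomorphism, and composing the inverse with an evaluation map produces a continuous sampling function on the compact group, whence almost periodicity. Your justification that $\cA|_D$ is onto $T$ (closedness of the range of $\cA$ plus density of the orbit) is in fact more detailed than the paper's, which simply asserts bijectivity.

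There is, however, one concrete omission. Proposition \ref{prop13sept21} gives the linearization $(\cA\circ\cB)(S_{x_0}V)=\a_V-\eta x_0$ \emph{only under the hypothesis} $\cB(V)\notin\cD_b(\sE)$, and you invoke it without checking this; since both your surjectivity argument and the identity $S_xV=\Psi(\a_V-\eta x)$ rest on that proposition, the check must come first. The paper's opening sentence supplies it: $\a_V\in\cT_\eta(\a_V)$ is disjoint from $\Xi$, and $\cA(\cD_b(\sE))\subseteq\Xi$ (the ``easy part'' of Theorem \ref{l13sept21}), so $\cB(V)\in\cD_b(\sE)$ would force $\a_V\in\Xi$, a contradiction. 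The second difference is in the final step: instead of your appeal to the local-uniform topology on $\cR(\sE)$ and continuity of point evaluation (which leans on the unproved homeomorphism lemma together with an identification of its topology that the paper never pins down), the paper uses the trace formula $V_D(0)=\tfrac12\cQ_1(D)$ with $\cQ_1(D)=\sum_j\bigl(\tfrac{a_j+b_j}{2}-\l_j\bigr)$, which is manifestly continuous on the divisor torus; the sampling function is then $\tfrac12\cQ_1\circ\cA^{-1}_{\cT_\eta(\a_V)}$, and $\cB^{-1}$ is never needed for the continuity argument. Your version is workable if one grants the topological claim, but the trace-formula shortcut is the cleaner, self-contained way to finish.
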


The proof of this lemma is simple, but ideologically this is the main point in our construction. It allows us to reduce the Deift problem to the following one.

\begin{problem}\label{prob13sept21}
Construct a Widom domain $\Omega=\bbC\setminus\sE$ such that
\begin{itemize}
\item[i)] DCT fails, i.e., $\Xi\not=\emptyset$,
\item[ii)] $\eta^{(1)}$ is in a generic position,
\item[iii)] there exists $\b\in\pi_1(\Omega)^*$ such that
$$
\cT_\eta(\b)\cap\Xi=\emptyset,\quad \cT_\eta(\b):=\{\b+\a:\ \a\in\cT_\eta\},
$$
\end{itemize}
where $\eta$ and $\eta^{(1)}$ are defined in Conjecture \ref{th13sept21}
(an explicit definition is given in the next section in terms of the so-called comb domains).
\end{problem}

\begin{remark}
iii) is naturally the main requirement here. ii) should be a consequence of a certain stability of iii) with respect  to small perturbations of the spectral set, see Conjecture~\ref{con16sept}.  If DCT holds, then $\Xi=\emptyset$ and iii) trivially holds for all $\b$. But if $\Xi\not=\emptyset$, then $\eta$ has to be degenerate, since in the generic case $\cT_\eta=\pi_1(\Omega)^*$ and the intersection cannot be empty.
\end{remark}

Most likely $\Xi$ is always dense in $\pi_1(\Omega)^*$. At least this is definitely true under the assumptions of Theorem~\ref{l13sept21}, since by \eqref{53sept21} any finite number of elements of $D$ can be modified in an arbitrary way.

Finally, in this section we would like to provide a simplest possible \textit{model example}.
The exact statements and conjectures are formulated in the next section.

Consider the set
\begin{equation}\label{16oct21}
\Xi^0=\{\a: \lim_{k\to \infty}\a_k=0\}.
\end{equation}
Note that the set is given in terms of a certain analytic condition and it is evidently dense in $\pi_1(\Omega)^*$ as well as $\Xi$. Also, if we consider the set of divisors
$$
\cD^0_b(\sE)=\{D:\ \exists N_D\  \text{such that }\ \l_j=b_j\ \text{for all}\ j\ge N_D\},
$$
then $\cA(\cD^0_b(\sE))\subset \Xi^0$.

\begin{proposition}\label{pr16sept21}
Let $\eta_k=1/3^{k-1}$, $k\ge 1$. Then $\cT_\eta\simeq \bbZ_3\times(\bbR/\bbZ)$.
There exists $\b\in\pi_1(\Omega)^*$ such that $\cT_{\eta}(\b)\cap\Xi^0=\emptyset$.
\end{proposition}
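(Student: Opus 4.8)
The plan is to first determine $\cT_\eta$ explicitly as a subgroup of $(\bbR/\bbZ)^\infty$, and then produce $\b$ by a soft dynamical argument exploiting the degeneracy of the direction $\eta$. First I would identify $\cT_\eta$. Because $\eta_k=1/3^{k-1}$, the orbit point has coordinates $\a_k(x)=x/3^{k-1}\mod 1$, which obey the relation $\a_k(x)=3\a_{k+1}(x)\mod 1$ for every $k$. Since this relation cuts out a closed subgroup of $(\bbR/\bbZ)^\infty$, I immediately get $\cT_\eta\subseteq S$, where
$$
S:=\{\a\in(\bbR/\bbZ)^\infty:\ \a_k=3\a_{k+1}\mod 1\ \ \forall k\ge 1\}.
$$
For the reverse inclusion, given $\a\in S$ and $n\ge1$ I would let $\tilde\a_n\in[0,1)$ represent $\a_n$ and set $x=3^{n-1}\tilde\a_n$; then for $1\le k\le n$,
$$
\a_k(x)=3^{n-k}\tilde\a_n\equiv 3^{n-k}\a_n\equiv\a_k\pmod 1,
$$
the final congruence coming from iterating $\a_k=3\a_{k+1}$. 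Thus the orbit point at $x$ matches $\a$ in its first $n$ coordinates, so $\a\in\cT_\eta$ and hence $\cT_\eta=S$. This $S$ is exactly the inverse limit $\varprojlim(\bbR/\bbZ\xleftarrow{\times3}\bbR/\bbZ)$, the $3$-adic solenoid; projection to the first coordinate exhibits it as a $\bbZ_3$-bundle over $\bbR/\bbZ$, which is the content of $\cT_\eta\simeq\bbZ_3\times(\bbR/\bbZ)$ (the notation being understood in this fibered sense, as the solenoid is connected and hence not literally a direct product).

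The key structural fact I would extract next is that coordinate limits along the solenoid are severely constrained: if $\a\in\cT_\eta$ and $L=\lim_{k\to\infty}\a_k$ exists in $\bbR/\bbZ$, then letting $k\to\infty$ in $\a_k=3\a_{k+1}\mod1$ and using continuity of the map $x\mapsto 3x$ on $\bbR/\bbZ$ forces $L=3L$, i.e. $2L=0$, so $L\in\{0,1/2\}$. In other words the only achievable full limits of coordinates are the two fixed points of multiplication by $3$.

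With this in hand the choice of $\b$ is immediate: I would take the constant character $\b=(c,c,c,\dots)$ with $c\in\bbR/\bbZ\setminus\{0,1/2\}$, say $c=1/3$. For any $\a\in\cT_\eta$ the $k$-th coordinate of $\b+\a$ is $c+\a_k$. If $\a_k$ converges, its limit is $0$ or $1/2$ by the previous paragraph, so $\lim_k(c+\a_k)\in\{c,c+1/2\}$, which is nonzero since $c\notin\{0,1/2\}$; if $\a_k$ does not converge, then neither does $c+\a_k$ (a constant shift cannot create convergence), so again $\lim_k(c+\a_k)\ne0$. In both cases $\b+\a\notin\Xi^0$, whence $\cT_\eta(\b)\cap\Xi^0=\emptyset$.

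The only genuinely substantive points are the reverse inclusion $S\subseteq\cT_\eta$ (the exact-matching computation) and the fixed-point confinement of coordinate limits; once these are in place the existence of $\b$ is purely formal, which is appropriate for a model example. I would emphasize that the shift $\b$ is essential rather than cosmetic: $\cT_\eta$ itself meets $\Xi^0$ (the zero character lies in both), and it is precisely the degeneracy of $\eta$—which collapses $\cT_\eta$ from the full group down to the one-dimensional solenoid—that leaves enough room to translate the entire orbit off of the dense set $\Xi^0$. This is exactly the mechanism that condition iii) of Problem~\ref{prob13sept21} is meant to realize.
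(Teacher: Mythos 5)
Your proposal is correct, and for the substantive half of the statement it takes a genuinely different route from the paper. The identification of $\cT_\eta$ is essentially the same in both: you and the paper derive the closed relation $3\a_{k+1}=\a_k \bmod 1$ and parametrize by $\a_1$ together with a triadic integer (you additionally prove the reverse inclusion $S\subseteq\cT_\eta$ explicitly, and your caveat that the solenoid is connected, so ``$\simeq\bbZ_3\times(\bbR/\bbZ)$'' can only mean the fibered parametrization and not a topological product, is a correct refinement of what the paper actually establishes). The difference is in producing $\b$. The paper reduces to divisor-like elements with $\a_1=0$, expands $\a_{k+1}$ and $\b_{k+1}$ in ternary digits, and shows that $\b_{k+1}+\a_{k+1}\to 0$ forces the digit-matching relations $\b_1^{(k+1)}=-\e_k$ and $\b_2^{(k+1)}=-\e_{k-1}=\b_1^{(k)}$ for large $k$; it then chooses $\b$ whose digits satisfy $\b_2^{(k+1)}\neq\b_1^{(k)}$ for all $k$, yielding a contradiction. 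Your argument replaces this bookkeeping with the fixed-point observation: any convergent coordinate limit along $\cT_\eta$ must satisfy $L=3L$, hence $L\in\{0,1/2\}$, so the constant character $\b=(1/3,1/3,\dots)$ translates the whole orbit closure off $\Xi^0$. Your route is softer and arguably more robust: it needs only the easy inclusion $\cT_\eta\subseteq S$, and it sidesteps the carry subtleties in ternary addition that the paper's sketch glosses over (cancellation modulo $1$ can also occur through carried sums, so the paper's digit conditions require extra care to state exactly). What the paper's computation buys in exchange is an explicit description of which adic digit patterns of $\b$ are obstructed, which is closer in spirit to the kind of character-level analysis of $\Xi$ the authors are aiming at (cf.\ the series \eqref{17sept21}); but as a proof of Proposition~\ref{pr16sept21}, and equally as input to Corollary~\ref{c16oct}, your version works and is cleaner.
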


Our expectation with respect to the real shape of $\cT_\eta$ is stated in Lemma~\ref{l16sept21}. For this $\cT_\eta$, we still have $\cT_{\eta}(\b)\cap\Xi^0=\emptyset$ for some $\b\in\pi_1(\Omega)^*$ as an easy consequence of Proposition \ref{pr16sept21}. However, we do not expect that $\Xi\subset\Xi^0$.

\section{Introduction: Results and Conjectures}

\subsection{Comb Domains. The Stability Conjecture}

Conformal mappings on the so-called comb domains (combs) are a convenient tool in the inverse spectral theory of periodic and almost periodic operators.

With a system of parameters $\eta=\{\eta_k\}_{k\ge 1}$,
\begin{equation}\label{16sept21}
0 \leftarrow \dots <\eta_{k+1}<\eta_k<\dots<\eta_1<\infty
\end{equation}
and $\{h_k\}_{k\ge 1}$,
\begin{equation}\label{26sept21}
\sum h_k<\infty
\end{equation}
we associate the domain
\begin{equation}\label{36sept21}
\Pi=\bbC_{++}\setminus\cup_{k\ge 1}\{w=\eta_k+ih,\quad h\le h_k\},
\end{equation}
where $\bbC_{++}$ is the quarter plane
$$
\bbC_{++}=\{w:\ \Im w>0,\ \Re w>0\}.
$$

Let $\Theta:\ \bbC_+\to\Pi$ be the conformal mapping normalized by
$$
\Theta(z)\sim i\sqrt\l,\quad z=-\l,\ \l\to\infty,\quad \Theta(0)=0.
$$
Let $\sE=\Theta^{-1}(\bbR_+)$. Then $\Omega=\bbC\setminus\sE$ is a Widom domain such that $\sE$ obeys \eqref{setE2}. The conditions \eqref{16sept21} and \eqref{26sept21} provide a parametric description of such domains. The function $\Im\Theta(z)$ has a single valued extension in $\Omega$, this is the so called \emph{Martin function} of this domain with respect to $\infty$.

For this $\sE$, the parameters $\{\eta_k\}$ are exactly the ones that were introduced in Conjecture~\ref{th13sept21}. The parameters $\eta^{(1)}=\{\eta^{(1)}_k\}$ deal with the following comb,
\begin{equation}\label{46sept21}
\Pi^{(1)}=\bbC^3_{++}\setminus\cup_{k\ge 1}\{w=\eta^{(1)}_k+ih,\quad h\le h^{(1)}_k\},
\end{equation}
a subdomain of the 3/4-plane $\bbC^3_{++}=\{w^3:\ w\in\bbC_{++}\}$. Together with the accompanied $\{h_k^{(1)}\}$ under the conditions
\begin{equation}\label{56sept21}
0 \leftarrow \dots <\eta^{(1)}_{k+1}<\eta^{(1)}_k<\dots<\eta^{(1)}_1<\infty
\end{equation}
and
\begin{equation}\label{66sept21}
\sum h^{(1)}_k < \infty,
\end{equation}
they provide a parametric description of the Widom domain $\Omega=\bbC\setminus\sE$ obeying \eqref{setE2} via the conformal mapping
 $\Theta^{(1)}:\ \bbC_+\to\Pi^{(1)}$  normalized by
$$
\Theta^{(1)}(z)\sim -i\sqrt\l^3,\quad z=-\l,\ \l\to\infty,\quad \Theta^{(1)}(0)=0.
$$
with $\sE=\left(\Theta^{(1)}\right)^{-1}(\bbR_+)$.

\begin{conjecture}\label{con16sept}
With a fixed collection $\eta$, one can bring $\eta^{(1)}$ in a generic position with an arbitrary small variation of $\{h_k\}_{k\ge 1}$.
\end{conjecture}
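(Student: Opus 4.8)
The plan is to read \eqref{con16sept} as a genericity statement in the space of admissible slit heights and to prove it by a Baire category argument, the substance being a non-degeneracy (invertibility) property of the map sending the heights of the first comb to the positions of the second comb. Fix the sequence $\eta=\{\eta_k\}$ subject to \eqref{16sept21}, and let $h=\{h_k\}$ range over the open positive cone $P=\{h:\ h_k>0,\ \sum_k h_k<\infty\}\subset\ell^1$. Each such $h$ produces via \eqref{36sept21} a comb $\Pi$, the conformal map $\Theta$, and hence a spectral set $\sE=\sE(h)$ and a Widom domain $\Omega=\Omega(h)$; the second comb data $\eta^{(1)}=\eta^{(1)}(h)$ are then determined by $\Omega$ through $\Theta^{(1)}$ and its normalization, with $\eta^{(1)}_k=\Re\Theta^{(1)}|_{(a_k,b_k)}$. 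Because conformal maps onto vertical-slit domains depend real-analytically on the slit parameters (Loewner/Hadamard variation), each $h\mapsto\eta^{(1)}_k(h)$ is real-analytic when restricted to a finite-dimensional slice $h_1,\dots,h_N$ with the remaining coordinates frozen. Recall that $\eta^{(1)}$ is in generic position precisely when its coordinates are rationally independent.

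\emph{Reduction to Baire category.} The non-generic heights form the countable union $\bigcup_m Z_m$, where $m=(m_k)$ runs over nonzero finitely-supported integer vectors and
\[
Z_m=\Big\{h\in P:\ \Phi_m(h):=\sum_k m_k\,\eta^{(1)}_k(h)=0\Big\}.
\]
Each $Z_m$ is closed by continuity, so it suffices to show each $Z_m$ has empty interior; then $\bigcup_m Z_m$ is meager, its complement is dense, and any $h$ from the complement inside a prescribed neighborhood of the original height sequence furnishes the required small variation bringing $\eta^{(1)}$ into generic position. If $Z_m$ contained an open box then $\Phi_m$ would vanish on some finite-dimensional slice; by analyticity on that slice this forces $\Phi_m\equiv 0$ there, which is excluded as soon as the relevant finite Jacobian is nonsingular.

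\emph{The crux: non-degeneracy.} Everything therefore reduces to showing that for every $N$ the Jacobian
\[
J_N=\frac{\partial(\eta^{(1)}_1,\dots,\eta^{(1)}_N)}{\partial(h_1,\dots,h_N)}
\]
is invertible at some (hence, by analyticity, at a generic) $h$. I would obtain this from a variational analysis: perturbing the height $h_j$ of the $j$-th slit of $\Pi$ primarily changes the $j$-th gap $(a_j,b_j)$ of $\sE$, and the resulting first-order change of $\eta^{(1)}_k$ is given by a Hadamard-type formula built from the Green function $G_\Omega(\cdot,\cdot)$ and the boundary behaviour of $\Theta^{(1)}$. The mechanism I expect is diagonal dominance: the diagonal response $\partial\eta^{(1)}_k/\partial h_k$ is nonzero and dominates the off-diagonal responses, which decay because the harmonic measure of gap $k$ viewed from gap $j$ is small for $j\neq k$ (a consequence of the Widom condition \eqref{11sept21} together with the accumulation \eqref{16sept21} and summability \eqref{26sept21}). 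Diagonal dominance yields invertibility of $J_N$, and with it the functional independence of the $\eta^{(1)}_k$ needed above.

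\emph{Main obstacle.} The hard part is exactly this diagonal dominance. One must show that the $\l^{3/2}$-scaled frequencies $\eta^{(1)}_k$ genuinely move under height perturbations \emph{even though} the $\l^{1/2}$-scaled frequencies $\eta_k$ are held fixed, and that the cross-terms are quantitatively small enough to be absorbed. This is where the distinct growth exponents of $\Theta$ and $\Theta^{(1)}$ at infinity (compare the normalizations following \eqref{46sept21}) are essential, as they make the two frequency families functionally independent; ruling out an accidental integer relation among the $\eta^{(1)}_k$ that survives \emph{all} admissible variations of $h$ is the delicate analytic step that currently keeps this statement at the level of a conjecture.
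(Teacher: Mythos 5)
Your proposal is not a proof, and you say so yourself: the entire mathematical content is deferred to the unproven ``diagonal dominance'' of the Jacobian $J_N=\partial(\eta^{(1)}_1,\dots,\eta^{(1)}_N)/\partial(h_1,\dots,h_N)$. For the record, the statement is a \emph{conjecture} in the paper and is not proved there either; what the paper does prove is the finite-gap analogue of exactly the non-degeneracy you need (Theorem \ref{thmain4}, with the subsequent remark that one may take $\dot\eta_k=0$ and arbitrary $\dot\eta^{(t)}_k$). The useful comparison is therefore between your proposed mechanism and the paper's actual one, and they are genuinely different. The paper parametrizes finite-gap domains by the endpoints $\{a_k,b_k\}_{k=1}^n$ rather than by the heights, computes the Jacobian of $(\eta,\eta^{(t)})$ with respect to $(a,b)$ in closed form via the Riemann bilinear relation \eqref{114sept21} (Proposition \ref{pr120sept21}; the entries are $O^{(k)}(a_j)P(a_j)/T'(a_j)$ and the like), and proves invertibility by an exact algebraic argument: a null vector forces the polynomial identity \eqref{1015sept21}, whose interpolation form \eqref{1115sept21} yields the identity \eqref{323sept21}, namely $\sum_j T_1(c_j)/\bigl(P'(c_j)Q(c_j)\bigr)=1$, while the interlacing of the zeros of $P$ and $Q$ (Lemma \ref{l120sept21}), a partial-fraction expansion of $T_1/(PQ_\vk)$, and the choice $\vk=Q(a_n)/P(a_n)$ show this sum is strictly greater than $1$, see \eqref{423sept21} --- a contradiction. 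Nothing resembling diagonal dominance enters, and nothing in the paper supports it: the response of $\eta^{(1)}_k$ to a perturbation of the $j$-th slit is a harmonic-measure quantity whose off-diagonal decay you never quantify, and with only the Widom condition \eqref{11sept21} available, dominance of the diagonal is precisely the kind of claim that needs proof rather than assertion.

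Two further points. First, your Baire scaffolding has its own technical gap: to exclude $\Phi_m\equiv 0$ on a given finite-dimensional slice you need $\det J_N\neq 0$ at some point of \emph{that same} slice (same frozen tail), so ``invertible at some $h$'' must be established for every tail, which requires joint analyticity in all variables or a uniform statement --- not automatic in your setup. Second, a more realistic completion of your program would replace diagonal dominance by the paper's finite-gap theorem: since both $(\eta,h)$ and $(a,b)$ serve as local coordinates on the space of $n$-gap domains of the form \eqref{setE1}--\eqref{setE2}, Theorem \ref{thmain4} combined with the chain rule gives invertibility of $\partial\eta^{(t)}/\partial h$ along the fixed-$\eta$ submanifold, which is your $J_N$ up to the explicit correction \eqref{122sept21}. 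What then remains --- and this is the reason the statement is still a conjecture, for the authors as much as for you --- is the passage from finite-gap truncations to the genuinely infinite-gap setting: controlling the tails $\sum_{k>N}h_k$ so that the good sets of heights produced at each finite level survive in the limit. Neither your argument nor the paper's supplies that step.
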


In Sect.~\ref{sect4} we show that in the case of a finite number of gaps, an arbitrary variation of the frequencies $\{\eta,\eta^{(1)}\}$ can be uniquely achieved by a suitable variation of the gap end-points $\{a_k,b_k\}_{k\ge 1}$, see Theorem \ref{thmain4}.

\subsection{The Simplest Violation of DCT. $\cT_\eta$ as a Profinite Completion of $\bbZ$.}\label{sbs32}

The origin, as the unique accumulation point of the spectral intervals, plays a special role in our construction. We define one more comb related to the Martin function in the domain with respect to the origin.

It is convenient to set $\l=1/\mu^2$. We define $\Delta(\mu)$ as a conformal mapping of the upper half plane $\bbC_+$ onto the symmetric comb
$$
\Pi_0=\bbC_+\setminus\cup_{k\ge 1}\{w=\pm n_k+i \u:\ \u\le\u_k\}.
$$
The condition
$$
0 <n_1<\dots<n_k<n_{k+1}<\dots\to\infty
$$
deals with \eqref{setE2}, however the sequence $\{\u_k\}_{k\ge1}$ is possibly even unbounded in the case of interest to us, contrary to \eqref{26sept21}. We denote by $(\mu_k^-,\mu_k^+)$ the preimage of the slit $\{w= n_k+i\u:\ \u\le\u_k\}$. Respectively,
$$
(-\mu_k^+,-\mu_k^-)=\Delta^{-1}(-n_k+i\u)\quad\text{for}\ \u\le\u_k,
$$
and $a_k=1/(\mu_k^+)^2$, $b_k=1/(\mu_k^-)^2$.

The function
\begin{equation}\label{76sept21}
\Theta_0(z)=\Delta(\mu), \quad z=1/\mu^2, \ \mu\in\bbC_{-+}=\{\Re\mu<0,\ \Im\mu>0\},
\end{equation}
maps the upper half plane onto the left part of the comb
$$
\Pi^-_0=\{w\in\Pi_0:\ \Re w<0\}.
$$
In this case $\Im\Theta_0(z)$ is the Martin function w.r.t. the origin in $\Omega=\bbC\setminus\sE$ with
$\sE=\bbR_+\setminus\cup_{k\ge 1}(a_k,b_k)$.

\begin{proposition}\label{pr17sept21}
Let
\begin{equation}\label{130jun21}
\cos\pi \Delta_\rho(\mu)=\cos\pi\mu-\rho\mu \sin\pi \mu,\quad \rho>0.
\end{equation}
$\Delta_\rho(\mu)$ provides a conformal mapping of $\bbC_+$ on the domain $\Pi_0$, whose frequencies are given explicitly by
$n_k=k$, $k\not=0$.
Further, let
$$
\cE=\cE_\rho=\left\{z=\frac 1{\mu^2}:\ |\cos\pi\Delta_\rho(\mu)|\le 1\right\}.
$$
Then $\Omega_\cE=\bbC\setminus \cE$ is of Widom type with a minimal DCT violation and the set $\cE$
obeys
 \eqref{33sept21},
\eqref{43sept21}.
\end{proposition}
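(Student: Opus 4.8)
The plan is to derive every assertion from the single real entire function
\[
\Phi(\mu):=\cos\pi\mu-\rho\mu\sin\pi\mu,
\]
the right-hand side of \eqref{130jun21}, treating it as the \emph{discriminant} of the comb and reading off the comb map, the spectral asymptotics, the two integral conditions, Widom type, and the minimal violation of DCT in turn.

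\textbf{Step 1 (the conformal map and the frequencies).} Here $\Phi$ is real on $\bbR$, even, of exponential type $\pi$, with $\Phi(n)=(-1)^n$ at every integer and $|\Phi(\mu)|\sim\rho|\mu|\to\infty$ between consecutive integers, so that $\{|\Phi|\le1\}$ is a union of bands shrinking onto the integers. The structural lemma I would prove first is that every critical point of $\Phi$ is real and lies in a closed gap $\{|\Phi|\ge1\}$: writing $\Phi'(\mu)=0$ as $\tan\pi\mu=-\rho\pi\mu/(\pi+\rho)$, an argument-principle count gives exactly one real root in each $(n,n+1)$ and none off $\bbR$. Granting this, the discriminant/comb correspondence produces a conformal bijection $\Delta_\rho:\bbC_+\to\Pi_0$ with $\cos\pi\Delta_\rho=\Phi$: on each band $\Phi$ runs monotonically between $+1$ and $-1$, so $\Re\Delta_\rho$ increases by exactly $1$, while on each gap $\Delta_\rho$ ascends a vertical slit whose real part is pinned by $\Phi(n)=(-1)^n$. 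Counting bands outward from the central one at $\mu=0$ places the $k$-th slit over $\Re w=k$, i.e. $n_k=k$, with heights $\u_k=\tfrac1\pi\operatorname{arccosh}\big(\max_{\mathrm{gap}_k}|\Phi|\big)\sim\tfrac1\pi\log(\rho k)\to\infty$, matching the unbounded-$\u_k$ regime announced before the proposition.

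\textbf{Step 2 (band/gap asymptotics and the two integrals).} Localizing $\Phi$ near $\mu=k$ gives the band $[k,k+\delta_k]$ with $\delta_k\sim 2/(\pi\rho k)$ and $\mu_k^+=k$, so that under $z=1/\mu^2$ one has $a_k\sim k^{-2}$, each band of $z$-length $\sim k^{-4}$ and each gap of $z$-length $\sim k^{-3}$. Then
\[
\int_{\cE\cap[0,b_1]}\frac{d\l}{\l}\sim\sum_k k^{2}\cdot k^{-4}=\sum_k k^{-2}<\infty,\qquad
\int_{\cE}\frac{d\l}{\l^{2}}\sim\sum_k k^{4}\cdot k^{-4}=\sum_k 1=\infty,
\]
which are exactly \eqref{33sept21} and \eqref{43sept21}.

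\textbf{Step 3 (Widom type; the main obstacle).} The Widom sum \eqref{11sept21} is a property of $\Omega_\cE$ independent of the chosen pole, so it is equivalent to the finiteness of $\sum_k h_k$ for the $\infty$-comb $\Pi$ of \eqref{36sept21}. The difficulty is that the origin-comb carries $\u_k\to\infty$, so Widom type cannot be read off from $\Pi_0$; the finite-pole Green function must be estimated directly. I would argue by a local comparison near each gap: gap $k$ has $z$-width $\sim k^{-3}$, lies at distance $\sim k^{-2}$ from the accumulation point $0$ and $\sim 1$ from $-1$, and a half-plane/single-slit model (together with the $\sqrt{\cdot}$ behaviour inherited from $z=1/\mu^2$) gives
\[
G(\tilde c_k,-1)\lesssim \frac{1}{k^{2}},\qquad\text{hence}\qquad \sum_k G(\tilde c_k,-1)<\infty,
\]
so $\Omega_\cE$ is of Widom type. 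This bound is exactly borderline with \eqref{33sept21}, and controlling the cumulative effect of all the remaining thin bands on the relevant harmonic measure is the step I expect to be the hardest.

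\textbf{Step 4 (minimal violation of DCT).} By Lemma \ref{l116jun21}, for admissible $f$ the defect $h_f$ is entire in $1/z$; minimality is the claim that it is a polynomial in $1/z$, and here of degree $1$. Since the only singular boundary point is the accumulation point $0$, I would combine \eqref{33sept21} with the Smirnov/reflectionless structure of $f$ (the crude kernel bound alone is too weak) to prove $|h_f(z)|\lesssim 1/|z|$ as $z\to0$; this forces all Laurent coefficients of order $\ge2$ to vanish, and the decay of $h_f$ at $\infty$ removes the constant term, leaving $h_f(z)=c_f/z$. The divergence \eqref{43sept21} guarantees $c_f\neq0$ for some admissible $f$, so DCT genuinely fails, while \eqref{33sept21} caps the order at one, so the violation is minimal; the single linear term $\rho\mu$ in $\Phi$ is precisely the feature producing this first-order defect. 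This realizes the hypotheses of Theorem \ref{l13sept21} for the explicit family $\cE_\rho$.
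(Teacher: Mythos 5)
Your Steps 1 and 2 are sound: Step 1 replaces the paper's route (which obtains $\cos\pi\Delta_\rho$ as $\tfrac12\tr\fA_\rho(\mu)\fB(\mu)$ for an explicit degree-one Blaschke--Potapov factor and reads the comb structure off the theory of entire $J$-contractive matrix functions) by a direct Marchenko--Ostrovskii-type discriminant analysis of $\cos\pi\mu-\rho\mu\sin\pi\mu$, and Step 2 is the same computation as the paper's Lemma \ref{l7221} together with the two displayed estimates for $\log(a_k/b_{k+1})$ and $1/b_{k+1}-1/a_k$. The genuine gap is in Step 3, exactly at the point you yourself flag as hard: a half-plane/single-slit model cannot give the upper bound $G(\tilde c_k,-1)\lesssim k^{-2}$. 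Since $\cE\subset\bbR\setminus(a_k,b_k)$, the one-slit comparison domain $\bbC\setminus\bigl(\bbR\setminus(a_k,b_k)\bigr)$ is \emph{contained} in $\Omega_\cE$, so monotonicity of Green functions gives $G_{\mathrm{model}}\le G_{\Omega_\cE}$; the model therefore produces a \emph{lower} bound of order $k^{-2}$, not an upper bound. The upper bound needed for \eqref{11sept21} must capture the collective screening effect of all the thin bands, and no one-gap model can do that. This global input is precisely what the paper supplies: it works in the $\mu$-plane, where $\Im\Delta_\rho(\mu)$ is the Martin (Phragm\'en--Lindel\"of) function and behaves like $\Im\mu$ by \eqref{130jun21}, invokes Koosis's theorem \cite[p.~407]{Koos} to get $\int_\bbR G(\mu,i,\Omega_\mu)\,d\mu<\infty$, and then uses that the $\mu$-plane gaps have length bounded below (by \eqref{24oct21}) to dominate $\sum_k G$ at the critical points by that integral, transferring to $\Omega_\cE$ via $z=1/\mu^2$. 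Your plan has no substitute for this step. (Your opening claim that Widom type is equivalent to $\sum h_k<\infty$ for the $\infty$-comb is also unjustified: the Green-function Widom condition and the Martin-function condition $\sum h_k<\infty$ are different conditions in general, and the paper does not use such an equivalence.)

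Step 4 is also off in two respects. First, the failure of DCT is not a consequence of \eqref{43sept21}: in the paper's scheme DCT fails because the Abel map is not injective, which follows from Widom type together with \eqref{33sept21} alone (Lemma \ref{theo53}, giving $\cA(\cD_b(\sE))\subset\Xi\neq\emptyset$); condition \eqref{43sept21} is used in the opposite direction, in the hard part of Theorem \ref{l13sept21}, to show there are no singular characters beyond $\cA(\cD_b(\sE))$. Second, minimality means $h_f$ is a polynomial in $1/z$ for \emph{every} admissible $f$, and your estimate $|h_f(z)|\lesssim 1/|z|$ is only announced, not derived; deriving it uniformly in $f$ is a substantive task. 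The paper's route to minimality is structural rather than estimate-based: $\cE_\rho$ is manufactured so that the associated $J$-inner matrix is $\fB$ times a single degree-one Blaschke--Potapov factor $\fA_\rho$, which by the machinery of Sect.~\ref{sect63} is the minimal possible deviation from DCT. To be fair, the paper's own proof of the Proposition verifies in detail only the Widom property and \eqref{33sept21}--\eqref{43sept21}, with minimality carried by this construction; but your sketch does not close that gap either, and it attributes the DCT failure to the wrong hypothesis.
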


Having this model spectral set $\cE$, we conjecture the following (for a more detailed motivation, see Sect.~\ref{sect7}).

\begin{conjecture}\label{conj33} There exists $\Omega$ of Widom type with a minimal DCT violation and $\sE$ obeying \eqref{33sept21},
\eqref{43sept21} such that $\eta_k=1/k$.
\end{conjecture}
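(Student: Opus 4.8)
The plan is to realize the required domain as a controlled perturbation of the explicit model $\cE_\rho$ of Proposition~\ref{pr17sept21}, for which Widom type, minimal violation of DCT, and the two integral conditions \eqref{33sept21}--\eqref{43sept21} are already guaranteed. The first step is to compute the infinity-frequencies $\eta_k$ of this model and to check that $\eta_k=\tfrac1k(1+o(1))$. Since $\eta_k=\Re\Theta$ on the $k$-th gap, $\Theta(z)\sim i\sqrt\l$, and the origin-comb normalization $n_k=k$ in \eqref{130jun21} forces the gaps to sit at $\l_k\sim 1/k^2$, the thinness of the gaps ($\sum h_k<\infty$) gives a convergent correction and hence $\eta_k\sim\sqrt{\l_k}\sim 1/k$. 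Thus the model already realizes the target frequencies to leading order, and the problem reduces to upgrading this asymptotic identity to an exact one while staying inside the class of Widom domains with minimal DCT violation obeying \eqref{33sept21}--\eqref{43sept21}.

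The second step is to prepare the two-comb dictionary for the same $\cE$: the origin comb $\Pi_0$ (data $\{n_k,\u_k\}$), which governs the near-origin geometry responsible for \eqref{33sept21}, \eqref{43sept21} and for the \emph{type} of DCT violation, and the infinity comb $\Pi$ (data $\{\eta_k,h_k\}$), which carries the frequencies. The key preparatory computation is an asymptotically explicit formula expressing each $\eta_k$ as a function of the gap endpoints $\{a_j,b_j\}_{j\ge1}$, obtained by integrating $d\Re\Theta$ across the intervening band; this presents the frequency map $\{a_j,b_j\}\mapsto\{\eta_k\}$ as a concrete—and, one expects, non-degenerate—map on a suitable weighted sequence space.

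The third step is the solution of the inverse problem $\eta_k(\cE)=1/k$ by deformation. In the finite-gap case, Theorem~\ref{thmain4} already shows that any prescription of the frequencies is achieved by a unique choice of gap endpoints. I would therefore truncate to $N$ gaps, invoke Theorem~\ref{thmain4} to enforce $\eta_k=1/k$ for $k\le N$, and pass to the limit $N\to\infty$, exploiting the smallness of the required corrections to run a fixed-point/implicit-function argument in a space encoding \eqref{33sept21}, \eqref{43sept21}. The band widths are pinned by these two conditions to order $1/k^4$ in $\l$, and the compatibility of this scale with the exact constraint $\eta_k=1/k$ must be verified quantitatively rather than merely to leading order.

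The hard part will be preserving \emph{minimal} violation of DCT in the limit. Finite-gap domains satisfy DCT outright, so the violation—and its minimality—is a purely asymptotic feature produced by the accumulation of gaps at the origin, precisely the borderline growth $\u_k\sim\tfrac1\pi\log k$ visible in \eqref{130jun21}. The deformation used to set $\eta_k=1/k$ perturbs the endpoints and hence this growth, and there is no a priori reason it should land again on the exact borderline that renders $h_f$ a polynomial in $1/z$ rather than merely of bounded characteristic. I would attack this by first characterizing minimal DCT violation intrinsically through the admissible borderline rate of the origin-comb slit heights $\{\u_k\}$, and then showing that the frequency constraint $\eta_k=1/k$ and the DCT-minimality constraint are transversal, so both can be solved at once. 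Establishing this transversality, together with the uniform estimates needed for the two-comb correspondence, is where I expect the genuine difficulty—and the need for the technique of Section~\ref{sect5}—to lie.
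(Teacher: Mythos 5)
Neither you nor the paper actually proves this statement: it is posed as Conjecture~\ref{conj33}, and the paper offers only a ``Speculation on a Proof.'' So your proposal must be judged as a competing program, and as such it takes a genuinely different route. The paper never tries to adjust the infinity-comb frequencies of the model $\cE_\rho$ at all: it inverts the \emph{origin} Martin function, $z\mapsto -1/(\Theta_\rho)_0(z)$, which maps $\bbC_+$ onto a quarter-plane comb whose slits are circular arcs with bases \emph{exactly} at $1/k$ (because $n_k=k$ holds exactly in \eqref{130jun21}), and then conjectures a quasiconformal straightening of these asymptotically straight slits onto a straight-slit comb with bases $1/k$ and heights $\sim \log k/k^2$, hoping that the image set $\hat\sE$ inherits Widom type, minimal DCT violation, and \eqref{33sept21}, \eqref{43sept21}; the supporting evidence is the remark that homogeneity (hence Widom and DCT) is preserved under bi-Lipschitz maps. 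Your route instead keeps the infinity comb and tries to \emph{correct} $\eta_k$ from $\tfrac1k(1+o(1))$ to exact equality by deforming gap endpoints through finite-gap truncations. The paper's route buys exactness of the target frequencies for free, concentrating all difficulty in one inheritance statement; your route avoids quasiconformal machinery but must solve an infinite system of constraints and then recover in the limit a property that is invisible at every finite stage.

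Two gaps in your program are genuine and go beyond the conjectural status shared by both approaches. First, Theorem~\ref{thmain4} is not the global prescription theorem you invoke. It asserts unique solvability of the \emph{linearized} system \eqref{822sept21}--\eqref{622sept21}, i.e., nondegeneracy of the Jacobian of the map $\{a_k,b_k\}\mapsto\{\eta_k,\eta^{(t)}_k\}$; ``any prescription of the frequencies is achieved by a unique choice of gap endpoints'' does not follow from it. To enforce $\eta_k=1/k$ for $k\le N$ you would need a continuation or degree argument with uniform control of the inverse Jacobian along the deformation, and on top of that uniformity in $N$ for your fixed-point step; none of this is set up, and the required estimates are not in the paper.

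Second, and more seriously, the step you yourself flag as hard rests on a tool that does not exist. You propose to characterize minimal DCT violation intrinsically by a borderline growth rate of the origin-comb slit heights $\u_k$ and then prove transversality of this constraint with $\eta_k=1/k$. But in the paper, minimality of the DCT violation for $\cE_\rho$ is not derived from the asymptotics $\u_k\simeq\log(2k+1)\rho$ at all: it comes from the algebraic structure of Section~\ref{sect7}, namely that $\cos\pi\Delta_\rho$ is the trace of a single degree-one Blaschke--Potapov factor $\fA_\rho$ times the rotation $\fB$, which ties the defect matrices $\fA_D$ of Section~\ref{sect63} to polynomials of low degree. Translating analytic properties of this model into comb/potential-theoretic data is exactly the kind of question the paper explicitly poses as an open problem (Sect.~3.3, there for the singular set $\Xi$), and no such characterization of minimal DCT violation is available. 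So your transversality argument presupposes the solution of a problem at least as hard as the one the paper's qc-straightening conjecture leaves open, while the finite-gap/DCT tension you correctly identify (every truncation satisfies DCT, so minimality must be manufactured purely in the limit) remains uncontrolled by anything in your scheme.
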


For the given system of space-frequencies generators, we have the following description of the torus $\cT_\eta$.

\begin{lemma}\label{l16sept21}
Let $\eta_k=1/k$. Then $\cT_\eta\simeq(\bbR/\bbZ)\times\hat \bbZ$,
where $\hat \bbZ$ is the profinite completion of integers
$$
\hat \bbZ=\prod_{p\text{\ prime}}\bbZ_p.
$$
\end{lemma}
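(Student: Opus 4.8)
The plan is to determine $\cT_\eta$ by Pontryagin duality. By construction $\cT_\eta$ is a closed subgroup of the compact abelian group $(\bbR/\bbZ)^\infty$ (product topology), hence itself compact abelian, and it is pinned down up to topological isomorphism by its dual. Recall that $\widehat{(\bbR/\bbZ)^\infty}\cong\bigoplus_{k\ge 1}\bbZ$, where a finitely supported $\mathbf n=(n_k)$ corresponds to the character $\chi_{\mathbf n}((\theta_k)_k)=\exp\bigl(2\pi i\sum_k n_k\theta_k\bigr)$. First I would compute the annihilator $\cT_\eta^{\perp}$: since characters are continuous and the orbit $\{(\eta_k x)_k:x\in\bbR\}$ is dense in $\cT_\eta$, the character $\chi_{\mathbf n}$ kills $\cT_\eta$ iff $\sum_k n_k\eta_k x\in\bbZ$ for all $x\in\bbR$, i.e.\ iff $\sum_k n_k\eta_k=0$. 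For $\eta_k=1/k$ this reads $\sum_k n_k/k=0$, so $\cT_\eta^{\perp}=\{\mathbf n\in\bigoplus_k\bbZ:\sum_k n_k/k=0\}$.

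Next I would identify $\widehat{\cT_\eta}\cong\widehat{(\bbR/\bbZ)^\infty}/\cT_\eta^{\perp}$. The homomorphism $\phi:\bigoplus_k\bbZ\to\bbQ$, $\phi(\mathbf n)=\sum_k n_k/k$, has kernel exactly $\cT_\eta^{\perp}$ and is onto $\bbQ$, because it hits $1/k$ for every $k$ and these generate $\bbQ$. Hence $\widehat{\cT_\eta}\cong(\bigoplus_k\bbZ)/\cT_\eta^{\perp}\cong\bbQ$ as discrete groups, and dualizing back gives $\cT_\eta\cong\widehat{\bbQ}$.

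It then remains to recognize $\widehat{\bbQ}$. I would dualize the exact sequence $0\to\bbZ\to\bbQ\to\bbQ/\bbZ\to 0$; since Pontryagin duality is exact and contravariant on locally compact abelian groups, this yields $0\to\widehat{\bbQ/\bbZ}\to\widehat{\bbQ}\to\widehat{\bbZ}\to 0$. Now $\widehat{\bbZ}\cong\bbR/\bbZ$, while the torsion group $\bbQ/\bbZ\cong\bigoplus_p\bbZ(p^\infty)$ has dual $\prod_p\widehat{\bbZ(p^\infty)}=\prod_p\bbZ_p=\hat\bbZ$. Thus $\cT_\eta$ fits into $0\to\hat\bbZ\to\cT_\eta\to\bbR/\bbZ\to 0$, exhibiting the profinite piece $\hat\bbZ$ and the circle piece $\bbR/\bbZ$ asserted in the statement.

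The delicate point — and where I would be most careful — is this last structural identification. The extension above does not split as topological groups: $\widehat{\bbQ}$ is connected (being the dual of a torsion-free discrete group), whereas a genuine direct product $(\bbR/\bbZ)\times\hat\bbZ$ has identity component $(\bbR/\bbZ)\times\{0\}$ and is disconnected. So $\cT_\eta$ is the universal (adelic) solenoid $(\bbR\times\hat\bbZ)/\bbZ_{\mathrm{diag}}$, and the product in the statement must be read in this twisted, solenoidal sense. As an independent check I would realize $\cT_\eta$ directly as the inverse limit $\varprojlim_{k}(\bbR/\bbZ,\ \theta_{km}\mapsto m\theta_{km})$ over $(\bbN,\mid)$: the relations $m\,\theta_{km}=\theta_k$, read off from $x/(km)$ versus $x/k$ along the orbit, present each element of $\cT_\eta$ as a compatible system, the line $x\mapsto(x/k)_k$ is dense in it (by the same annihilator computation), and this inverse limit is exactly the universal solenoid — recovering $\cT_\eta\cong\widehat{\bbQ}$.
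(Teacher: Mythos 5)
Your proof is correct, and it takes a genuinely different route from the paper's. The paper works directly in coordinates: every $\alpha\in\cT_\eta$ inherits from the orbit the relations $m\,\alpha_{mk}=\alpha_k \bmod 1$; on the fiber $\cT^0_\eta=\{\alpha\in\cT_\eta:\ \alpha_1=0\}$ each $\alpha_k$ then lies in $\tfrac1k\bbZ/\bbZ$, the subsequences $\alpha_{p^s}$ are encoded by $p$-adic integers $\delta_p$, and the Chinese remainder theorem reassembles these into $\cT^0_\eta\simeq\varprojlim \bbZ/k\bbZ=\hat\bbZ$; the proof then simply ``takes into account the first component $\alpha_1\in\bbR/\bbZ$'' to conclude the product form. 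Your dual computation ($\cT_\eta^\perp=\ker(\mathbf{n}\mapsto\sum_k n_k/k)$, hence $\widehat{\cT_\eta}\cong\bbQ$, hence $\cT_\eta\cong\widehat{\bbQ}$) reaches the same two building blocks by dualizing $0\to\bbZ\to\bbQ\to\bbQ/\bbZ\to0$, and your inverse-limit realization $\cT_\eta\cong\varprojlim\bigl(\bbR/\bbZ,\ \theta\mapsto m\theta\bigr)$ is precisely the closed subgroup cut out by the paper's relations. What duality buys is precision about how the two pieces are glued together, and that is exactly where your final caveat matters.

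That caveat is not pedantry: it identifies a real defect in the lemma as stated, which the paper's last sentence glosses over. The extension $0\to\hat\bbZ\to\cT_\eta\to\bbR/\bbZ\to0$ cannot split: $\cT_\eta$ is the closure of a continuous image of $\bbR$, hence connected, while $(\bbR/\bbZ)\times\hat\bbZ$ has identity component $(\bbR/\bbZ)\times\{0\}$ and is disconnected; in fact the two groups are not isomorphic even abstractly, since $\widehat{\bbQ}$ is torsion-free and divisible whereas $(\bbR/\bbZ)\times\hat\bbZ$ has torsion and is not divisible. The true statement is the one you prove: $\cT_\eta$ is the universal solenoid $(\bbR\times\hat\bbZ)/\bbZ_{\mathrm{diag}}\cong\widehat{\bbQ}$, a twisted $\hat\bbZ$-bundle over $\bbR/\bbZ$; the same correction applies to Proposition \ref{pr16sept21}, where $\bbZ_3\times(\bbR/\bbZ)$ should be the $3$-adic solenoid. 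Fortunately the defect is harmless downstream: the arguments in Proposition \ref{pr16sept21} and Corollary \ref{c16oct} use only the fiberwise description of elements (the components $\alpha_{3^k}$ forming a $3$-adic integer, and the convergence $\alpha_n+\beta_n\to 0$), never a global continuous splitting, so they remain valid with the corrected statement.
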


\subsection{Description of Singular Characters}

By Theorem \ref{l13sept21} we have a description of the singular set $\Xi$ in terms of divisors \eqref{53sept21}. Using Lemma \ref{l16sept21}, to check iii) in Proposition \ref{prop13sept21} we need to describe $\Xi$ directly in terms of characters. The absence of such a description of $\Xi$ seems  to be the main obstacle in disproving the Deift conjecture.

The spectral set $\cE$ defined in Proposition \ref{pr17sept21} is remarkable in the sense that its parameters are quite explicitly given both in the spectral plane and in terms of potential theory (comb parameters).
\begin{problem}
For the set $\cE$ defined via $\Delta(\mu)$ in \eqref{130jun21}, translate the characteristic property \eqref{53sept21} for $\a\in\Xi$ into the language of potential theory (characteristics in the comb domains).
\end{problem}

Our working hypothesis is that a description of $\Xi$ in this case can be given in terms of the convergence of the series
\begin{equation}\label{17sept21}
\sum_{k\ge 1}\frac{\a_k}{k}.
\end{equation}
However at present we are even unable to clarify what kind of convergence should be claimed here (absolute or conditional; possibly $\a\in\Xi$ just requires the boundedness of partial sums in \eqref{17sept21}).

\bigskip
\noindent
\textbf{Acknowledgments.}
D.~D.\ was supported in part by Simons Fellowship $\# 669836$ and NSF grants DMS--1700131 and DMS--2054752. M.~L.\ was supported in part by NSF grant DMS--1700179. A.~V.\ was supported in part by NSF grant DMS--1900286. P.~Y.\ was supported by the Austrian Science Fund FWF, project no: P32885-N.

\section{Stability Conjecture \ref{con16sept}. Finite Number of Gaps}\label{sect4}

In the classical finite dimensional case, our considerations deal with the Riemann surface
$$
\cR=\{(z,w):\ w^2=T(z)\},\quad T(z)=z\prod_{k=1}^n(z-a_k)(z-b_k),
$$
compactified by  a single point at infinity. We use the standard dissection of our hyperelliptic Riemann surface $\cR$: the $\bA_k$-cut starts at $-1$ on the upper sheet of the surface and goes back though the gap $(a_k, b_k)$, see Fig. 4.1. The $\bB_k$-cut goes along the gap $(a_k, b_k)$ on the upper sheet and goes back to the initial point along the same interval on the lower sheet.

\begin{center}
\includegraphics[scale=0.8]{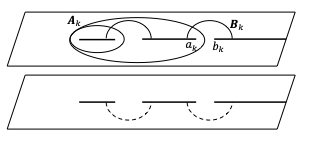}

Fig. 4.1.  Dissection of $\cR_*:=\cR\setminus\cup_{j=1}^n(\bA_j\cup\bB_j)$
\end{center}

Recall the following classical relation, see e.g. Theorem 4 \cite[  Ch. 9, \S 2]{CH}, for abelian integrals.
Let $\theta_{1,2}$  be abelian integrals on $\cR$ with the periods
$$
(\ba_k)_{1,2}=\int_{\bA_k}d \theta_{1,2},\quad (\bb_k)_{1,2}=\int_{\bB_k}d \theta_{1,2}.
$$
We assume that the poles of the differentials $d\theta_{1,2}$ are out of the curves  $\bA_k$ and $\bB_k$. Let $\cR_*:=\cR\setminus\cup_{j=1}^n(\bA_j\cup\bB_j)$.
Then
\begin{equation}\label{114sept21}
\frac 1{2\pi i}\int_{\pd\cR_*}\theta_1d\theta_2=
\frac 1{2\pi i}\sum_{k=1}^n((\ba_k)_1(\bb_k)_2-(\bb_k)_1(\ba_k)_2)=\sum\Res \theta_1d\theta_2.
\end{equation}

In our considerations, as special role is played by the functions
$\Theta(z)$ and $\Theta^{(1)}(z)$, given by the abelian integrals
\begin{equation}\label{615sept21}
\Theta(z)=\frac 1 2\int_0^z\frac{P(\l)\, d\l}{\sqrt{T(\l)}},\quad \Theta^{(1)}(z)=\frac 3 2\int_0^z\frac{ Q(\l)\, d\l}{\sqrt{T(\l)}},
\end{equation}
where
$ Q(z)=zP^{(1)}(z)$, and $P$ and $P^{(1)}$ are monic polynomials of degree $n$. They are defined uniquely by the integral conditions
\begin{equation}\label{29jul21}
\int_{\bB_k}d\Theta(z)=2\int_{a_k}^{b_k} d\Theta(z)=0,\quad
\int_{\bB_k}d\Theta^{(1)}(z)=
2\int_{a_k}^{b_k} d\Theta^{(1)}(z)=0.
\end{equation}
In turn,
\begin{equation}\label{39jul21}
\int_{\bA_k}d\Theta(z)=2\int_{0}^{a_k} d\Theta(z)=2\eta_k,\quad
\int_{\bA_k}d\Theta^{(1)}(z)=
2\int_{0}^{a_k} d\Theta^{(1)}(z)=2\eta_k^{(1)}.
\end{equation}

Also we will use the system of  abelian integrals of the first kind $\vt_{\E_k}$, $k=1,\dots, n$, with the normalization
\begin{equation}\label{115sept21}
\frac 1{2\pi i}\int_{\bB_j}d\vt_{\sE_k}(z)=\delta_{k, j}, \quad d\vt_{\sE_k}(z)=\frac{O^{(k)}(z)\, dz}{\sqrt{T(z)}},
\end{equation}
where $O^{(k)}$ is a suitable polynomial of degree $n-1$.

Note that the differentials $d\Theta$, $d\Theta^{(1)}$ as well as $d\vt_{\sE_k}$ are antisymmetric with respect to the substitution $(z,w)\mapsto (z,-w)$ on the surface.

\begin{remark}
We point out that in fact, in the KdV theory
for the abelian integral  describing the time evolution,
the following normalization at infinity is commonly accepted
$$
\Theta^{(t)}
=\frac 1{\tau^3}+\text{regular},\quad z=\frac 1{\tau^2}.
$$
That is,
$$
\Theta^{(t)}=\Theta^{(1)}-C\Theta
$$
with a suitable constant $C$. Let us compute this constant.
Let
$$
T(z)=z^{2n+1}+ A_T z^{2n}+\dots \quad\text{and}\quad Q(z)=z^{n+1}+ A_Q z^{n}+\dots
$$
For the differential $d\Theta^{(1)}$, we have the following local decomposition at infinity,
$$
d\Theta^{(1)}=\frac 3 2z\frac{1+A_Q/z+\dots}{\sqrt{1+A_T/z+\dots}}\frac{dz}{\sqrt{z}}=
(3z+3(A_Q-A_T/2)+\dots) \, d\sqrt{z}.
$$
In the local coordinate we have
$$
d\Theta^{(1)}=\left(-\frac{3}{\tau^4}-3\frac{A_Q-A_T/2}{\tau^2}+\text{regular}\right)d\tau,
$$
and the abelian integral $\Theta^{(1)}$ has the following leading terms,
\begin{equation}\label{314sept21}
\Theta^{(1)}=\frac{1}{\tau^3}+
3\left(
 A_Q-\frac 1 2 A_T\right)\frac 1{\tau}+\dots .
\end{equation}
Thus, since $\Theta=1/\tau+\dots$,
$$
C=3\left(
 A_Q-\frac 1 2 A_T\right)
$$
and
\begin{equation}\label{522sept21}
\Theta^{(t)}:=\Theta^{(1)}-
3\left(A_Q-\frac 1 2 A_T\right)
\Theta=\frac 1{\tau^3}+\text{regular}.
\end{equation}
Let
$$
d\Theta^{(t)}=\frac 3 2\frac{Q^{(t)}(z)}{\sqrt{T(t)}}\, dz,
$$
where, according to \eqref{522sept21},  $Q^{(t)}(z)$ is the monic polynomial
\begin{equation}\label{123sept21}
Q^{(t)}(z)=Q(z)-3\left(A_Q-\frac 1 2 A_T\right)P(z).
\end{equation}
Respectively, the frequencies $\{\eta^{(t)}_k\}_{k= 1}^n$ describing the  time evolution are given by
\begin{equation}\label{122sept21}
\eta^{(t)}_k=\int_0^{a_k}d\Theta^{(t)}=\eta^{(1)}_k
-3\left(A_Q-\frac 1 2 A_T\right)\eta_k.
\end{equation}

\end{remark}

As a simplest application of the bilinear relation  \eqref{114sept21} we prove the following well known identities.
\begin{lemma}
Let $A_{O^{(k)}}$ be the leading coefficient of the polynomial  $O^{(k)}(z)$,
$O^{(k)}(z)=A_{O^{(k)}} z^{n-1}+\dots$. Then
 $\eta_k=-A_{O^{(k)}}$.
\end{lemma}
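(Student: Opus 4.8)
The plan is to apply the bilinear relation \eqref{114sept21} to the pair of abelian integrals $\theta_1=\vt_{\sE_k}$ and $\theta_2=\Theta$ and simply to read off the asserted identity from its two sides. The differential of the first kind $d\vt_{\sE_k}$ is holomorphic on all of $\cR$, while $d\Theta$ has its only pole at the single point at infinity; hence the product $\vt_{\sE_k}\,d\Theta$ is holomorphic on $\cR_*$ except at $\infty$, and the right-hand side of \eqref{114sept21} reduces to the residue of $\vt_{\sE_k}\,d\Theta$ at infinity.

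First I would record the relevant periods. By \eqref{29jul21} the $\bB$-periods of $d\Theta$ vanish, $(\bb_j)_2=\int_{\bB_j}d\Theta=0$, and by \eqref{39jul21} its $\bA$-periods are $(\ba_j)_2=\int_{\bA_j}d\Theta=2\eta_j$. By the normalization \eqref{115sept21}, $(\bb_j)_1=\int_{\bB_j}d\vt_{\sE_k}=2\pi i\,\delta_{kj}$, whereas the periods $(\ba_j)_1=\int_{\bA_j}d\vt_{\sE_k}$ never enter, since they are paired only with the vanishing $(\bb_j)_2$. Substituting into the middle member of \eqref{114sept21},
\begin{equation*}
\frac 1{2\pi i}\sum_{j=1}^n\big((\ba_j)_1(\bb_j)_2-(\bb_j)_1(\ba_j)_2\big)=\frac 1{2\pi i}\sum_{j=1}^n\big(-2\pi i\,\delta_{kj}\cdot 2\eta_j\big)=-2\eta_k.
\end{equation*}

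It remains to evaluate the residue at infinity. In the local coordinate $\tau$ with $z=1/\tau^2$ one has $\Theta=\tau^{-1}+\dots$, so $d\Theta=(-\tau^{-2}+O(1))\,d\tau$. The ratio $d\vt_{\sE_k}/d\Theta=O^{(k)}/(\tfrac12 P)$ is rational in $z$, hence independent of the branch of $\sqrt{T}$; since $O^{(k)}(z)=A_{O^{(k)}}z^{n-1}+\dots$ and $P$ is monic of degree $n$, near infinity this ratio equals $2A_{O^{(k)}}\tau^2(1+O(\tau))$, whence $d\vt_{\sE_k}=(-2A_{O^{(k)}}+O(\tau))\,d\tau$ and $\vt_{\sE_k}=\text{const}-2A_{O^{(k)}}\tau+O(\tau^2)$. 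The coefficient of $\tau^{-1}$ in the product $\vt_{\sE_k}\,d\Theta$ then comes solely from multiplying the $-2A_{O^{(k)}}\tau$ term against the $-\tau^{-2}$ term, so $\sum\Res \vt_{\sE_k}\,d\Theta=2A_{O^{(k)}}$. Comparing with the left-hand side gives $-2\eta_k=2A_{O^{(k)}}$, i.e.\ $\eta_k=-A_{O^{(k)}}$. The only delicate point is this local analysis at infinity: one must use the same branch of $\sqrt{T}$ in $d\Theta$ and $d\vt_{\sE_k}$ and remember that $d\Theta$ has a double (not simple) pole, so that the residue is fed by the subleading term of $\vt_{\sE_k}$; passing to the rational ratio $d\vt_{\sE_k}/d\Theta$ removes all branch bookkeeping and renders this step routine.
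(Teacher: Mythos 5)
Your proof is correct and takes essentially the same route as the paper: apply the bilinear relation \eqref{114sept21} to $\theta_1=\vt_{\sE_k}$, $\theta_2=\Theta$, use the period normalizations \eqref{29jul21}, \eqref{39jul21}, \eqref{115sept21} to evaluate the period sum as $-2\eta_k$, and compute the residue at infinity as $2A_{O^{(k)}}$ from the expansions $\vt_{\sE_k}=\mathrm{const}-2A_{O^{(k)}}\tau+\dots$, $d\Theta=-d\tau/\tau^2+\dots$ in the coordinate $z=1/\tau^2$. Your device of passing to the rational ratio $d\vt_{\sE_k}/d\Theta=O^{(k)}/(\tfrac 12 P)$ to sidestep branch bookkeeping is a minor, sound refinement of the paper's direct expansion \eqref{414sept21}.
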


\begin{proof}
Since infinity is the unique point of singularity for $\vt_{\sE_k}d\Theta$, by
 \eqref{29jul21} and \eqref{115sept21} we get
$$
\frac 1{2\pi i}\int_{\pd\cR_*}\vt_{\sE_k}d\Theta=-\int_{\bA_k}d\Theta=-2\eta_k=\Res_\infty\vt_{\sE_k}d\Theta.
$$
To compute the residue at infinity we use the local coordinate $z=1/\tau^2$. We have
\begin{equation}\label{414sept21}
\vt_{\sE_k}=-2A_{O^{(k)}}\tau+\dots, \quad d\Theta=-d\tau/\tau^2+\dots
\end{equation}
Thus
$$
\Res_\infty\vt_{\sE_k}d\Theta=2A_{O^{(k)}}.
$$
and the lemma is proved.
\end{proof}

Assume that all parameters vary  depending on a single variable $\e$. The derivative w.r.t. the spectral parameter is denoted by prime $(\cdot)'$ and w.r.t.  $\e$ by dot $\dot{(\cdot)}$.
We derive a system of differential equations defining the variation of the surface $\cR(\e)$ depending on variations of the system of  frequencies
$\{\dot\eta_k\}_{k=1}^n$ and $\{\dot\eta^{(t)}_k\}_{k= 1}^n$.

\begin{proposition}\label{pr120sept21}
Let $P$, $Q^{(t)}$ and $O^{(k)}$ be the polynomials forming $\Theta$, $\Theta^{(t)}$ and abelian integrals of the first kind $\vt_{\sE_k}$, respectively, see \eqref{615sept21},
\eqref{123sept21} and \eqref{115sept21}.
Let $\{\eta_k\}_{k=1}^n$ and $\{\eta^{(t)}_k\}_{k= 1}^n$ be, respectively, the space and time generating frequencies in the KdV evolution. Then
\begin{align}\label{822sept21}
\sum_{j=1}^n\left(
\frac{O^{(k)}(a_j)P(a_j)}{{T'(a_j)}}\dot a_j
+\frac{O^{(k)}(b_j)P(b_j)}{{T'(b_j)}}\dot b_j
\right)=&-2\dot \eta_k,
\\
\label{622sept21}
3\sum_{j=1}^n\left(
\frac{O^{(k)}(a_j)Q^{(t)}(a_j)}{{T'(a_j)}}\dot a_j
+\frac{O^{(k)}(b_j)Q^{(t)}(b_j)}{{T'(b_j)}}\dot b_j
\right)=&-2\dot \eta^{(t)}_k.
\end{align}
In particular, for $\dot\eta_k=0$, the following equations hold,
\begin{align}
\sum_{j=1}^n\left(\frac{O^{(k)}(a_j)P(a_j)}{T'(a_j)}\dot a_j+
\frac{O^{(k)}(b_j)P(b_j)}{T'(b_j)}\dot b_j
\right) & = 0,\label{715sept21}
\\
3\sum_{j=1}^n\left(\frac{O^{(k)}(a_j)Q(a_j)}{T'(a_j)}\dot a_j+
\frac{O^{(k)}(b_j)Q(b_j)}{T'(b_j)}\dot b_j
\right)
& = -2\dot\eta^{(t)}_k.\label{815sept21}
\end{align}
\end{proposition}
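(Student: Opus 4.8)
The plan is to read each identity as the differentiated form of the defining period conditions, and to extract it from the bilinear relation \eqref{114sept21} applied to $\vt_{\sE_k}$ and the variation of $\Theta$ (resp.\ $\Theta^{(t)}$).

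First I would introduce the $\e$-derivative $d\dot\Theta$ of $d\Theta=\tfrac12 P(z)\,dz/\sqrt{T(z)}$. Since $T(z)=z\prod_k(z-a_k)(z-b_k)$ gives $\dot T/T=-\sum_k\big(\dot a_k/(z-a_k)+\dot b_k/(z-b_k)\big)$, one finds
\[
d\dot\Theta=\frac{\dot P(z)}{2\sqrt{T(z)}}\,dz+\frac{P(z)}{4\sqrt{T(z)}}\sum_k\Big(\frac{\dot a_k}{z-a_k}+\frac{\dot b_k}{z-b_k}\Big)\,dz .
\]
Working in the branch coordinate $t$, $t^2=z-a_j$, a short local expansion shows that near $a_j$ the only singular contribution is a double pole $\tfrac{\dot a_j P(a_j)}{2\sqrt{T'(a_j)}}\,t^{-2}\,dt$ whose residue vanishes (only even powers of $t$ occur), and likewise at $b_j$, while the remaining terms stay regular there. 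Because the pole part of $\Theta$ at $\infty$ (namely $1/\tau$, with $z=1/\tau^2$) is fixed by the normalization and hence $\e$-independent, $d\dot\Theta$ is regular at $\infty$. Thus $d\dot\Theta$ is a differential of the second kind with vanishing residues, its periods are homological invariants, and differentiating \eqref{29jul21}, \eqref{39jul21} yields $\int_{\bB_j}d\dot\Theta=0$ and $\int_{\bA_j}d\dot\Theta=2\dot\eta_j$.

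Next I would apply \eqref{114sept21} with $\theta_1=\vt_{\sE_k}$ and $\theta_2=\dot\Theta$. By the normalization \eqref{115sept21} one has $\int_{\bB_j}d\vt_{\sE_k}=2\pi i\,\delta_{kj}$, so on the period side all unknown $\bA$-periods of $\vt_{\sE_k}$ drop out against $\int_{\bB_j}d\dot\Theta=0$, leaving
\[
\frac{1}{2\pi i}\sum_{j=1}^n\Big(-2\pi i\,\delta_{kj}\cdot 2\dot\eta_j\Big)=-2\dot\eta_k .
\]
On the residue side, $\vt_{\sE_k}$ is regular at each branch point with linear term $\tfrac{2O^{(k)}(a_j)}{\sqrt{T'(a_j)}}\,t$; pairing this against the double-pole coefficient of $d\dot\Theta$ (the residue of $d\dot\Theta$ itself being zero) gives
\[
\Res_{a_j}\vt_{\sE_k}\,d\dot\Theta=\frac{O^{(k)}(a_j)P(a_j)}{T'(a_j)}\,\dot a_j ,
\]
with the analogue at $b_j$, while $\infty$ contributes nothing since $\vt_{\sE_k}$ vanishes there. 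Equating the two sides is exactly \eqref{822sept21}. Replacing $\Theta$ by $\Theta^{(t)}$ — i.e.\ $P$ by $Q^{(t)}$ and the prefactor $\tfrac12$ by $\tfrac32$ — and differentiating \eqref{122sept21} repeats the argument verbatim and produces \eqref{622sept21}. The two specializations then follow: setting $\dot\eta_k=0$ in \eqref{822sept21} gives \eqref{715sept21}, and substituting $Q^{(t)}=Q-3\big(A_Q-\tfrac12 A_T\big)P$ from \eqref{123sept21} into \eqref{622sept21}, where the $P$-part is annihilated by \eqref{715sept21}, gives \eqref{815sept21}.

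I expect the main obstacle to be the local analysis at the moving branch points. One must check carefully that $d\dot\Theta$ carries no residue there, so that both the homological invariance of its periods and the bilinear relation apply cleanly even though the poles sit at the endpoints of the cuts $\bB_j$; and one must confirm that the residue of the product $\vt_{\sE_k}\,d\dot\Theta$ is captured entirely by the linear term of $\vt_{\sE_k}$ against the double-pole coefficient of $d\dot\Theta$. Relatedly, justifying the interchange of $d/d\e$ with the period integrals — that the motion of the branch points produces no extra boundary terms — is precisely what the second-kind (residue-free) structure of $d\dot\Theta$ secures.
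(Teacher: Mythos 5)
Your proof is correct and takes essentially the same route as the paper's: differentiate the normalizations \eqref{29jul21}--\eqref{39jul21} to get the periods of $d\dot\Theta$ (resp.\ $d\dot\Theta^{(t)}$), apply the bilinear relation \eqref{114sept21} against $\vt_{\sE_k}$, and evaluate the residues at the branch points in the local coordinate, with the parity argument disposing of the residues of $d\dot\Theta$ itself. Your explicit derivation of \eqref{815sept21} by substituting \eqref{123sept21} into \eqref{622sept21} and invoking \eqref{715sept21} is exactly the substitution the paper leaves implicit.
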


\begin{proof}
Since
$$
d\Theta=\frac 1 2\frac{P(z)}{\sqrt{T(z)}} \, dz,
$$
we have
$$
d\dot \Theta=\frac 1 2\left\{\frac{\dot P(z)}{\sqrt{T(z)}}-\frac 1 2\frac{P(z)}{\sqrt{T(z)}}
\frac{\dot T(z)}{T(z)}
\right\}dz.
$$
Since $P(z)$ is monic, the first term is an abelian differential of the first kind. The points of singularity in the second term are the $a_j$'s and $b_j$'s. It is easy to see that infinity in this case is a regular point. Since $\dot T(0)=0$, the differential is also regular at the origin.

The periods of $d\dot \Theta$ are easy to compute,
$$
\int_{\bA_k} d\dot \Theta=\pd_\e \int_{\bA_k} d \Theta=2\dot\eta_k,\quad
\int_{\bB_k} d\dot \Theta=\pd_\e \int_{\bB_k} d \Theta=0.
$$
Therefore, by \eqref{114sept21},
\begin{equation}\label{222sept21}
\frac 1{2\pi i}\int_{\pd\cR_*}\vt_{\sE_k}d\dot \Theta=-2\dot\eta_k=
\sum_{\cup_{j\ge 1}\{a_j,b_j\}}\Res \vt_{\sE_k}d\dot \Theta.
\end{equation}

To compute the residues at $a_j$ we pass to the local coordinate $\tau^2=z-a_j$. For the leading term of $d\dot \Theta$, we have
$$
d\dot \Theta = -\frac 1 4\frac{P(a_j)}{\sqrt{T'(a_j)}\tau}\frac{\dot T (a_j)}{T'(a_j)\tau ^2} 2\tau \, d\tau+\dots=
 -\frac 1 2\frac{P(a_j)}{\sqrt{T'(a_j)}}\frac{\dot T (a_j)}{T'(a_j)}\frac{d\tau}{\tau^2}+\dots .
$$
Since $T(a_j)=0$, we have $T'(a_j)\dot a_j+\dot T(a_j)=0$. Therefore,
\begin{equation}\label{322sept21}
 d\dot \Theta=\frac 1 2\frac{P(a_j)}{\sqrt{T'(a_j)}}\dot a_j \frac{d\tau}{\tau ^2}+\dots .
\end{equation}
Note that the residue (the coefficient before $d\tau/\tau$) of this differential is trivial just for the parity reason.
Similarly,
$$
d\vt_{\sE_k}=\frac{O^{(k)}(a_j)}{\sqrt{T'(a_j)}\tau} 2\tau \, d\tau+\dots = 2\frac{O^{(k)}(a_j)}{\sqrt{T'(a_j)}} d\tau+\dots .
$$
Therefore, locally,
$$
\vt_{\sE_k}=\vt_{\sE_k}(a_j)+2\frac{O^{(k)}(a_j)}{\sqrt{T'(a_j)}}\tau+\dots .
$$
In combination with \eqref{322sept21}, we get
$$
\Res_{a_j} \vt_{\sE_k}d\dot \Theta=\frac{O^{(k)}(a_j)P(a_j)}{{T'(a_j)}}\dot a_j.
$$
Since the computation at $b_j$ is the same,  \eqref{222sept21} can be rewritten as
\eqref{822sept21}.

A featured property of $d\Theta^{(t)}$ is that $d\dot \Theta^{(t)}$ is also regular at infinity, see the second expression in
\eqref{522sept21}. Therefore a word by word repetition of the above arguments provides
\eqref{622sept21}.
With $\dot\eta_k=0$, \eqref{822sept21} and \eqref{622sept21} are the same as \eqref{715sept21} and \eqref{815sept21}.
\end{proof}

Let us prove a specific property of the roots of the polynomials $P$ and $Q$.
\begin{lemma}\label{l120sept21}
Let $P$ and $Q$ be defined by \eqref{615sept21}. Their zeros $c_j$ and $c^{(1)}_j$ form interlacing sequences. In other words, $\Im Q(z)/P(z)\ge 0$ in the upper half plane.
\end{lemma}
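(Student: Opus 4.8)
The plan is to prove the equivalent analytic statement $\Im(Q/P)\ge 0$ on $\bbC_+$; the interlacing of the zeros $c_j$ of $P$ and $c_j^{(1)}$ of $Q$ is then the standard boundary reformulation, because a real rational function of the form (monic of degree $n+1$)/(monic of degree $n$) that is Herglotz has simple real poles $\{c_j\}$ strictly interlacing with its $n+1$ zeros $\{0,c_j^{(1)}\}$. First I would record, directly from \eqref{615sept21}, the identity
\[
\frac{Q(z)}{P(z)}=\frac13\,\frac{d\Theta^{(1)}}{d\Theta}=\frac13\,G'(\Theta(z)),\qquad G:=\Theta^{(1)}\circ\Theta^{-1}:\Pi\to\Pi^{(1)}.
\]
Since $\Theta:\bbC_+\to\Pi$ is a conformal bijection, it suffices to show $\Im G'(w)\ge0$ for every $w\in\Pi$. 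As $G$ is conformal, $G'$ is analytic and nonvanishing in $\Pi$, so $u:=\arg G'=\Im\log G'$ is a bounded harmonic function there, and the whole problem reduces to showing $0\le u\le\pi$ in $\Pi$.

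The core of the argument is to read off the boundary values of $u$ on $\partial\Pi$ from the geometry of the two combs. On the base $\bbR_+\cap\partial\Pi$ (the image of $\sE$) both $\Theta$ and $\Theta^{(1)}$ are real and increasing, so $G$ maps this edge onto $\bbR_+$ preserving orientation, giving $G'>0$ and $u=0$. On the $k$-th slit (the image of the gap $(a_k,b_k)$) both maps send the gap to a vertical slit, traversed in the same direction, so $G$ again carries the upward direction to the upward direction and $u=0$. The decisive edge is the image of the negative axis $(-\infty,0)$: here $\Theta\sim i\sqrt{-z}$ climbs the positive imaginary axis (the vertical edge $\Re w=0$ of $\bbC_{++}$), while $\Theta^{(1)}\sim -i\sqrt{(-z)^3}$ descends the negative imaginary axis (the far edge $\arg=3\pi/2$ of $\bbC^3_{++}$); thus $G$ reverses orientation on this edge, $G'<0$, and $u=\pi$. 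Finally, since there are finitely many gaps, near the two corners $w=0$ and $w=\infty$ the comb $\Pi$ opens as a $\pi/2$-wedge while $\Pi^{(1)}$ opens as a $3\pi/2$-wedge, so $G(w)\sim c\,w^{3}$ with $c>0$ and $G'(w)\sim 3c\,w^{2}$, whence $u\approx 2\arg w$ ranges continuously in $[0,\pi]$ across each corner, matching the edge values $0$ and $\pi$ and confirming that $u$ stays bounded.

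With these boundary data in hand, the maximum principle for the bounded harmonic function $u$ gives $0\le u\le\pi$ throughout $\Pi$, that is $\Im G'\ge0$, hence $\Im(Q/P)\ge0$ on $\bbC_+$, and the interlacing follows. I expect the main obstacle to be the rigorous application of the maximum principle in the presence of the exceptional boundary points: the two corners $0,\infty$ and, more delicately, the slit tips $\Theta(c_k)$, where $G'\sim(w-\Theta(c_k))^{-1/2}$ blows up because $\Theta$ is critical at $c_k$ but $\Theta^{(1)}$ is not. One must check that $u$ remains bounded with the stated boundary values up to these isolated points, so that they form a negligible set. The truly essential input is the sign reversal on the negative axis, which is forced by the opposite normalizations $i\sqrt\lambda$ versus $-i\sqrt{\lambda^3}$ at infinity; this is exactly what pushes $u$ up to $\pi$ rather than collapsing it to the trivial solution $u\equiv0$.
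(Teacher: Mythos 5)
Your overall skeleton is the standard one (interlacing $\Leftrightarrow$ Herglotz, then prove $0\le u\le\pi$ for the bounded harmonic function $u=\arg G'$, $G=\Theta^{(1)}\circ\Theta^{-1}$, $Q/P=\tfrac13\,G'\circ\Theta$, by boundary values plus the maximum principle), and most of your boundary data are right: $u=0$ on the base, $u=\pi$ on the image of the negative half-axis, bounded behavior at the corners and slit tips, and the finitely many exceptional points are indeed negligible. The fatal step is the claim that $u=0$ on the slits. The turning point of $\Theta$ on the gap $(a_k,b_k)$ is the critical point $c_k$ (the zero of $P$ there), while the turning point of $\Theta^{(1)}$ is $c_k^{(1)}$ (the zero of $Q$ there), and generically $c_k\ne c_k^{(1)}$. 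Writing $\Theta=\eta_k+iv$ and $\Theta^{(1)}=\eta^{(1)}_k+iv_1$ on the gap, one has $G'=v_1'/v'$ on the slit, and this is \emph{negative} on the subinterval between $c_k$ and $c_k^{(1)}$: on the corresponding sub-arc of the slit, $G$ runs back down the image slit while $w$ still runs up. So the boundary value of $u$ on that sub-arc is $\pm\pi$, not $0$.

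This is not a repairable detail but the heart of the matter: whether that value is $+\pi$ (harmless for your maximum principle) or $-\pi$ (fatal) is determined by whether $c_k<c_k^{(1)}$ or $c_k^{(1)}<c_k$ in each gap --- i.e., by precisely the interlacing you are trying to prove. Indeed, in the $z$-coordinate $u=\sum_i\arg(z-q_i)-\sum_j\arg(z-p_j)$ (zeros $q_i$ of $Q$, zeros $p_j$ of $P$), whose gap boundary values are $+\pi$ in the Herglotz configuration and $-\pi$ in the opposite one, while the data you do control (bands, negative axis, corners) are identical in both configurations and cannot distinguish them; so the argument is circular exactly where the content lies. The paper closes this hole with an input your proposal lacks: it considers the whole pencil $\Theta^{(1)}-\varkappa\Theta$, $\varkappa\in\bbR$. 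By the vanishing of the $\bB_k$-periods \eqref{29jul21}, $\Im(\Theta^{(1)}-\varkappa\Theta)$ vanishes at both endpoints of every gap, so Rolle places a zero of the numerator polynomial $Q_\varkappa$ (a rescaled $Q-\varkappa P$) in every gap; a degree count then shows \emph{all} $n+1$ zeros of $Q_\varkappa$ are real for every real $\varkappa$, i.e.\ $Q/P$ takes real values only on $\bbR$. This forces $P/Q$ to be monotone between consecutive zeros of $Q$, which yields the interlacing, and the Herglotz property then follows from the classical equivalence you invoke at the outset. You would need this pencil argument (or some substitute that pins down the relative position of $c_k$ and $c_k^{(1)}$) to make your proof close.
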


\begin{proof}
First we show that all preimages $(Q/P)^{-1}(\vk)$, for $\vk\in\bbR$, are real.
Consider $\Theta^{(1)}_\vk=\Theta^{(1)}-\vk\Theta$, $\vk\in\bbR$.  Since $\Theta^{(1)}_\vk(a_j)=\Theta^{(1)}_\vk(b_j)=0$, each gap contains at least one critical point, that is, a zero of $Q_\vk=Q-\vk P$. Since the number of gaps is $n$ and the degree of $Q_\vk$ is $n+1$, there is one more zero of $Q_\vk$, which was not localized yet. But since $Q_\vk$ is real on the real axis, this remaining zero is also real.

Assume now that $P/Q$ is not monotonic between two consecutive zeros of $Q$. Then this function has a critical point and at least one non-real direction (which starts at this critical point) in the complex plane where $P/Q$ is still real. This contradicts the statement proven in the above paragraph. Thus, the zeros of $Q$ and $P$ are interlacing. Recall that  $P$ and $Q$ are monic polynomials, that is,
$$
\frac{Q(z)}{P(z)}\simeq z,\quad z\to\infty.
$$
By the well known property of functions with interlacing zeros and poles,
$\Re (Q/P)(z)\ge 0$ in the upper half plane and
$$
\frac{P(c_j^{(1)})}{Q'(c_j^{(1)})}>0.
$$
\end{proof}

This property  is inherited by an arbitrary combination $P$ and $Q_\vk$, $\vk\in\bbR$, in particular by
 $P$ and $Q^{(t)}$.


We are in a position to prove the main result of this section.

\begin{theorem}\label{thmain4}
Let $\{\eta_k\}_{k=1}^n$ and $\{\eta^{(t)}_k\}_{k= 1}^n$ be the space and time generating frequencies in the KdV evolution, respectively. Then, arbitrary variations  $\{\dot\eta_k\}_{k=1}^n\in\bbR^n$ and $\{\dot\eta^{(t)}_k\}_{k=1}^n\in\bbR^n$ can be obtained by unique variations of the ramification points of $\cR$, i.e., of $\{\dot a_k,\dot b_k\}_{k=1}^n$ with the fixed additional ramification points zero and infinity.
\end{theorem}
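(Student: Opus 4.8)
The theorem asserts exactly that the linear map
$D\Phi\colon \{\dot a_k,\dot b_k\}_{k=1}^n \mapsto \{\dot\eta_k,\dot\eta^{(t)}_k\}_{k=1}^n$
supplied by Proposition~\ref{pr120sept21} (equations \eqref{822sept21} and \eqref{622sept21}) is a bijection of $\bbR^{2n}$. I would prove this by showing its kernel is trivial: if $\dot\eta_k=\dot\eta^{(t)}_k=0$ for all $k$, then $\dot a_k=\dot b_k=0$. Since $\{O^{(k)}\}_{k=1}^n$ is a basis of the polynomials of degree $\le n-1$, the two homogeneous blocks say precisely that
\[
\sum_{\xi\in\{a_j,b_j\}}\frac{r(\xi)P(\xi)}{T'(\xi)}\,\dot\xi=0,\qquad \sum_{\xi\in\{a_j,b_j\}}\frac{r(\xi)Q^{(t)}(\xi)}{T'(\xi)}\,\dot\xi=0,
\]
for every polynomial $r$ with $\deg r\le n-1$.

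\textbf{Exact-differential reduction.} The clean way to exploit this is via exact differentials on $\cR$. Since $d\dot\Theta$ always has vanishing $\bB$-periods and vanishing residues (the parity argument of Proposition~\ref{pr120sept21}), the condition $\dot\eta_k=0$ for all $k$ is equivalent to all periods of $d\dot\Theta$ vanishing, i.e.\ to $d\dot\Theta=dg$ for a single-valued meromorphic $g$ on $\cR$. Oddness under $(z,w)\mapsto(z,-w)$ together with the requirement of simple poles only at the branch points $a_j,b_j$ and regularity at $0,\infty$ forces $g=zm/\sqrt{T}$ with $\deg m\le n-1$; matching the double poles at $a_j$ via \eqref{322sept21} gives $\dot a_j=-2a_jm(a_j)/P(a_j)$ and likewise $\dot b_j=-2b_jm(b_j)/P(b_j)$. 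Running the same argument for $\Theta^{(t)}$, whose differential is also regular at infinity by \eqref{522sept21}, yields $d\dot\Theta^{(t)}=dg^{(t)}$ with $g^{(t)}=zm^{(t)}/\sqrt{T}$, $\deg m^{(t)}\le n-1$, and $\dot a_j=-2a_jm^{(t)}(a_j)/(3Q^{(t)}(a_j))$, similarly at $b_j$.

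\textbf{Collapse to a scalar.} Equating the two expressions for each $\dot a_j,\dot b_j$ shows the polynomial $3Q^{(t)}m-Pm^{(t)}$ vanishes at all $2n$ points $a_j,b_j$; as its degree is at most $2n$, it equals $\kappa\widehat T$ with $\widehat T:=T/z=\prod_{k}(z-a_k)(z-b_k)$, and comparing top coefficients gives $\kappa=3\mu$, where $\mu$ is the leading coefficient of $m$. Hence $P\mid(Q^{(t)}m-\mu\widehat T)$, and evaluating at the zeros $c_j$ of $P$ (where $Q^{(t)}(c_j)=Q(c_j)$ by \eqref{123sept21}) gives $m(c_j)=\mu\,\widehat T(c_j)/Q(c_j)$. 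Substituting this into the interpolation identity $\mu=\sum_j m(c_j)/P'(c_j)$ collapses everything to the single scalar relation $\mu(1-\Sigma)=0$, where
\[
\Sigma:=\sum_{j=1}^n\frac{\widehat T(c_j)}{P'(c_j)\,Q(c_j)}.
\]
If $\Sigma\ne1$, then $\mu=0$, so $m$ has $n$ zeros $c_j$ while $\deg m\le n-1$, forcing $m\equiv0$ and $\dot a_j=\dot b_j=0$; this proves the theorem.

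\textbf{The main obstacle.} Everything reduces to the scalar nonvanishing $\Sigma\ne1$, and this is the only step that genuinely uses the defining period normalization \eqref{29jul21} rather than soft linear algebra. As a first step I would use the interlacing of the zeros of $P$ and $Q$ (Lemma~\ref{l120sept21}) together with the ordering \eqref{setE2}: a direct sign count gives $\widehat T(c_j)<0$, $\operatorname{sign}P'(c_j)=(-1)^{j-1}$, $\operatorname{sign}Q(c_j)=(-1)^{j}$, so each summand is strictly positive and $\Sigma>0$. Positivity alone does not exclude $\Sigma=1$. The hard part will be to upgrade this to $\Sigma>1$: writing $\Sigma=\sum_{\text{zeros of }P}\Res\frac{\widehat T}{PQ}$ and using that the finite residues of $\widehat T/(PQ)$ sum to $1$, one reduces $\Sigma-1$ to the residues at the zeros of $Q$, namely the point $0$ and the critical points $c^{(1)}_i$, and I expect $\Sigma>1$ to follow from showing the latter outweigh the residue at $0$. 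The $n=1$ case (where the elliptic integrals give $\Sigma>1$) and the narrow-gap heuristic ($\Sigma\approx 2n$) support this; establishing it in general—most naturally through a potential-theoretic reading of $\Sigma$ in the comb, or a monotonicity/contour argument tied to \eqref{29jul21}—is where the real work lies.
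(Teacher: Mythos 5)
Your reduction is sound as far as it goes, and it is essentially the transpose of the paper's own argument: the paper proves nondegeneracy of the $2n\times 2n$ matrix of the system \eqref{822sept21}--\eqref{622sept21} by showing there is no nontrivial \emph{left} null vector $(x,y)$, encoded in two polynomials $O^x,O^y$ of degree at most $n-1$, whereas you rule out a nontrivial \emph{right} null vector $(\dot a,\dot b)$, encoded in your $m,m^{(t)}$ via exactness of $d\dot\Theta$ and $d\dot\Theta^{(t)}$. Both routes collapse to exactly the same scalar obstruction: a nontrivial null vector forces
\begin{equation*}
\Sigma:=\sum_{j=1}^n\frac{T_1(c_j)}{P'(c_j)\,Q(c_j)}=1 ,
\end{equation*}
where $T_1=T/z$ is your $\widehat T$; this is precisely \eqref{323sept21} in the paper. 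Up to that point your steps (exactness, the form $g=zm/\sqrt{T}$ with $\deg m\le n-1$, the pole matching, the divisibility and Lagrange interpolation, and the sign count giving $\Sigma>0$ from the interlacing $0<c_n<c^{(1)}_n<\dots<c_1<c^{(1)}_1$ of Lemma~\ref{l120sept21}) are all correct.

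However, you explicitly leave the inequality $\Sigma\neq 1$ (indeed $\Sigma>1$, the paper's \eqref{423sept21}) as an expectation, and this is a genuine gap: it is not a technical remainder but the entire analytic content of the theorem. The paper closes it with precisely the residue bookkeeping you gesture at, plus one decisive trick you are missing. If you split $T_1/(PQ)$ itself, the residue at the zero $z=0$ of $Q$ is \emph{positive} while the residues at the $c^{(1)}_j$ are negative, so the sign of $1-\Sigma$ stays undetermined -- this is exactly where your sketch stalls. Instead, replace $Q$ by the pencil $Q_\vk=Q-\vk P$ (which leaves $\Sigma$ unchanged since $Q_\vk(c_j)=Q(c_j)$ and preserves interlacing for every real $\vk$) and choose $\vk=Q(a_n)/P(a_n)$, so that the extra zero of $Q_\vk$ is moved from $0$ to the band edge $a_n$, where $T_1$ vanishes. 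The partial-fraction identity for $T_1/(PQ_\vk)$, after multiplying by $z$ and letting $z\to\infty$, then reads
\begin{equation*}
1-\Sigma=\sum_{j=1}^{n}\frac{T_1(c^{\vk}_j)}{P(c^{\vk}_j)\,Q_\vk'(c^{\vk}_j)},
\end{equation*}
the paper's \eqref{623sept21}, and now every term on the right is strictly negative: each $c^{\vk}_j$ lies in the open gap $(a_j,b_j)$ where $T_1<0$, while $P(c^{\vk}_j)Q_\vk'(c^{\vk}_j)>0$ by interlacing. Hence $\Sigma>1$, the null vector is trivial, and the square system is uniquely solvable. Without this band-edge choice of $\vk$ (or an equivalent device), your proposal does not prove the theorem.
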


\begin{remark}
In particular, in the above theorem one can set $\dot\eta_k=0$ and choose an arbitrary $\dot \eta_k^{(t)}$.
\end{remark}

\begin{proof}[Proof of Theorem \ref{thmain4}]
We have to show that the linear system \eqref{822sept21}--\eqref{622sept21} is uniquely solvable.
By $\cX$ we denote the matrix of this system.
Assume that the matrix is degenerate, then there exists a nontrivial vector $(x,y)\in \bbR^{2n}$ such that
$$
(x,y)\cX=0,\quad x=\begin{pmatrix} x_1&\dots& x_n\end{pmatrix},\quad
y=\begin{pmatrix} y_1&\dots& y_n\end{pmatrix}.
$$
Using the special form of the matrix $\cX$, we get
\begin{eqnarray}
O^{x}(a_j)P(a_j)+O^{y}(a_j)Q^{(t)}(a_j)=0 \nonumber\\
O^{x}(b_j)P(b_j)+O^{y}(b_j)Q^{(t)}(b_j)=0,\label{915sept21}
\end{eqnarray}
where
$$
O^x(z)=\sum_{k=1}^n x_k O^{(k)}(z),\quad O^y(z)=\sum_{k=1}^n y_k O^{(k)}
(z)
$$
are polynomials of degree at most $n-1$.

Let $T_1(z)=\prod_{j=1}^n(z-a_j)(z-b_j)$, that is, $T(z)=zT_1(z)$.
Since  $O^x(z)P(z)+O^y(z)Q^{(t)}(z)$ is a polynomial of degree at most $n$,
\eqref{915sept21} means that it is collinear to $T_1(z)$,
\begin{equation}\label{1015sept21}
O^x(z)P(z)+O^y(z)Q^{(t)}(z)=C T_1(z).
\end{equation}
If $C=0$, we have $(x,y)=0$ and the theorem is proved. Alternatively, we can normalize $(x,y)$ such that $C=1$. Consequently $O^y$ should be a monic polynomial in this normalization.

Denote by $\{c^{\vk}_j\}_{j=1}^{n+1}$ the zeros of $Q_\vk(z)$. Recall that they are interlacing with $\{c_j\}_{j=1}^n$, the zeros of $P$. In particular,  $\{c^{(t)}_j\}_{j=1}^{n+1}$ are the zeros of $Q^{(t)}(z)$ and $\{c^{(1)}_j\}_{j=1}^{n+1}$, $c_{n+1}^{(1)}=0$, are the zeros of $Q(z)$.

By \eqref{1015sept21} we can restore $O^x$ and $O^y$ using the interpolation formulas
\begin{align}\nonumber
O^{x}(z)=\sum_{j=1}^{n+1}\frac{Q^{(t)}(z)}{(Q^{(t)})'(c^{(t)}_j)(z-c^{(t)}_j)}\frac{T_1(c^{(t)}_j)}{P(c^{(t)}_j)}\\
\label{1115sept21}
O^{y}(z)=\sum_{j=1}^n\frac{P(z)}{P'(c_j)(z-c_j)}\frac{T_1(c_j)}{Q^{(t)}(c_j)}.
\end{align}
The polynomial $O^y$ is automatically of degree $n-1$, but $\deg O^x=n-1$ is equivalent to the statement that $O^y$ is monic, see \eqref{1015sept21} with $C=1$. Thus, if $(x,y)\not=0$, then
by \eqref{1115sept21},
\begin{equation}\label{323sept21}
1=\sum_{j=1}^n\frac{T_1(c_j)}{P'(c_j)Q^{(t)}(c_j)}=\sum_{j=1}^n\frac{T_1(c_j)}{P'(c_j)Q(c_j)}.
\end{equation}

To finalize the proof,  we will show that in fact
\begin{equation}\label{423sept21}
1<\sum_{j=1}^n\frac{T_1(c_j)}{P'(c_j)Q(c_j)}.
\end{equation}
For a fixed $\vk\in\bbR$, consider
$$
\frac{T_1(z)}{P(z)Q_\vk(z)}=\sum_{j=1}^n \frac{1}{z-c_j}\frac{T_1(c_j)}{P'(c_j)Q_\vk(c_j)}
+\sum_{j=1}^{n+1} \frac{1}{z-c^{\vk}_j}\frac{T_1(c^\vk_j)}{P(c^\vk_j)Q'_{\vk}(c^\vk_j)}.
$$
Multiplying by $z$ and passing to the limit $z\to\infty$, we get
$$
1=\sum_{j=1}^n \frac{T_1(c_j)}{P'(c_j)Q(c_j)}
+\sum_{j=1}^{n+1} \frac{T_1(c^\vk_j)}{P(c^\vk_j)Q'_{\vk}(c^\vk_j)}.
$$
Here, like in \eqref{323sept21}, we take into account that $Q_\vk(c_j)=Q(c_j)$. We set now
$$
\vk=\frac{Q(a_n)}{P(a_n)}.
$$
In this case $c^\vk_{n+1}=a_n$ and we have
\begin{equation}\label{623sept21}
1-\sum_{j=1}^n \frac{T_1(c_j)}{P'(c_j)Q(c_j)}=
\sum_{j=1}^{n} \frac{T_1(c^\vk_j)}{P(c^\vk_j)Q'_{\vk}(c^\vk_j)},
\end{equation}
that is, in the RHS the last term, dealing with $j=n+1$, is vanishing.
We point out, see the proof of Lemma \ref{l120sept21}, that each open gap
$(a_j,b_j)$ contains one of zeros of $Q_\vk$. That is, with a suitable enumeration,
$c^\vk_j\in(a_j,b_j)$, $j=1,\dots,n$. By the interlacing property,
$P(c^\vk_j)Q'_{\vk}(c^\vk_j)>0$ and in the gaps $T_1(\l)<0$, for all $\l\in(a_j,b_j)$.
Thus the RHS in \eqref{623sept21} is a sum of negative numbers, and \eqref{423sept21} and simultaneously the theorem are proved.
\end{proof}

\section{The Abel Map. Conjecture \ref{th13sept21}}\label{sect5}

\subsection{Parametrization of Reflectionless Operators by Divisors}
Let
\begin{equation}\label{130sept21}
\cM(z)=-\begin{pmatrix}
-1/m_-& 1\\ 1 &m_+
\end{pmatrix}^{-1}=\frac{1}{m_++m_-}\begin{pmatrix}
m_+m_-&- m_-\\ -m_- &-1
\end{pmatrix},
\end{equation}
where $m_\pm$ were defined in Section \ref{subs1_3}.
This is the Weyl $\cM$-matrix: it is a matrix Herglotz function, with an integral representation
\begin{equation}\label{matrixHerglotzRep}
\cM(z) = B + \int \left( \frac 1{\lambda - z} - \frac{\lambda}{\lambda^2 +1} \right) d\rho(\lambda),
\end{equation}
where $B = B^*$ and $\rho$ is a $2\times 2$ positive matrix-valued measure on $\bbR$ ($\cM$ has no point mass at $\infty$, due to the leading asymptotics of $m_\pm$). $L_V$ is unitarily equivalent to the operator of multiplication by $\lambda$ in $L^2(\bbR, \bbC^2, d\rho(\lambda))$. In particular, $\tr \cM$ is a Herglotz function whose corresponding measure on $\bbR$ is a maximal spectral measure for $L_V$.

We use special notations for diagonal entries of the matrix function $\cM(z)$,
\[
R = -  \frac 1{m_- + m_+},\quad R_1 =  \frac {m_+ m_-}{m_- + m_+}.
\]
They are Herglotz class functions in $\bbC_+$. With the symmetry $\overline{R(\bar z)} = R(z)$,
$\overline{R_1(\bar z)} = R_1(z)$ they are well defined in $\Omega$.
For a continuous potential $V$, the following asymptotic relation holds,
\begin{equation}\label{Rinfinity2}
R(z) = \frac 1{2 \sqrt{-z}} \left(1+ \frac{V(0)}{2z} + o(\lvert z \rvert^{-1}) \right), \qquad z \to -\infty.
\end{equation}

The map
$$
\mathcal{B} : \cR(\E) \to \cD(\E), \; V \mapsto D
$$
is defined by the following construction.

For any $V \in \cR(\E)$, $R(z)$ is  the full-line resolvent function.
It can be restored by its arguments on the real axis (up to a positive multiplier). Due to \eqref{reflectionless}, $R(z)$
takes  purely imaginary boundary values on $\E$. In gaps it is real valued. Since $R$ is holomorphic and strictly increasing on gaps, there exist $\lambda_j \in [a_j, b_j]$ such that $R < 0$ on $(a_j, \lambda_j)$ and $R > 0$ on $(\lambda_j, b_j)$. Note that in particular, we have $\l_j=b_j$ if $R<0$ in the gap and $\l_j=a_j$ if $R>0$ in it. Finally, since $\sigma(L_V)\subset \bbR_+$, $R(\l)>0$ on the negative half axis. Thus the argument of the function is completely defined (for a.e. $\l\in\bbR$). Having in mind the leading term in the asymptotic
\eqref{Rinfinity2}, we get
\begin{equation}\label{resolventproduct}
R(z) = \frac 1{2\sqrt{-z}} \prod_{j=1}^\infty  \frac{(z-\lambda_j)}{\sqrt{(z-a_j)(z-b_j)}}.
\end{equation}

Note that $R$ and $R_1$ are holomorphic functions in $\Omega=\bbC \setminus \E$. Since
$$
m_++m_-=-\frac{1}{R} \quad\text{and}\quad R_1=-R m_+m_-
$$
 if $\lambda_j \in (a_j, b_j)$, then $\lambda_j$ is a pole of exactly one of the functions $m_\pm$, which determines the coordinate $\epsilon_j \in \{ - 1, +1\}$.

In summary, we get the map $\cB$ as follows: to any reflectionless $V$ corresponds the Dirichlet data $D = \{ (\lambda_j, \epsilon_j) \}_{j=1}^\infty \in \prod_{j=1}^\infty I_j$ with each $I_j$ a topological circle, see Definition \ref{def127}.

Our specific constraints \eqref{setE1}, \eqref{setE2} on the set $\sE$ allow us to show that the divisor uniquely determines the $m$-functions (and therefore the potential $V$).

\begin{lemma}
If the set $\E$ is of the form \eqref{setE1}, \eqref{setE2}, the functions $m_\pm$ are uniquely determined by $D= \{ (\lambda_j, \epsilon_j) \}_{j=1}^\infty \in \cD(\E)$ by
\begin{equation}\label{additivesplitmpm}
m_\pm(z) = - \frac 1{2 R_D(z)} \pm \sum_{j: \lambda_j \in (a_j,b_j)} \frac{\epsilon_j}{2R_D'(\lambda_j)} \frac{1 }{\lambda_j - z},
\end{equation}
where $R_D=R$ is given by \eqref{resolventproduct}.
\end{lemma}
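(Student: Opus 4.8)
The plan is to reduce the claim to identifying a single scalar function and then pin that function down by its pole structure together with a Liouville-type argument. As recorded just above the statement, one has $m_+ + m_- = -1/R$ and $R_1 = -R\,m_+m_-$, so the symmetric combination is already forced: $\tfrac12(m_+ + m_-) = -\tfrac1{2R_D}$, where $R=R_D$ is given explicitly in terms of $D$ by \eqref{resolventproduct}. Writing $m_\pm = -\tfrac1{2R_D} \pm \psi$ with $\psi := \tfrac12(m_+ - m_-)$, it remains to prove the identity $\psi(z) = \sum_{j:\,\lambda_j\in(a_j,b_j)} \frac{\epsilon_j}{2R_D'(\lambda_j)}\frac{1}{\lambda_j - z}$. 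Once this is shown, the right-hand side manifestly depends only on $D$, which yields both the formula and the asserted uniqueness.

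First I would determine the poles of $\psi$. By \eqref{resolventproduct}, $R_D$ has a simple zero at each $\lambda_j$ lying in an open gap $(a_j,b_j)$, so $m_+ + m_- = -1/R_D$ has a simple pole there with residue $-1/R_D'(\lambda_j)$. By the construction of the divisor, exactly one of $m_+,m_-$ carries this pole — the one selected by $\epsilon_j$ — while the other stays regular; a short residue computation then gives $\Res_{\lambda_j}\psi = -\epsilon_j/(2R_D'(\lambda_j))$ in either case. These are precisely the residues of the proposed series $S(z)$, since the $j$-th summand has residue $-\epsilon_j/(2R_D'(\lambda_j))$ at $z=\lambda_j$ and is holomorphic elsewhere.

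Next I would show that $\psi$ continues holomorphically across $\sE$, which is what turns the comparison with $S$ into an equality rather than a mere matching of principal parts. The reflectionless condition \eqref{reflectionless} gives $m_+(\lambda+i0) = -\overline{m_-(\lambda+i0)}$ for a.e.\ $\lambda\in\sE$, whence $\psi(\lambda+i0) = \tfrac12(m_+ - m_-)(\lambda+i0) = -\Re m_-(\lambda+i0)$ is real; combined with the Herglotz reflection $m_\pm(\lambda-i0) = \overline{m_\pm(\lambda+i0)}$ this forces $\psi(\lambda-i0) = \psi(\lambda+i0)\in\bbR$, so $\psi$ has no jump across the interior of each band and continues across it. Hence $G := \psi - S$ is single-valued and holomorphic on $\bbC\setminus\{0\}$: the poles at the $\lambda_j$ cancel by the residue computation, and both $\psi$ and $S$ are regular across the bands and on $(-\infty,0)$. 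From the leading asymptotics behind \eqref{Rinfinity2}, $m_+$ and $m_-$ share the same leading behavior as $z\to\infty$, so $\psi(z)\to 0$ and hence $G(z)\to 0$. If $G$ remains bounded near the accumulation point $0$, the singularity there is removable, and Liouville forces $G\equiv 0$, i.e.\ $\psi = S$.

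The main obstacle is precisely the analysis near $z=0$ together with the convergence of the infinite series $S$. Because the gaps accumulate only at the origin, the points $\lambda_j$ cluster there and the summands $\frac{\epsilon_j}{2R_D'(\lambda_j)(\lambda_j - z)}$ must be shown to sum to a function that stays bounded as $z\to 0$; this is where the precise form \eqref{setE1}--\eqref{setE2} of $\sE$ enters, through the control it gives on $R_D$ and $R_D'$ near $0$ via the product \eqref{resolventproduct}. I expect the cleanest route is to read off both the convergence of $S$ and the boundedness of $G$ at $0$ from the fact that $\psi$ is a genuine meromorphic function admitting a convergent Mittag--Leffler expansion, so that $S$ is literally the principal-part expansion of $\psi$; verifying that no additional entire summand survives is then exactly the content of the Liouville step.
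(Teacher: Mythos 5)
Your reduction to the antisymmetric part $\psi=\tfrac12(m_+-m_-)$ and the residue bookkeeping at the $\lambda_j$ are both correct, and your route (pole matching, continuation across the bands, Liouville) is genuinely different from the paper's, which never continues anything across $\sE$ but instead splits the Herglotz \emph{measure} of $-1/R_D$. However, your continuation step is a genuine gap. Almost-everywhere equality of boundary values from above and below does \emph{not} imply analytic continuation across a band: the Cauchy transform of a singular continuous measure whose support is the whole band has real, coinciding boundary values a.e.\ there and yet does not continue. Since $\psi$ is a difference of Herglotz functions whose measures are a priori arbitrary on $\sE$, you must first rule out singular parts of those measures on $\sE$ before any Morera-type argument applies; moreover, even granting continuation across the open bands, the endpoints $a_j,b_j$ remain isolated singularities that must be shown removable. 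This is precisely where the paper's proof invests its real work, using the explicit product \eqref{resolventproduct}: $-1/R_D$ is uniformly bounded on rectangles $[c,d]\times(0,1]$ over compact subintervals of the open bands (so the measure $\sigma$ of $-1/R_D$ is purely absolutely continuous there), $R_D(z)=O(|z-x_*|^{-1/2})$ at band endpoints (no point masses), and a monotonicity argument gives $\sigma(\{0\})=0$; since $\mu_++\mu_-=\sigma$ with $\mu_\pm\ge 0$, all of this transfers to the measures of $m_\pm$. Without this input your function $G=\psi-S$ is not known to be holomorphic on $\bbC\setminus\{0\}$, and the Liouville argument cannot start.

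The second gap you flag yourself but do not close: convergence of $S$ and boundedness of $G$ near the origin. The appeal to a Mittag--Leffler expansion is circular, because $0$ is a non-isolated singularity (the poles $\lambda_j$ accumulate there), and the assertion that the principal parts of $\psi$ sum with no correction terms is exactly the identity you are trying to prove. The paper obtains the needed convergence essentially for free from measure theory: the residues $1/R_D'(\lambda_j)$ are the point masses $\sigma(\{\lambda_j\})$ of a Herglotz measure, so $\sum_j 1/R_D'(\lambda_j)\le\sigma([0,b_1])<\infty$, which is \eqref{7jan1}; the Herglotz representation then reconstructs $m_\pm$ from the split measure up to additive constants $\pm c$, which are eliminated by dominated convergence together with Atkinson's asymptotics $m_\pm(z)=-\sqrt{-z}+o(1)$, $z\to-\infty$. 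Note that your Liouville scheme silently needs this same ingredient: ``$m_+$ and $m_-$ share the same leading behavior'' only yields $\psi(z)=o(\sqrt{|z|})$, not $\psi(z)\to 0$; the $o(1)$ precision \emph{is} Atkinson's theorem, the one the paper invokes. If you import \eqref{7jan1} and the no-singular-part analysis to repair your argument, you will have essentially rebuilt the paper's proof in different clothing.
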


\begin{proof}
Since $m_+ + m_- = - 1/R$, this is a question of splitting the Herglotz function $-1/R$ into two, which can be done by splitting the measure $\sigma$ in its Herglotz representation,
\[
- \frac 1{R(z)} =  \beta + \int \left( \frac 1{\lambda - z} - \frac{\lambda}{1+\lambda^2} \right) d\sigma(\lambda)
\]
(by \eqref{resolventproduct}, there is no point mass at $\infty$).
Note that a priori the origin is the only possible support for the singular component of the measure $\sigma$ on $\sE$.

On $\bbR \setminus \E$, the measure $\sigma$ can only have isolated point masses. The function $-1/R$ has simple poles precisely at $\lambda_j \in (a_j, b_j)$ with residues $\sigma(\{\lambda_j\}) = 1 / R'(\lambda_j)$. Note that
\begin{equation}\label{7jan1}
\sum_{j: \lambda_j \in (a_j, b_j)} \frac 1{R'(\lambda_j)} \le \sigma([0,b_1]) < \infty.
\end{equation}
For any interval $[c,d] \subset (b_{k+1}, a_k)$, by \eqref{resolventproduct}, $-1/R$ is uniformly bounded on the rectangle $[c, d] \times (0,1]$, so $\sigma$ is purely absolutely continuous there. At any endpoint $x_* \in \{ a_k, b_k \mid k\in \bbN\}$, by \eqref{resolventproduct}, $R(z) = O( \lvert z - x_* \rvert^{-1/2})$ as $z \to x_*$, so $\sigma(\{x_*\}) =0$. Finally, since $R$ is strictly positive and increasing on $(-\infty,0)$,
\[
\lim_{\lambda \uparrow 0} ( -  1/ R(\lambda)) = \sup_{\lambda \in (-\infty, 0)} ( -  1/ R(\lambda)) \in (-\infty,0],
\]
so
\[
\sigma(\{0\}) = \lim_{\lambda \uparrow 0} \left( -  \frac{\lambda}{R(\lambda)} \right) = 0.
\]

By the reflectionless condition \eqref{reflectionless}, the absolutely continuous part of $\sigma$ must be split equally between $m_\pm$, and the poles of $-1/R$ at $\lambda_j \in (a_j, b_j)$ must go completely to $m_{\epsilon_j}$. Since $\sigma$ has no singular part on $\E$, this describes the split up to an additive real constant:
\begin{equation}\label{additivesplitmpm2}
m_\pm(z) = \pm c - \frac 1{2 R_D(z)} \pm \sum_{j: \lambda_j \in (a_j,b_j)} \frac{\epsilon_j}{2R_D'(\lambda_j)} \frac{1 }{\lambda_j - z}
\end{equation}
for some $c \in \bbR$. Note that
\[
m_+(z) - m_-(z) = 2 c +  \sum_{j: \lambda_j \in (a_j,b_j)} \frac{\epsilon_j}{R_D'(\lambda_j)} \frac{1 }{\lambda_j - z} \to 2c
\]
as $z \to -\infty$, by dominated convergence justified due to \eqref{7jan1}. However, $m_\pm(z) = - \sqrt{-z} + o(1)$ as $z \to -\infty$ by a result of Atkinson, which implies $c = 0$ and proves \eqref{additivesplitmpm}. If we assume from the start that $V$ is continuous, we can replace this by the more precise
\begin{equation}\label{7-2feb21}
m_\pm(z) = - \sqrt{-z} - \frac{V(0)}{2 \sqrt{-z}} + o(\lvert z\rvert^{-1/2}).
\end{equation}
\end{proof}

Thus, the divisor $D$ uniquely determines the Weyl functions $m_\pm$, and by the Borg--Marchenko theorem, they uniquely determine the restrictions of $V$ to $(0,\pm \infty)$.  Conversely, for any divisor $D$, the formulas \eqref{additivesplitmpm} determine a reflectionless pair $m_\pm$ such that $\cM$ given by \eqref{matrixHerglotzRep} has spectrum $\E$. Since $\E$ contains a half-line, this is in the Kotani--Marchenko class, and there exists a potential $V$ corresponding to the reflectionless pair $m_\pm$. Thus, for a divisor $D$, let us denote by $V_D$ the unique potential in $\cR(\E)$ corresponding to the divisor $D$.

Note that the leading behaviors of $m_\pm$ imply
\[
R_1 = - \frac 12 \sqrt{-z} - \frac{V(0)}{4\sqrt{-z}} + o(|z|^{-1/2}), \qquad z \to -\infty.
\]
In particular, $R_1(z) \to -\infty$ as $z \to -\infty$.

\subsection{The Abel Map and Canonical Products.  Weyl-Titchmarsh Functions in Terms of Canonical Products}

The \emph{complex Green function} $\Phi_{z_0}$ is defined by
\[
\lvert \Phi_{z_0}(z) \rvert = e^{-G_\Omega(z,z_0)}, \qquad \Phi_{z_0}(-1) > 0.
\]
This uniquely determines the character-automorphic function $\Phi_{z_0}$.
Due to the regularity of the domain $\Omega=\bbC\setminus \sE$ and symmetry, this phase normalization gives, for $\l \in [0,\infty) \setminus \E$,
\begin{equation}\label{14jan1}
\lim_{z\to -\infty}\Phi_{\l}(z)=1.
\end{equation}
Note that on the universal covering $\Phi_{z_0}$ represents a Blaschke product and vice versa a Blaschke product constructed with zeros along the orbit $\{\g(\z_0)\}_{\g\in\G}$, $\z_0\in\bbD$, is a complex Green function w.r.t. $z_0=\L(\z_0)$.

As before, $\om(\sF,z_0)$ denotes the harmonic measure of the set $\sF\subset\sE$ computed at $z_0\in\Omega$.
For $\l_j\in(a_j,b_j)$, $\Phi_{\l_j}(z)$ can be represented by means of a conformal mapping on the following comb domain.
\begin{proposition}[see e.g. \cite{SY97}]
Let $\omega_k^{\l_j}=\om(\sE_k,\l_j)$. There exists a system of heights $\{h^{\l_j}_k\}_{k\ge 0,k\not=j}$ such that $\Phi_{\l_j}(z)=e^{i\theta_{\l_j}(z)}$, $z\in\bbC_+$, and $\theta_{\l_j}$ is the conformal mapping onto the half-strip
$$
\{w=u+iv, v>0, u\in(\pi(\om_j^{\l_j}-1),\pi\om_j^{\l_j})\}
$$
with the system of slits
$$
\cup_{k>j}\{\pi\om_k^{\l_j}+iv,\ v\le h^{\l_j}_k\}\cup_{k<j}\{\pi(\om_k^{\l_j}-1)+iv,\ v\le h^{\l_j}_k\}\cup
\{iv,\ v\le h^{\l_j}_0\}.
$$
In this conformal mapping, the last slit corresponds to the negative half-axis, that is, $\theta_{\l_j}(\infty)=-0$ and $\theta_{\l_j}(0)=+0$; and the third normalization condition is $\theta_{\l_j}(\l_j)=\infty$.
\end{proposition}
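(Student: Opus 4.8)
The plan is to realize $\theta_{\l_j}$ as the single-valued analytic completion of the Green function on the simply connected set $\bbC_+$ and then to recover the conformal image by tracing the boundary. First I would observe that, since $\sE\subset\bbR$, the upper half plane $\bbC_+$ lies entirely in $\Omega$, while the pole $\l_j\in(a_j,b_j)$ sits on $\pd\bbC_+=\bbR$ rather than in the interior. Thus $G(\cdot,\l_j)$ is a \emph{positive harmonic} function on the simply connected domain $\bbC_+$ and admits there a single-valued harmonic conjugate; I set $\theta_{\l_j}=(\text{conjugate})+i\,G(\cdot,\l_j)$, normalized by $\theta_{\l_j}(0)=+0$. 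Since $\Im\theta_{\l_j}=G(\cdot,\l_j)>0$, the map $\theta_{\l_j}$ sends $\bbC_+$ into the open upper half plane, and $e^{i\theta_{\l_j}}$ has modulus $e^{-G(\cdot,\l_j)}$. Comparing with the phase normalization \eqref{14jan1} of the complex Green function (recall $\l_j\in[0,\infty)\setminus\sE$) fixes the unimodular ambiguity and gives $\Phi_{\l_j}=e^{i\theta_{\l_j}}$ on $\bbC_+$.

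The heart of the matter is the boundary correspondence, which I would extract from the reflection symmetry $G(\bar z,\l_j)=G(z,\l_j)$ together with the Cauchy--Riemann equations. On each gap $(a_k,b_k)$ and on the negative axis $(-\infty,0)$ --- all contained in $\Omega\cap\bbR$, where $G$ is smooth across $\bbR$ --- the symmetry forces $\pd_y G=0$ on the axis, hence $\pd_x(\Re\theta_{\l_j})=\pd_y G=0$, so $\Re\theta_{\l_j}$ is constant there while $\Im\theta_{\l_j}=G$ rises from $0$ to its single maximum at the unique interior critical point of $G(\cdot,\l_j)$ (the analogue, for pole $\l_j$, of the one-critical-point-per-gap property stated for $G(\cdot,-1)$) and falls back to $0$; this traces a vertical slit of height $h^{\l_j}_k:=\max_{(a_k,b_k)}G(\cdot,\l_j)$, respectively $h^{\l_j}_0:=\max_{(-\infty,0)}G(\cdot,\l_j)$. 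On each band of $\sE$ we have $G=0$ on the band and $G>0$ above, so the Hopf lemma yields $\pd_y G>0$, i.e.\ $\Re\theta_{\l_j}$ is strictly increasing along bands and traces a genuine interval on $\{v=0\}$. To identify the constants, I integrate $\pd_x\Re\theta_{\l_j}=\pd_y G$ across a band $B$ and use that, by symmetry, its two sides contribute equally to $\om(B,\l_j)=\frac1{2\pi}\oint\pd_n G\,ds$; this gives the increment $\pi\,\om(B,\l_j)$, so that $\Re\theta_{\l_j}(a_k)=\pi\,\om(\sE_k,\l_j)=\pi\om^{\l_j}_k$. Finally, near $\l_j$ the complex Green function has a simple zero, $\Phi_{\l_j}(z)\sim \text{const}\,(z-\l_j)$, so $\Re\theta_{\l_j}=\arg\Phi_{\l_j}$ drops by exactly $\pi$ as $\l$ crosses $\l_j$ and $\theta_{\l_j}(\l_j)=\infty$; consequently the gaps with $k>j$ (met before the pole) land at $u=\pi\om^{\l_j}_k$ and those with $k<j$ (met after it) at $u=\pi(\om^{\l_j}_k-1)$, the strip occupies $u\in(\pi(\om^{\l_j}_j-1),\pi\om^{\l_j}_j)$, and using $\om(\sE,\l_j)=1$ (the point $\infty$ is an endpoint of the boundary arc $\sE\cup\{\infty\}$ and carries no harmonic measure) one checks $\theta_{\l_j}(\infty)=-0$, closing the boundary curve consistently with $\theta_{\l_j}(0)=+0$.

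With the full boundary trace in hand, I would conclude that $\theta_{\l_j}$ is a conformal bijection of $\bbC_+$ onto the described half-strip-with-slits domain by the argument principle: the image of $\pd\bbC_+$ covers the boundary of the comb with the slits above, traversed once, so the winding number is $1$ about each interior point and $0$ about each exterior point, whence every point of the comb has a unique preimage; equivalently, one matches $\theta_{\l_j}$ with the unique conformal map of $\bbC_+$ onto this simply connected comb normalized by $\theta_{\l_j}(\l_j)=\infty$ and $\theta_{\l_j}(0)=+0$. The main obstacle I anticipate is the regularity near the accumulation point $0$: infinitely many gaps, and hence slits at $u=\pi\om^{\l_j}_k\to 0^+$, accumulate there, so one must justify that $\theta_{\l_j}$ extends continuously up to the origin --- precisely where Dirichlet regularity of $\Omega$ at $0$ enters --- and that the resulting comb is an admissible simply connected domain to which the argument-principle/boundary-correspondence reasoning applies. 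Controlling the disjointness and the summability of the heights $h^{\l_j}_k$ of these accumulating slits, which ultimately rests on the Widom condition for $\Omega$, is the delicate step that makes the clean comb picture legitimate.
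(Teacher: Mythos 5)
Your proposal is correct and follows essentially the same route as the source on which the paper relies: the paper itself gives no proof of this proposition (it is quoted from \cite{SY97}), and your argument --- forming the analytic completion $u+iG(\cdot,\lambda_j)$ on the simply connected set $\bbC_+$, reading off slit positions from the Cauchy--Riemann equations and symmetry, computing the band increments $\pi\,\omega(B,\lambda_j)$, using the drop of the argument by $\pi$ at the simple zero $\lambda_j$, and concluding by boundary correspondence --- is precisely the standard comb-mapping proof found there. The two points you flag as delicate (exactly one critical point of $G(\cdot,\lambda_j)$ per gap, and the boundary-correspondence/argument-principle step with slits accumulating at the Dirichlet-regular point $0$) are indeed the only places where the machinery of \cite{SY97} is genuinely needed, and you have identified them correctly.
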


According to this statement for the generators $\{\g_k\}$ of $\pi_1(\Omega)$, we have
\begin{equation}\label{128sept21}
\Phi_{\l_j}\circ\g_k=e^{2\pi i\om_k^{\l_j}} \Phi_{\l_j}.
\end{equation}
Respectively, an alternative definition (to Definition \ref{def227})  of the Abel map can be given in terms of characters corresponding to the following canonical products.

\begin{lemma}
For an arbitrary divisor $D \in \cD(\E)$, we define the canonical product $L_D$ as follows,
\begin{equation}\label{11jan211}
L_D(z) = \left\{ \prod_{j\ge 1}\frac{(z-\l_j)}{(z-b_j)\Phi_{\l_j}(z)}\right\}^{\frac 1 2}\prod_{j\ge 1}\Phi_{\l_j}(z)^{\frac{1+\e_j}2},
\end{equation}
where we take the branches of square roots so that $L_D > 0$ on $(-\infty, 0)$.
Then the character of $L_D$ is $\cA(D)$.
\end{lemma}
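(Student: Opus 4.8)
First, observe that the modulus $\lvert L_D\rvert$ is assembled only from the single-valued quantities $\lvert z-\lambda_j\rvert$, $\lvert z-b_j\rvert$ and $\lvert \Phi_{\lambda_j}\rvert = e^{-G(\cdot,\lambda_j)}$; hence $\lvert L_D\rvert$ is single-valued on $\Omega$, so $L_D$ is character-automorphic and it suffices to compute its character $\{\alpha_k\}$, where $\alpha_k$ is defined by $L_D\circ\gamma_k = e^{2\pi i\alpha_k}L_D$. I would first record that, under the Widom hypothesis \eqref{11sept21}, the product \eqref{11jan211} converges locally uniformly: each factor tends to $1$ as $\lambda_j\to b_j$, with summable defect, so that $L_D$ is a well-defined function of bounded characteristic and $\alpha_k=\alpha_k(D)$ depends continuously on $D\in\cD(\E)$.

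Second (the main computation), I would pass to the square, where all the half-integer powers disappear:
\[
L_D^2=\prod_{j\ge1}\frac{(z-\lambda_j)\,\Phi_{\lambda_j}^{\epsilon_j}}{z-b_j}.
\]
Only integer powers of $\Phi_{\lambda_j}$ occur and the rational prefactors are single-valued, so \eqref{128sept21} gives the character of the $N$-th partial product as $\sum_{j\le N}\epsilon_j\omega_k^{\lambda_j}\bmod 1$, with $\omega_k^{\lambda_j}=\omega(\sE_k,\lambda_j)$. The crucial point is the passage to the limit in $\bbR/\bbZ$: the band-edge value of harmonic measure is $\omega_k^{b_j}=\omega(\sE_k,b_j)\in\{0,1\}$ — it equals $1$ exactly when $b_j$ lies in the interior of $\sE_k$ relative to $\sE$, i.e.\ for $j\ge k+1$, and $0$ for $j\le k$ — so $\omega_k^{b_j}\in\bbZ$ and therefore $\epsilon_j\omega_k^{\lambda_j}\equiv\epsilon_j(\omega_k^{\lambda_j}-\omega_k^{b_j})\pmod 1$. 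Since the Abel map is well defined on all of $\cD(\E)$ for every sign pattern, the series $\sum_j\lvert\omega_k^{\lambda_j}-\omega_k^{b_j}\rvert$ converges, the partial sums converge in $\bbR/\bbZ$, and by \eqref{12sept21}
\[
\text{char}_k(L_D^2)=\sum_{j\ge1}\epsilon_j\big(\omega_k^{\lambda_j}-\omega_k^{b_j}\big)=2\,\cA_k(D)\pmod 1.
\]
As $L_D\circ\gamma_k=e^{2\pi i\alpha_k}L_D$ forces $\text{char}_k(L_D^2)=2\alpha_k\bmod 1$, this already yields $\alpha_k\equiv\cA_k(D)\pmod{\tfrac12}$.

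Third, I would remove the residual half-integer ambiguity by a connectedness argument rather than by tracking square-root branches directly. The function $g_k(D):=\alpha_k(D)-\cA_k(D)$ is continuous on the connected torus $\cD(\E)=\prod_j I_j$ and, by the previous step, takes values in the discrete set $\{0,\tfrac12\}\subset\bbR/\bbZ$; hence it is constant. To evaluate it I use the base divisor $D_0=\{b_j\}_{j\ge1}$: there each factor of \eqref{11jan211} degenerates to $\Phi_{b_j}^{\epsilon_j/2}$, and since the pole of $\Phi_{\lambda_j}$ approaches the regular boundary point $b_j$ one has $G(\cdot,b_j)\equiv0$, i.e.\ $\Phi_{b_j}\equiv1$; thus $L_{D_0}\equiv1$ and $\alpha_k(D_0)=0$. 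As $\cA_k(D_0)=0$ as well, the constant is $0$, so $\alpha_k(D)=\cA_k(D)$ for all $k$ and all $D$, which is the assertion.

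The main obstacle is entirely in the second step, and is analytic rather than algebraic: justifying the passage to the limit in $\bbR/\bbZ$. This needs (a) the band-edge evaluation $\omega(\sE_k,b_j)\in\{0,1\}$, which is precisely where the ``$-\omega_k^{b_j}$'' counterterm of \eqref{12sept21} is revealed to be the regularization rendering the otherwise divergent sum $\sum_j\epsilon_j\omega_k^{\lambda_j}$ convergent modulo $1$; and (b) the summability $\sum_j\sup_{\lambda_j\in[a_j,b_j]}\lvert\omega_k^{\lambda_j}-\omega_k^{b_j}\rvert<\infty$, which I would extract from the well-definedness and continuity of the Abel map on $\cD(\E)$ under \eqref{11sept21}. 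Once these are in place the algebra is immediate, and the only genuinely delicate bit — the global $\tfrac12$ in each coordinate, which direct branch bookkeeping for the square root in \eqref{11jan211} would otherwise force one to chase — is dispatched cleanly by continuity in $D$ together with the normalization $L_D>0$ on $(-\infty,0)$ encoded in the base point $D_0$.
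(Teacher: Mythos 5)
Your proof is correct, but it handles the one delicate point of this lemma differently from the paper. The paper's own proof is a two-line sketch: convergence of the product and of the sums in \eqref{12sept21} from the Widom condition \eqref{11sept21}, then a direct computation, via \eqref{128sept21}, of the character of \emph{each factor} of \eqref{11jan211} --- i.e., writing the $j$-th factor as $\bigl((z-\l_j)\Phi_{\l_j}^{\e_j}/(z-b_j)\bigr)^{1/2}$ and tracking the half-integer branch contribution (the argument increment of the single-valued part along each generator $\g_k$), which is exactly where the counterterm $-\om(b_j,\sE_k)$ appears factor by factor. You instead avoid all per-factor branch bookkeeping: you square $L_D$ so that only integer powers of $\Phi_{\l_j}$ survive, obtain $\alpha_k \equiv \cA_k(D) \pmod{\tfrac12}$ from the integrality $\om(b_j,\sE_k)\in\{0,1\}$ (your identification of this band-edge evaluation as the regularization making the sum converge mod $1$ is accurate and consistent with the paper's setup), and then dispose of the residual $2$-torsion ambiguity globally, by continuity of $D\mapsto \alpha_k(D)-\cA_k(D)$ on the connected torus $\cD(\sE)$ together with the normalization $L_{D_0}\equiv 1$, $\cA(D_0)=0$ at the base divisor. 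The trade-off: the paper's route is local and needs nothing beyond convergence of the individual factors, while yours replaces the branch-tracking by a topological argument at the price of needing continuity of the character of $L_D$ in the product topology --- that is, tail estimates \emph{uniform over the divisor torus}, not just convergence for each fixed $D$. This is a genuine extra requirement, but it is the same standard consequence of the Widom condition that underlies the paper's asserted well-definedness and continuity of the Abel map (Section 2.2, from \cite{SY97}), so invoking it is legitimate here and does not create circularity; it would, however, be worth one explicit sentence reducing your claim ``$\alpha_k(D)$ depends continuously on $D$'' to that cited fact, since as stated it carries the whole weight of your third step.
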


\begin{proof}
The convergence of the product as well as of each sum in the definition
\eqref{12sept21} follows from the Widom condition \eqref{11sept21}.  It remains, using \eqref{128sept21}, to compute the character of each factor in the canonical product.
\end{proof}

Our goal in this subsection is to get a representation for the Weyl-Titchmarsh functions in terms of canonical products, see Theorem \ref{th5828}. Essentially, its proof is based on the following theorem.

\begin{theorem}[ {\cite[Theorem D]{SY97}} ]\label{th31}
Let $\E$ be a Dirichlet regular Widom set and $f$ a meromorphic function on $\Omega$ with
$\Im f(z)\ge 0$ for $z\in\bbC_+$ and $\overline{f(\bar z)} = f(z)$. If the poles of $f$ satisfy the condition
\[
\sum_{\l: f(\l) = \infty} G(\l, -1) < \infty,
\]
then $f$ is of bounded characteristic and its inner factor is a quotient of Blaschke products (i.e., $f$ has no singular inner factor).
\end{theorem}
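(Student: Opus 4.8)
The plan is to reduce the statement, via removal of poles, to an assertion about a \emph{holomorphic} Herglotz-type function, and then to produce a \emph{quasibounded} harmonic majorant for $\log^+$ of its modulus; quasiboundedness is precisely what rules out a singular inner factor.

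First I would record that, since $\Im f\ge 0$ and $f$ is analytic on $\bbC_+$, all poles of $f$ are forced onto $\bbR$, hence into the gaps of $\sE$, and $f$ is a genuine Herglotz function on $\bbC_+$ (real-symmetric by hypothesis). Enumerating the poles $\{\l_k\}$ with multiplicity, the assumption $\sum_k G(\l_k,-1)<\infty$ together with the Widom condition \eqref{11sept21} guarantees that the character-automorphic product $B=\prod_k\Phi_{\l_k}$ converges and that its lift to $\bbD$ is an honest Blaschke product (a product of complex Green functions, see the remark following \eqref{14jan1}). Then $g:=fB$ is holomorphic on $\Omega$, real-symmetric, and $|B|\le 1$. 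Because the inner factor of a quotient by a Blaschke product is again a quotient of Blaschke products, it suffices to prove that $g\in N_+(\Omega)$ and that $g$ has no singular inner factor; dividing by $B$ then returns the assertion for $f=g/B$.

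Next I would construct the harmonic majorant. Recall that a character-automorphic function is of bounded characteristic iff $\log|\cdot|$ admits a harmonic majorant on $\Omega$, and that it lies in $N_+(\Omega)$ with trivial singular inner factor iff the least harmonic majorant of $\log^+|\cdot|$ is quasibounded (a Poisson integral against harmonic measure, with no singular part). Since $|B|\le 1$ gives $\log^+|g|\le\log^+|f|$, it is enough to majorize $\log^+|f|$. On $\bbC_+$ I use the Cayley representation $\om:=\frac{f-i}{f+i}:\bbC_+\to\overline{\bbD}$, $f=i\frac{1+\om}{1-\om}$, which yields the pointwise bound $\log^+|f|\le \log 2-\log|1-\om|$ (note $|1-\om|\le 2$). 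The function $1-\om$ has strictly positive real part, hence is outer, so $-\log|1-\om|$ equals the Poisson integral of its boundary values and these are integrable; the same holds on $\bbC_-$ by the symmetry $\overline{f(\bar z)}=f(z)$. Passing this to $\Omega$, I would express the candidate majorant through the harmonic measure $\om(\cdot,z)$ of $\Omega$, i.e.\ via $\int_{\sE}\log|f(\l\pm i0)|\,d\om(\l,z)$, and check the required integrability of $\log|f(\l\pm i0)|$ against $d\om$ near the two delicate points, the accumulation point $0$ and $\infty$, using Dirichlet regularity and the band/gap structure \eqref{setE1}--\eqref{setE2}. This gives $g\in N(\Omega)$, and since the majorant so obtained is a genuine harmonic-measure (Poisson) integral with no point masses --- the poles having been absorbed into $B$ --- it is quasibounded, so $g\in N_+(\Omega)$ with trivial singular inner factor.

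The main obstacle is precisely this last transfer: the Cayley/outer-function estimate is elementary on the physical half-planes $\bbC_\pm$, but the conclusion is a statement about the lift $F=f\circ\Lambda$ to the \emph{universal cover} $\bbD$ of the multiply connected domain $\Omega$. Turning the half-plane Herglotz bound into a genuine (and quasibounded) harmonic majorant on $\Omega$, and controlling the boundary integrability of $\log|f|$ at the accumulation point $0$, is where the Widom hypothesis is indispensable: it is what makes the Green/Blaschke machinery (convergence of $B$, existence of nonconstant bounded character-automorphic functions for every character) available in the first place. Once the quasibounded majorant is in hand, the factorization into outer times Blaschke, and hence the absence of a singular inner factor, is formal.
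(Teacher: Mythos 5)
The paper offers no proof of this statement: it is imported verbatim as Theorem~D of \cite{SY97}, so your proposal can only be measured against the known arguments and against its own internal logic --- and on the latter it has a genuine gap, located exactly where you yourself place ``the main obstacle.'' Flagging the obstacle is not overcoming it, and here the obstacle \emph{is} the theorem. Two specific failures. First, the reduction ``since $|B|\le 1$, it is enough to majorize $\log^+|f|$'' is a dead end: $\log^+|f|=+\infty$ at every pole of $f$, and the poles are interior points of $\Omega$, so $\log^+|f|$ admits no harmonic majorant on $\Omega$ whatsoever. The bound $\log^+|g|\le\log^+|f|$ discards precisely the cancellation $\log|g|=\log|f|-\sum_k G(\cdot,\l_k)$ that multiplying by $B$ was designed to produce; any viable argument must retain it. Second, half-plane estimates alone can never yield the required majorant on $\Omega$: \emph{every} real-symmetric Herglotz function on $\Omega$ satisfies your Cayley/outer bounds on $\bbC_\pm$, including one whose poles violate $\sum_k G(\l_k,-1)<\infty$ --- and for such a function the conclusion is false, since the poles of a function of bounded characteristic must satisfy the Blaschke condition. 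Hence the hypotheses (Blaschke condition on the poles, Widom, regularity) must do their work precisely in the passage from $\bbC_\pm$ to $\Omega$, and your sketch supplies no mechanism for that passage.

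The concluding claim --- that your candidate majorant is quasibounded because it is a harmonic-measure integral ``with no point masses'' --- is circular, and the paper's own objects show why. Singular inner factors do not come from point masses; they come from positive harmonic functions with a.e.\ vanishing boundary values. On $\Omega$ the canonical example is the Martin function at the origin, $M_0=\Im\Theta_0$ (Sect.~\ref{sbs32}, \eqref{76sept21}): the function $e^{i\Theta_0}$ is character-automorphic, of modulus $e^{-M_0}<1$ inside $\Omega$ and of modulus $1$ a.e.\ on $\E$. Thus $g$ and $g\,e^{iN\Theta_0}$ have identical a.e.\ boundary moduli for every $N$, so any argument that, like yours, only compares a.e.\ boundary values of $\log|g|$ with a Poisson integral cannot distinguish them --- i.e., cannot exclude a singular inner factor, which is exactly what the theorem asserts. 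For orientation, the arguments in this circle of ideas (cf.\ \cite{SY97, VYInv, DY}) exploit instead that the zeros and poles of a real Herglotz function interlace in each gap, so that the hypothesis on the poles together with the Widom sum \eqref{11sept21} over the critical points forces the zeros $\{\mu_k\}$ to satisfy the Blaschke condition as well; one then studies the zero- and pole-free function $f\prod_k\Phi_{\l_k}\big/\prod_k\Phi_{\mu_k}$, whose logarithm is harmonic on all of $\Omega$, and the exclusion of a Martin-type (singular) component from that logarithm is where the regularity of $\E$ and the Widom property enter in an essential, quantitative way that your proposal never reaches.
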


Note that due to this theorem, \eqref{11jan211} is the outer-inner decomposition for the product $L_D$.

According to the previous subsection, $D\in\cD(\sE)$ one-to-one corresponds to $V\in\cR(\sE)$. Respectively, the associated Weyl-Titchmarsh functions depend on $D$, compare \eqref{additivesplitmpm}, but
to simplify notation for a moment we drop the index $D$ and write $m_\pm$.

Due to monotonicity, $\lim_{\l\to-0}m_+(\l)$ exists and we denote this value by $m_+(0)$. Let
\begin{equation}\label{9-2feb21}
n_+(z)=m_+(z)-m_+(0).
\end{equation}
Note that $n_+(\l)\le 0$ on the negative half-axis. We define $n_-(z)$ via the reflectionless property:
$$
n_-(\l)=-\overline{n_+(\l)}=m_{-}(\l)+m_+(0), \quad  \l\in\sE.
$$
Note that as before
\begin{equation}\label{dd1-4feb21}
n_+(z)+n_-(z)=-\frac{1}{R(z)}.
\end{equation}
In particular this means that $n_-(0)$, which we define via the limit on $\bbR_-$, is negative.
 By monotonicity, $n_-(\l)\le 0$ for $\l\in\bbR_-$. So, the adjoint second function is
\begin{equation}\label{0-2feb21}
\tilde R_1(z)=\frac{n_+(z)n_-(z)}{n_+(z)+n_-(z)}=\frac{(m_+(z)-m_+(0))(m_-(z)+m_+(0))}{m_+(z)+m_-(z)}.
\end{equation}
Due to our construction, $\tilde R_1(\l)\le 0$ for $\l\in\bbR_-$. Similarly to \eqref{resolventproduct}, there is a collection
$\{\l_j^{(1)}\}_{j\ge 1}$, $\l_j\in[a_j,b_j]$ such that
\begin{equation}\label{dd2-4feb21}
\tilde R_1(z)=-\sqrt{-z}\prod_{j\ge 1}\frac{z-\l_j^{(1)}}{\sqrt{(z-a_j)(z-b_j)}}.
\end{equation}
Moreover we can introduce a divisor $D_1=\{(\l_j^{(1)},\e_j^{1})\}$ such that $\e_j^{1}=\pm 1$ if and only if $\l^{(1)}_j$ is a zero of $n_{\pm}$.

\begin{lemma}\label{lem51}
In terms of $D$ and $D_1$ defined above, the intermediate function $n_+$ admits the following representation,
\begin{equation}\label{1-2feb21}
n_+(z)=-\sqrt{-z}\frac{L_{D_1}(z)}{L_D(z)}.
\end{equation}
\end{lemma}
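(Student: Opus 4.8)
The plan is to identify $n_+$ with the explicit function $g(z):=-\sqrt{-z}\,L_{D_1}(z)/L_D(z)$ by showing that both are character-automorphic functions of bounded characteristic with the same divisor on $\Omega$ and the same boundary modulus on $\sE$, and then invoking a rigidity argument. First I would record that $n_+=m_+-m_+(0)$ is Herglotz with the symmetry $\overline{n_+(\bar z)}=n_+(z)$, and that its only poles are the points $\l_j\in(a_j,b_j)$ with $\e_j=+1$; since these lie in the gaps, $\sum_j G(\l_j,-1)\le\sum_j G(\tilde c_j,-1)<\infty$ by the Widom condition \eqref{11sept21}, so Theorem~\ref{th31} applies and $n_+$ is of bounded characteristic with no singular inner factor. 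The same holds for $g$, since $L_D,L_{D_1}$ are of this type (by \eqref{11jan211}, which is their inner–outer decomposition) and $\sqrt{-z}$ is of bounded characteristic on $\Omega$. It therefore suffices to prove $u:=n_+/g\equiv 1$.

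Next I would match divisors. On $\Omega$ the factor $\sqrt{-z}$ is analytic and non-vanishing, its only branch point $z=0$ lying on $\partial\Omega$. In $L_D$ the factor $(z-\l_j)/\Phi_{\l_j}$ under the square root is analytic and non-zero at $\l_j$ (the simple zero of $\Phi_{\l_j}$ cancels $z-\l_j$), while $\Phi_{\l_j}^{(1+\e_j)/2}$ contributes a simple zero at $\l_j$ exactly when $\e_j=+1$; the $(z-b_j)^{-1/2}$ factors cancel in the quotient $L_{D_1}/L_D$. Hence $g$ has simple poles precisely at $\{\l_j\in(a_j,b_j):\e_j=+1\}$ and simple zeros precisely at $\{\l_j^{(1)}\in(a_j,b_j):\e_j^1=+1\}$, which are exactly the poles of $m_+$ and the interior zeros of $n_+$ (the latter by the defining property of $D_1$, that $\l_j^{(1)}$ is a zero of $n_{\e_j^1}$). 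Consequently $u=n_+/g$ is a character-automorphic function that is holomorphic and non-vanishing on $\Omega$, so $\log|u|$ is a single-valued harmonic function on $\Omega$.

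It remains to check $|u|=1$ a.e.\ on $\sE$. Since $G(\cdot,\l_j)\equiv 0$ on $\sE$ we have $|\Phi_{\l_j}|=1$ there, so $|L_{D^*}|=|L_D|$ on $\sE$ for the sign-flipped divisor $D^*=\{(\l_j,-\e_j)\}$; combined with the identity $L_DL_{D^*}=\prod_j\frac{z-\l_j}{z-b_j}$ read off from \eqref{11jan211}, this gives $|L_D|^2=\bigl|\prod_j\frac{\l-\l_j}{\l-b_j}\bigr|$ on $\sE$, and likewise $|L_{D_1}|^2=\bigl|\prod_j\frac{\l-\l_j^{(1)}}{\l-b_j}\bigr|$. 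Therefore
\[
|g(\l)|^2=\l\,\Bigl|\prod_j\frac{\l-\l_j^{(1)}}{\l-\l_j}\Bigr|,\qquad \l\in\sE.
\]
On the other hand the reflectionless identity \eqref{reflectionless} gives $\overline{n_+}=-n_-$ on $\sE$, so by \eqref{dd1-4feb21} one has $|n_+|^2=-n_+n_-=\tilde R_1/R$, and \eqref{resolventproduct} together with \eqref{dd2-4feb21} shows that $\tilde R_1/R$ has exactly the modulus appearing on the right-hand side above. Hence $|n_+|=|g|$ a.e.\ on $\sE$, i.e.\ $\log|u|=0$ a.e.\ on $\sE$.

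Finally, $\log|u|$ is harmonic on $\Omega$, vanishes a.e.\ on $\sE$, and $u\to 1$ as $z\to-\infty$ (both $n_+$ and $g$ behave like $-\sqrt{-z}$), so it is bounded; by uniqueness of bounded harmonic functions on the Dirichlet-regular domain $\Omega$ it vanishes identically, whence $|u|\equiv 1$ and $u$ is constant (and of trivial character). Since $n_+<0$ and $g<0$ on $(-\infty,0)$, we have $u>0$, forcing $u\equiv 1$, which is \eqref{1-2feb21}. \textbf{The main obstacle} is the analysis at the accumulation point $z=0$, the single delicate boundary point: justifying the convergence of the canonical products, the boundary-modulus identities, and especially the vanishing of $\log|u|$ at the origin (so that the rigidity step genuinely yields $u\equiv$ const) requires the Widom condition and the \emph{a priori} bounded-characteristic structure rather than the formal algebra above. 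One must also track the multiplicative normalization in \eqref{dd2-4feb21}, which is pinned down by the asymptotics $n_\pm(z)=-\sqrt{-z}+o(1)$ as $z\to-\infty$.
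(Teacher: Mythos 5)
Your overall strategy is in fact the paper's strategy: both arguments rest on Theorem~\ref{th31} (bounded characteristic, no singular inner factor), on matching boundary moduli over $\sE$ via the reflectionless identity and the product formulas, on matching the gap zeros and poles, and on a final sign normalization on the negative half-axis. The paper phrases this through the product $n_+n_-=-z\prod_j(z-\l_j^{(1)})/(z-\l_j)$ and the inner--outer factorization of $n_\pm$, while you phrase it through the quotient $u=n_+/g$; up to that repackaging the ingredients are identical.

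However, your closing rigidity step has a genuine gap. You assert that $\log|u|$ is bounded ``because it vanishes a.e.\ on $\sE$ and $u\to1$ as $z\to-\infty$''; that implication is false. The Martin function of $\Omega$ with respect to the origin, $M_0=\Im\Theta_0$, is positive harmonic on $\Omega$, vanishes continuously at every point of $\sE\setminus\{0\}$ and tends to $0$ as $z\to-\infty$, yet is unbounded. So a zero-free character-automorphic function with unimodular boundary values a.e.\ on $\sE$ and the right behavior at $-\infty$ may still carry a factor $\exp\bigl(c\,(M_0+i\widetilde M_0)\bigr)$, a singular ``inner'' factor supported at $z=0$, which your harmonic-function argument cannot exclude. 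In this paper that is not a removable technicality: concentration of mass at the origin is precisely the mechanism behind $\cD_b(\sE)$ and the DCT violation on which the whole program turns. The repair is already contained in your first paragraph: Theorem~\ref{th31} guarantees that $n_+$ and $g$ have \emph{no singular inner factor}, so after cancelling the matching Blaschke parts, $u$ is a ratio of two outer functions with equal boundary modulus, hence a unimodular constant, and positivity of $u$ on $(-\infty,0)$ forces $u\equiv1$. This factorization argument --- which is exactly how the paper concludes --- must replace, not merely supplement, the appeal to uniqueness of bounded harmonic functions. (That appeal is itself shaky as stated: Dirichlet regularity alone does not give the harmonic-measure representation of bounded harmonic functions that the uniqueness claim needs; but the point becomes moot once the outer-function argument is used.)
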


\begin{proof}
Due to \eqref{resolventproduct}, \eqref{dd1-4feb21}, \eqref{0-2feb21}, and \eqref{dd2-4feb21}, we have
$$
n_+(z) n_-(z)=-z\prod_{j\ge 1}\frac{z-\l_j^{(1)}}{z-\l_j}.
$$
We can use  Theorem~\ref{th31}. Since $|n_+(\l)|=|n_-(\l)|$, $\l\in\sE$, the outer parts of both functions coincide. Since the inner parts are the ratio of Blaschke products, they can be recovered up to unimodular constants from $\e_j$ and $\e_j^{(1)}$. The unimodular constants are uniquely defined by the signs of $n_\pm$ on the negative half-axis. As a result we get \eqref{1-2feb21}.
\end{proof}

The remaining parameter $m_+(0)$ can be found from the asymptotics at infinity. Let us mention that all involved function in fact are holomorphic (or at most have a pole) at infinity with respect to the local parameter $\tau$, $\tau^2=1/z$.

\begin{lemma} Let
\begin{equation}\label{5-1feb21}
\cQ_1(D) = \sum_j \left(\frac{a_j + b_j} 2 -  \lambda_j\right)
\end{equation}
and
\begin{equation}\label{6-1feb21}
\cQ_2(D)=\sum_{j\ge 1}M(\l_j)\e_j, \quad M(z)=\Im \Theta(z).
\end{equation}
Then
\begin{equation}\label{e.RDasymptotics}
R_D(z) = \frac 1{2\sqrt{-z}} \left( 1 + \frac{\cQ_1(D)}{z} + O \Big( \frac 1{z^2} \Big) \right), \qquad z \to -\infty
\end{equation}
and
\begin{equation}\label{e.LDoverPhic}
L_D(z) = 1 - \frac{\cQ_2(D)}{\sqrt{-z}} +\frac{\cQ_1(D)-\cQ_1(D_0) - \cQ_2(D)^2}{2z} + O(z^{-3/2}).
\end{equation}
Recall $D_0=\{b_j\}_{j\ge 1}$.
\end{lemma}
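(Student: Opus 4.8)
The plan is to prove both expansions by passing to logarithms, reducing each to a sum over $j$ of elementary pieces, and exponentiating at the very end. The expansion \eqref{e.RDasymptotics} is the elementary half. From \eqref{resolventproduct},
\[
\log\bigl(2\sqrt{-z}\,R_D(z)\bigr)=\sum_{j\ge1}\Bigl(\log(z-\l_j)-\tfrac12\log(z-a_j)-\tfrac12\log(z-b_j)\Bigr).
\]
Expanding each summand in powers of $1/z$, the constant terms cancel and the $1/z$-terms sum to $\cQ_1(D)/z$ with $\cQ_1$ as in \eqref{5-1feb21}; the second-order coefficients are absolutely summable because the three abscissae of the $j$-th gap lie in $(0,b_1]$ and $\sum_j(b_j-a_j)\le b_1<\infty$, so the remainder is $O(1/z^2)$. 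Exponentiating yields \eqref{e.RDasymptotics}.

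For $L_D$ I first simplify the logarithm of \eqref{11jan211}: the square-root bracket and the two appearances of $\Phi_{\l_j}$ collapse to
\[
\log L_D(z)=\frac12\sum_{j\ge1}\log\frac{z-\l_j}{z-b_j}+\frac12\sum_{j\ge1}\e_j\log\Phi_{\l_j}(z).
\]
The first, rational, sum is treated as above; since $\cQ_1(D)-\cQ_1(D_0)=\sum_j(b_j-\l_j)$ (recall $D_0=\{b_j\}$), it equals $\tfrac{1}{2z}\bigl(\cQ_1(D)-\cQ_1(D_0)\bigr)+O(1/z^2)$. Everything now hinges on the behavior of the complex Green functions $\Phi_{\l_j}$ at $-\infty$.

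The key claim is that for $\l$ in a gap, on $(-\infty,0)$,
\[
\log\Phi_{\l}(z)=-G_\Omega(z,\l)=-\frac{2M(\l)}{\sqrt{-z}}+O\bigl((-z)^{-3/2}\bigr).
\]
Its shape is forced by two facts. In the local coordinate $\tau$, $\tau^2=1/z$, in which $\Phi_\l$ is single-valued and analytic near $\infty$, write $\log\Phi_\l=\sum_{k\ge1}b_k\tau^k$. Since $G_\Omega(\cdot,\l)$ vanishes on $\sE\supset[b_1,\infty)$, one has $|\Phi_\l|\equiv1$ on the real $\tau$-axis, which forces $\Re b_k=0$ for every $k$; and since $\l$ is real and $\Omega$ is symmetric, $\Phi_\l$ is real and positive on $(-\infty,0)$ (the axis $\tau\in-i\bbR_+$), which forces $b_k\in\bbR$ for even $k$. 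Hence $b_{2m}=0$: only odd powers of $\tau$ occur, so $\log\Phi_\l$ carries no $1/z$-term and its leading term is the half-integer power $1/\sqrt{-z}$. The value $-2M(\l)$ of that coefficient (in particular the factor $2$) is read off by comparison with the solvable model $\bbC\setminus[b_1,\infty)$ near $\infty$, using $M(z)\sim\sqrt{-z}$. Granting a summable remainder, $\tfrac12\sum_j\e_j\log\Phi_{\l_j}(z)=-\cQ_2(D)/\sqrt{-z}+O((-z)^{-3/2})$, with $\cQ_2$ as in \eqref{6-1feb21} absolutely convergent because $M(\l_j)\le h_j$ and $\sum h_j<\infty$ by \eqref{26sept21}. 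Adding the rational sum gives $\log L_D=-\cQ_2(D)/\sqrt{-z}+(\cQ_1(D)-\cQ_1(D_0))/(2z)+O((-z)^{-3/2})$; exponentiating, the square of the $1/\sqrt{-z}$ term supplies the extra $-\cQ_2(D)^2/(2z)$ and produces \eqref{e.LDoverPhic}.

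The delicate point, and the main obstacle, is the uniformity of the remainder in the key claim as the poles $\l_j\to0$. Because the $\l_j$ accumulate at the boundary point $0$, the fixed-$\l$ expansion near $\infty$ must be quantified with remainder bounded by a constant times $M(\l_j)(-z)^{-3/2}$, uniformly in $j$, so that termwise summation against $\sum_j M(\l_j)<\infty$ is legitimate. This is in effect a boundary-Harnack estimate for the Widom domain $\Omega$; I expect it to follow from the comb representation of $\Phi_{\l_j}$ together with the height summability $\sum h_k<\infty$, but obtaining uniform control of the higher coefficients $b_k(\l_j)$ as the pole approaches $0$ is the technical heart of the argument.
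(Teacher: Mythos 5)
Your skeleton is the same as the paper's: pass to the local coordinate $\tau$, $\tau^2=1/z$, kill the even-order coefficients of $\log\Phi_{\l}$ using the two symmetries ($|\Phi_\l(\l'\pm i0)|=1$ for $\l'>b_1$ and $\Phi_\l>0$ on $(-\infty,0)$), identify the leading coefficient with the Martin function, combine with the rational factor and exponentiate; your algebra for \eqref{e.RDasymptotics} and for the final expansion \eqref{e.LDoverPhic} is correct. However, there are two genuine gaps, both at points you yourself flag. The first is the summation over $j$: you sum the single-pole expansions term by term, which indeed requires a remainder bound of the form $C\,M(\l_j)(-z)^{-3/2}$ \emph{uniform} in $j$ as $\l_j\to 0$, and you explicitly admit you do not prove this, calling it the technical heart. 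This is exactly where the paper takes a different, and essentially necessary, route: by the Widom condition the full product $\Phi_D=\prod_{j\ge 1}\Phi_{\l_j}^{\e_j}$ converges, the arc $\L^{-1}((b_1,\infty])$ is free of accumulation points of its zeros, and hence by Garnett's theorem \cite[Chap.~II, Theorem~6.1]{Gar} the product $\Phi_D$ is itself analytic across that arc, i.e.\ analytic in a full neighborhood of $\tau=0$. The parity argument and the Taylor expansion are then applied to $\Phi_D$ as a \emph{single} function, and its first coefficient is the convergent sum $2i\,\cQ_2(D)$ by uniform convergence of the partial products near $\tau=0$; no uniform term-by-term remainder estimate is ever needed. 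Without this idea (or an actual proof of your uniform boundary estimate), your argument does not close.

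The second gap is the identification of the coefficient: you ``read off'' the value $-2M(\l)$ ``by comparison with the solvable model $\bbC\setminus[b_1,\infty)$''. This presupposes that $\lim_{z\to-\infty}\sqrt{-z}\,G_\Omega(z,\l)$ equals a \emph{universal} constant times the Martin function of $\Omega$, which is precisely the nontrivial content (minimality of the Martin point at infinity plus the normalization) and cannot be supplied by a model computation. The paper proves this honestly: it represents the Green function as
$G_\Omega(\l,\l_*)=\Re\int_{a_j}^{\l_*}\frac{1}{t-\l}\,\frac{R_{D_{c(\l)}}(t)}{R_{D_{c(\l)}}(\l)}\,dt$,
uses the uniform asymptotics $(t-\l)R_{D_{c(\l)}}(\l)=\frac{\sqrt{-\l}}{2}+o(\cdot)$ together with the convergence of divisors $D_{c(\l)}\to D_c$ (the critical points of the Martin function), and concludes
$G_\Omega(\l,\l_*)=\frac{2}{\sqrt{-\l}}\int_{a_j}^{\l_*}R_{D_c}(t)\,dt+o(\cdot)=\frac{2M(\l_*)}{\sqrt{-\l}}+o(\cdot)$,
the last step being the identity expressing $M(\l_*)$ as an integral of $R_{D_c}$ over the gap. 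So the factor $2$ and, more importantly, the proportionality to $M_\Omega(\l_*)$ come from a domain-specific computation, not from the model. In summary: same strategy in outline, but the two steps you leave open are exactly the ones that require the paper's two key mechanisms (analyticity of the full product across the arc, and the integral representation of $G_\Omega$ via $R_{D_c}$).
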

\begin{proof}
The first relation \eqref{e.RDasymptotics} is well known.

For an arbitrary $\l_*\in\bbR_+\setminus\sE$,
 passing to the local coordinate $z=1/\tau^2$ we have an analytic function
$$
\phi(\tau)=\Phi_{\l_*}(1/\tau^2)
$$
in a certain vicinity of the origin. Due to our normalization $\phi(0)=1$, therefore $\log \phi(\tau)$ is  well defined and possesses a power series expansion
\begin{equation}\label{1feb21}
\log \phi(\e)=C_1\tau+C_2\tau^2+C_3\tau^3+\dots .
\end{equation}
Finally, we note that $\Phi(z)$ is real valued on the negative half-axis and
$$
|\Phi(\l\pm i0)|=1 \quad\text{for}\quad \l>b_1.
$$
This implies the symmetry property  $\phi(-\tau)=1/\phi(\tau)$. Consequently, in \eqref{1feb21} all even coefficients vanish, $C_{2n}=0$. In particular, we get
$$
\Phi_{\l_*}(z)=1+\frac{C_1}{\sqrt{z}}+ \frac{C_1^2}{2z}+O\left(\frac{1}{\sqrt{z}^3}\right),\quad z\to\infty.
$$

Now we will obtain a representation for $C_1$ in terms of special functions associated with the domain $\Omega$. For $\lambda_* \in (a_j, b_j)$ and $\l\in\bbR_-$, we have
\[
G_\Omega(\l, \l_*) = \Re \int_{a_j}^{\lambda_*} \frac 1{t - \l} \frac{R_{D_{c(\lambda)}}(t)}{R_{D_{c(\lambda)}}(\lambda)} \, dt.
\]
with a suitable divisor $D_{c(\l)}=\{(c_j({\l}),1)\}_{j\ge 1}$. We will use that uniformly
\[
(t - \lambda)
R_{D_{c(\l)}}
(\l) = \frac{\sqrt{-\lambda}}2 + o(\dots), \qquad \lambda \to -\infty.
\]
Since $D_{c(\l)}\to D_c$, where $D_c$ is the divisor of critical points of the Martin function $\Im\Theta$, we get the asymptotics
\[
G_\Omega(\l, \l_*) = \frac 2{\sqrt{-\l}} \int_{a_j}^{\l_*} R_{D_c}(t) dt + o(\dots) = \frac 2{\sqrt{-\lambda}} M(\lambda_*) + o(\dots)    \qquad  \l \to -\infty.
\]

By the Widom condition, an arbitrary product
$$
\Phi(z)=\prod_{j\in I}\Phi_{\l_j}(z)
$$
is convergent for an arbitrary collection of indexes $I$. Moreover an arbitrary open arc $\Lambda^{-1}((b_1,\infty])$ is free of accumulation points of zeros of this product. It is well known, see e.g. \cite[Chap. II, Theorem 6.1]{Gar}, that $\Phi(\Lambda(\z))$  is analytic on all these arcs.
Let
$$
\Phi_D(z)=\prod_{j\ge 1}\Phi_{\l_j}(z)^{\e_j}=\frac{\prod_{j:\e_j=1} \Phi_{\l_j}(z)}{\prod_{j:\e_j=-1} \Phi_{\l_j}(z)}.
$$
Then for this ratio of Blaschke products we get
$$
\Phi_{D}(z)=1+\frac{C_1(D)}{\sqrt{z}}+ \frac{C_1(D)^2}{2z}+O\left(\frac{1}{\sqrt{z}^3}\right),\quad z\to\infty
$$
with
$$
C_1(D)=2i \cQ_2(D).
$$

Finally, we have
\begin{align*}
& {L_D(z)}
 = \left\{\prod_{j\ge 1}\frac{z-\l_j}{z-b_j} \Phi_D(z)\right\}^{\frac 1 2} \\
& = \left( 1+\frac{\cQ_1(D)-\cQ_1(D_0)}{2z}+ O(z^{-2}) \right)
\left( 1 - \frac{2 \cQ_2(D)}{\sqrt{-z}} - \frac{4 \cQ_2(D)^2}{2z} + O(z^{-3/2}) \right)^{1/2} \\
& = \left( 1+\frac{\cQ_1(D)-\cQ_1(D_0)}{2z}+ O(z^{-2}) \right) \left( 1 - \frac{\cQ_2(D)}{\sqrt{-z}} - \frac{\cQ_2(D)^2}{2z} + O(z^{-3/2}) \right) \\
& = 1 - \frac{\cQ_2(D)}{\sqrt{-z}} +\frac{\cQ_1(D)-\cQ_1(D_0) - \cQ_2(D)^2}{2z} + O(z^{-3/2})
,
\end{align*}
which establishes \eqref{e.LDoverPhic}.
\end{proof}

\begin{theorem}\label{th5828}
Let $m_+=(m_D)_+$ be the Weyl-Titchmarsh function defined by a divisor $D$. Let the divisor $D_1$ be associated to this function by \eqref{1-2feb21}. Then
\begin{equation}\label{3-2feb21}
m_+(z)=\cQ_2(D)-\cQ_2(D_1)-\sqrt{-z}\frac{L_{D_1}(z)}{L_D(z)}.
\end{equation}
\end{theorem}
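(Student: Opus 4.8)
The plan is to combine the multiplicative representation of $n_+$ from Lemma~\ref{lem51} with the definition \eqref{9-2feb21} and then to pin down the single remaining additive constant $m_+(0)$ by matching asymptotics as $z\to-\infty$. Indeed, \eqref{9-2feb21} reads $m_+(z)=m_+(0)+n_+(z)$, so inserting \eqref{1-2feb21} gives at once
\[
m_+(z)=m_+(0)-\sqrt{-z}\,\frac{L_{D_1}(z)}{L_D(z)}.
\]
Comparing with the target identity \eqref{3-2feb21}, everything reduces to the single scalar equality $m_+(0)=\cQ_2(D)-\cQ_2(D_1)$.

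To extract this constant I would expand the ratio $L_{D_1}(z)/L_D(z)$ as $z\to-\infty$ using the asymptotics \eqref{e.LDoverPhic} for both $L_D$ and $L_{D_1}$ (with the common reference divisor $D_0$, whose $\cQ_1(D_0)$-contributions cancel in the ratio). Writing $s=1/\sqrt{-z}$, each factor has the form $1-\cQ_2(\cdot)\,s+O(s^2)$, and a short division yields
\[
\frac{L_{D_1}(z)}{L_D(z)}=1+\bigl(\cQ_2(D)-\cQ_2(D_1)\bigr)\frac{1}{\sqrt{-z}}+O\!\left(\frac{1}{z}\right).
\]
Multiplying by $-\sqrt{-z}$ then gives
\[
n_+(z)=-\sqrt{-z}-\bigl(\cQ_2(D)-\cQ_2(D_1)\bigr)+O\!\left(\frac{1}{\sqrt{-z}}\right),\qquad z\to-\infty,
\]
so that $m_+(z)=m_+(0)-\cQ_2(D)+\cQ_2(D_1)-\sqrt{-z}+O(|z|^{-1/2})$. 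I note in passing that the $O(s^2)$ coefficients of \eqref{e.LDoverPhic} (carrying $\cQ_1$) only feed into the $O(|z|^{-1/2})$ remainder after multiplication by $-1/s$, so they cannot affect the constant term of $n_+$.

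The concluding step is to read off the $z^0$-coefficient from the independently known expansion \eqref{7-2feb21}, namely $m_+(z)=-\sqrt{-z}-\tfrac{V(0)}{2\sqrt{-z}}+o(|z|^{-1/2})$, whose constant term \emph{vanishes}. Matching constant terms forces $m_+(0)-\cQ_2(D)+\cQ_2(D_1)=0$, i.e.\ $m_+(0)=\cQ_2(D)-\cQ_2(D_1)$, and substituting this back establishes \eqref{3-2feb21}. The one genuinely nonroutine ingredient is precisely the absence of a constant term in the large-negative-$z$ behavior of $m_+$; this is exactly the Atkinson-type asymptotics recorded in \eqref{7-2feb21}, and it is what renders the additive normalization $m_+(0)$ computable in closed form. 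The remaining work is the elementary bookkeeping of the two expansions, which I would treat as routine.
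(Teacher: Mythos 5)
Your proposal is correct and follows essentially the same route as the paper: the paper's (very terse) proof likewise fixes the additive constant $m_+(0)$ in \eqref{9-2feb21} by comparing the expansion of $-\sqrt{-z}\,L_{D_1}(z)/L_D(z)$ obtained from \eqref{e.LDoverPhic} against the Atkinson-type asymptotics \eqref{7-2feb21}, which has no constant term. Your write-up simply makes explicit the bookkeeping that the paper leaves implicit.
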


\begin{proof}
Recall that $m_+(z)$ is fixed by its asymptotics at infinity \eqref{7-2feb21}. Using \eqref{e.LDoverPhic} and
\eqref{1-2feb21}  we get a correction constant $m_+(0)$ in \eqref{9-2feb21} and respectively have  \eqref{3-2feb21}.
\end{proof}

\subsection{The Hardy Subspace $\cH^2_D$ Generated by $(m_D)_+$}

The Hardy space $\hat H^2(\a)$ is defined as the collection of character-automorphic functions $f$ with the character $\a$ such that  $|f(z)|^2$ possesses a harmonic majorant, i.e.,
$$
 |f(z)|^2\le u(z)
$$
with a certain harmonic function $u(z)$ in $\Omega$.

For Widom domains an equivalent definition is the following:
$f\in \hat H^2(\a)$ if
\begin{itemize}
\item[a)] $f$ is character automorphic,  $f\circ\g=e^{2\pi i\a(\g)} f$, $\g\in\pi_1(\Omega)$,
\item[b)] $f$ is of Smirnov class, $f\in \cN_+(\Omega)$,
\item[c)] the boundary values of $f$ are square-integrable with respect to the harmonic measure in the domain,
\begin{equation}\label{229sept21}
\lVert f \rVert_{\hat H^2(\a)}^2 = \int_\E ( \lvert f(\l+i0) \rvert^2 + \lvert f(\l - i0)\rvert^2 ) \, \omega(d\l,-1) < \infty.
\end{equation}
\end{itemize}

However, one can substitute the harmonic measure in \eqref{229sept21} by an arbitrary measure $\sigma$ that is of Szeg\"o class in the following sense.

\begin{proposition}
Let $d\sigma=|\vp|^2 d\om(d\l, -1)$, where $\vp$ is an outer character automorphic function with the character $\a_\vp$. If $f$ obeys a), b) and
\begin{equation}\label{329sept21}
 \int_\E ( \lvert f(\l+i0) \rvert^2 + \lvert f(\l - i0)\rvert^2 ) \, d\sigma < \infty,
\end{equation}
then $\vp f\in \hat H^2(\a+\a_\vp)$. Conversely, if we define
$$
\hat H^2_\sigma(\a)=\{f\in \cN_+(\Omega),\ f\circ\g=e^{2\pi i\a(\g)} f, \ \text{\eqref{329sept21} holds}\},
$$
then $f\in \hat H^2_\sigma(\a)$ implies $f/\vp\in \hat H^2(\a-\a_\vp)$ and
$$
\|f/\vp\|_{\hat H^2(\a-\a_\vp)}=\|f\|_{\hat H_\s^2(\a)}
$$
with the norm defined by the integral \eqref{329sept21}.
\end{proposition}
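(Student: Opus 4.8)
The plan is to read this as a change-of-weight (multiplier) principle and to prove it by transporting everything to the universal cover through the uniformization $\L:\bbD/\G\to\Omega$, where Smirnov class, outer functions, and $H^2$ carry their classical meanings. I would lean on the equivalent description of $\hat H^2(\cdot)$ recorded just above (conditions a), b), c)): a character-automorphic function of Smirnov class whose boundary values are square-integrable against harmonic measure lies in the space. Thus each membership claim reduces to checking three things — the character, membership in $\cN_+(\Omega)$, and boundary $L^2$ — and the isometry falls out of the single boundary identity $d\sigma=|\vp|^2\,\om(d\l,-1)$.

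\textbf{Forward direction.} Assuming $f$ obeys a), b), and \eqref{329sept21}, I verify the three conditions for $\vp f$. The character is immediate: from $\vp\circ\g=e^{2\pi i\a_\vp(\g)}\vp$ and $f\circ\g=e^{2\pi i\a(\g)}f$ the product picks up $e^{2\pi i(\a+\a_\vp)(\g)}$, so $\vp f$ is character-automorphic with character $\a+\a_\vp$. Membership in $\cN_+(\Omega)$ holds because an outer function is of Smirnov class and $\cN_+(\Omega)$ is closed under multiplication (on the lift, $N_+(\bbD)$ is a ring). Finally, using $|(\vp f)(\l\pm i0)|^2=|\vp(\l\pm i0)|^2|f(\l\pm i0)|^2$ together with $d\sigma=|\vp|^2\,\om(d\l,-1)$,
$$
\int_{\sE}\big(|(\vp f)(\l+i0)|^2+|(\vp f)(\l-i0)|^2\big)\,\om(d\l,-1)=\int_{\sE}\big(|f(\l+i0)|^2+|f(\l-i0)|^2\big)\,d\sigma<\infty ,
$$
which is exactly condition c) for $\vp f$ and simultaneously yields the isometry $\|\vp f\|_{\hat H^2(\a+\a_\vp)}=\|f\|_{\hat H^2_\sigma(\a)}$.

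\textbf{Converse.} This direction amounts to the surjectivity of the forward isometry, realized by division by $\vp$. Given $g\in\hat H^2(\a+\a_\vp)$, I divide by $\vp$: the character shifts by $-\a_\vp$, back to $\a$; the boundary identity turns the $\om$-norm of $g$ into the $d\sigma$-norm of $g/\vp$, since $\int_{\sE}|g/\vp|^2\,d\sigma=\int_{\sE}\tfrac{|g|^2}{|\vp|^2}|\vp|^2\,\om(d\l,-1)=\int_{\sE}|g|^2\,\om(d\l,-1)$; and the only delicate point is that the quotient remain in $\cN_+(\Omega)$. Lifting to $\bbD$, with $g$ lifting to some $G\in N_+(\bbD)$ and $\vp$ to an outer $\Psi\in N_+(\bbD)$, I invoke the classical facts that the reciprocal of an outer function is again outer (hence in $N_+(\bbD)$) and that $N_+(\bbD)$ is closed under products; thus $G/\Psi\in N_+(\bbD)$ and $g/\vp\in\cN_+(\Omega)$. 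Consequently $g/\vp\in\hat H^2_\sigma(\a)$ with $\vp\cdot(g/\vp)=g$, exhibiting multiplication by $\vp$ as an isometric bijection of $\hat H^2_\sigma(\a)$ onto $\hat H^2(\a+\a_\vp)$, which is the content of the statement.

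\textbf{Main obstacle.} Every step except the Smirnov-class stability under division is bookkeeping. The genuine content, and the reason the hypothesis that $\vp$ be \emph{outer} is indispensable, is precisely that $g/\vp$ stays in $\cN_+(\Omega)$: one must rule out that dividing by $\vp$ manufactures a singular inner factor in the denominator, which is exactly what outerness forbids (were $\vp$ to carry a singular inner factor $S$, then $1/S$ would leave the Smirnov class). I would prove this by passing to $\bbD$ and citing the standard characterization of the Smirnov class. Two routine but necessary points of care are that the boundary values on the two sides of $\sE$ be handled symmetrically, and that one record that the harmonic measure $\om(\cdot,-1)$ pulls back to the Poisson measure at $\L^{-1}(-1)\in\bbD$, so that all the boundary computations transfer verbatim between $\Omega$ and $\bbD$.
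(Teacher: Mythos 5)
Your argument is correct in substance, and there is nothing in the paper to compare it against: the proposition is stated there without proof, precisely because the intended argument is the routine three-point check you carry out (character bookkeeping, stability of $\cN_+(\Omega)$ under multiplication by an outer function and under division by an outer function via the lift to $\bbD$, and the boundary identity $d\sigma=|\vp|^2\om(d\l,-1)$). You also correctly isolate the only genuine ingredient — that $g/\vp$ stays in the Smirnov class because $\vp$ is outer — and the caveat about handling the two sides of $\sE$ symmetrically is apt: the isometry needs $|\vp(\l+i0)|=|\vp(\l-i0)|$ a.e.\ on $\sE$, which holds for the real-symmetric outer functions the paper actually uses (e.g.\ the weight \eqref{230sept21}).

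One point you should flag explicitly rather than silently repair: what you prove in your ``Converse'' paragraph is not the converse as literally printed. You prove that $g\in\hat H^2(\a+\a_\vp)$ implies $g/\vp\in\hat H^2_\sigma(\a)$ with equal norms (surjectivity of the forward isometry). The printed claim is that $f\in\hat H^2_\sigma(\a)$ implies $f/\vp\in\hat H^2(\a-\a_\vp)$ with equal norms, and that statement is false as written: membership in $\hat H^2_\sigma(\a)$ controls $\int_\sE |f|^2|\vp|^2\,d\om$, whereas membership of $f/\vp$ in $\hat H^2(\a-\a_\vp)$ requires $\int_\sE |f|^2|\vp|^{-2}\,d\om<\infty$, and these are not comparable. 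Concretely, $f=1/\vp$ is outer, character-automorphic with character $-\a_\vp$, and has $\|f\|_{\hat H^2_\sigma(-\a_\vp)}=1$, yet $f/\vp=1/\vp^2$ need not have square-integrable boundary values against $\om$ (choose $\vp$ outer with $\log|\vp|\in L^1(d\om)$ but $|\vp|^{-4}\notin L^1(d\om)$). So the printed converse has the weight attached to the wrong space; the coherent statement — the one the rest of the paper uses, and the one you proved — is that multiplication by $\vp$ is an isometric bijection of $\hat H^2_\sigma(\a)$ onto $\hat H^2(\a+\a_\vp)$, equivalently that division by $\vp$ maps $\hat H^2(\b)$ isometrically onto $\hat H^2_\sigma(\b-\a_\vp)$. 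Your proof of that is complete; just say that this is the corrected form of the statement you are establishing.
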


Since the divisor $D_0=\{b_j\}_{j\ge 1}$ plays a special role in our construction, we will work with the following system of Hardy spaces.

\begin{definition}
We use the special notation $\hat\cH^2(\a)$ for the Hardy space $\hat H^2_{\s}$ with the following measure,
\begin{equation}\label{230sept21}
2\pi \, d\sigma=|R_{D_0}(\l)| \, d\l=\sqrt{\prod_{j\ge 1}\frac{\l-b_j}{\l-a_j}}\frac{d\l}{2\sqrt{\l}},\quad \l\in\sE.
\end{equation}
Theorem \ref{th31} implies that this measure belongs to the Szeg\"o class.
\end{definition}

We will associate a Hardy space $\cH^2_D$ to an arbitrary divisor $D$ such that $D\not\in\cD_b(\sE)$. We note that the collection $\cD_b(\sE)$, see \eqref{129sept21}, describes all divisors $D$ for which $R_D$ contains a singular component in its integral representation. In other words, the associated 1-D  Schr\"odinger operator has $0$ as an eigenvalue.

\begin{lemma}
Assume that $\Omega$ is a regular domain of Widom type such that $\sE$ obeys \eqref{33sept21}. The associated Schr\"odinger operator $L_{V_D}$ has $0$ as an eigenvalue if and only if $D \in \cD_b(\sE)$.
\end{lemma}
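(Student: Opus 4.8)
The plan is to establish, under the standing Widom hypothesis and \eqref{33sept21}, the chain
\[
0\text{ is an eigenvalue of }L_{V_D}\iff \tau_{R_D}(\{0\})>0\iff D\in\cD_b(\sE),
\]
where $\tau_{R_D}$ denotes the measure in the Herglotz representation of $R_D$. The first equivalence is spectral theory; the second is an explicit computation organized around the reference divisor $D_0=\{b_j\}_{j\ge1}$.

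First I would record the spectral meaning of $R_D$. By \eqref{130sept21}, $R_D=\cM_{22}$ is the diagonal Green function $\langle\delta_0,(L_{V_D}-z)^{-1}\delta_0\rangle$, so $\tau_{R_D}$ is the spectral measure of $L_{V_D}$ at $\delta_0$. The same local analysis as for $\sigma$ in the proof of \eqref{additivesplitmpm} shows $\tau_{R_D}$ is absolutely continuous on each band, carries no mass at the band edges $\{a_j,b_j\}$ (by the $O(|z-x_*|^{-1/2})$ bound from \eqref{resolventproduct}) and none in the gaps (where $R_D$ is real-analytic); hence its only possible singular component is an atom at the origin, of mass $\tau_{R_D}(\{0\})=\lim_{\lambda\uparrow0}(-\lambda)R_D(\lambda)$. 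Since $\sigma(L_{V_D})=\sE\subset[0,\infty)$ has infimum $0$, an eigenvalue at $0$ is the bottom of the spectrum, and by ground-state positivity for $-\partial_x^2+V$ with $V\in L^\infty(\bbR)$ its (simple) eigenfunction $\psi_0$ may be taken strictly positive; then $\psi_0(0)>0$ and $\tau_{R_D}(\{0\})=|\psi_0(0)|^2/\|\psi_0\|^2>0$. Conversely an atom at $0$ places $0$ in the point spectrum, which yields the first equivalence.

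Next I would reduce the atom to the reference divisor. By \eqref{resolventproduct},
\[
\frac{R_D(z)}{R_{D_0}(z)}=\prod_{j\ge1}\frac{z-\lambda_j}{z-b_j}.
\]
Setting $z=-t$ with $t\downarrow0$, each factor $\frac{\lambda_j+t}{b_j+t}$ lies in $(0,1]$ and decreases to $\lambda_j/b_j$; monotone convergence for the corresponding sum of nonpositive logarithms gives $\lim_{t\downarrow0}\prod_{j}\frac{\lambda_j+t}{b_j+t}=\prod_{j}\frac{\lambda_j}{b_j}=\exp\big(-\sum_j\log\tfrac{b_j}{\lambda_j}\big)$, which is positive exactly when $D\in\cD_b(\sE)$, see \eqref{129sept21}, and is $0$ otherwise. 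Hence
\[
\tau_{R_D}(\{0\})=\Big(\lim_{t\downarrow0}t\,R_{D_0}(-t)\Big)\prod_{j}\frac{\lambda_j}{b_j},
\]
and it remains to show the reference atom $\lim_{t\downarrow0}t\,R_{D_0}(-t)$ is strictly positive. By \eqref{230sept21}, $R_{D_0}(-t)=\frac1{2\sqrt t}\big(\prod_j\frac{b_j+t}{a_j+t}\big)^{1/2}$, so $t\,R_{D_0}(-t)=\tfrac12\big(t\prod_j\frac{b_j+t}{a_j+t}\big)^{1/2}$. The key identity is
\[
\sum_{j\ge1}\log\frac{b_j+t}{a_j+t}=\sum_{j\ge1}\int_{a_j}^{b_j}\frac{ds}{s+t}=\int_{(0,b_1)\setminus\sE}\frac{ds}{s+t},
\]
the integral over the gaps. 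Since $\log t=\log(b_1+t)-\int_0^{b_1}\frac{ds}{s+t}$ and $(0,b_1)$ is the disjoint union of the gaps and $\sE\cap(0,b_1)$, adding these gives
\[
\log\Big(t\prod_{j}\frac{b_j+t}{a_j+t}\Big)=\log(b_1+t)-\int_{\sE\cap(0,b_1)}\frac{ds}{s+t},
\]
whose right-hand side tends, as $t\downarrow0$, to $\log b_1-\int_{\sE\cap[0,b_1]}\frac{d\lambda}{\lambda}$, a finite number precisely by \eqref{33sept21}. Therefore $t\prod_j\frac{b_j+t}{a_j+t}\to L:=b_1\exp\big(-\int_{\sE\cap[0,b_1]}\tfrac{d\lambda}{\lambda}\big)\in(0,\infty)$ and $\lim_{t\downarrow0}t\,R_{D_0}(-t)=\tfrac12\sqrt L>0$. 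Combining, $\tau_{R_D}(\{0\})=\tfrac12\sqrt L\prod_j\frac{\lambda_j}{b_j}$, which is positive iff $D\in\cD_b(\sE)$, closing the chain.

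The one genuinely delicate point, and the main obstacle, is that the atom cannot be read off by naively evaluating the product in \eqref{resolventproduct} at $\lambda=0$: the tail of that product, over gaps of size comparable to $|\lambda|$, conspires with the prefactor $(-\lambda)^{-1/2}$ to upgrade a would-be $|\lambda|^{-1/2}$ growth of $R_{D_0}$ into a genuine simple pole $|\lambda|^{-1}$. Isolating this mechanism is exactly the band/gap complementarity computation above, and it is there that hypothesis \eqref{33sept21} enters in its sharp form $\int_{\sE\cap[0,b_1]}\frac{d\lambda}{\lambda}<\infty$; the remaining steps are monotone-convergence bookkeeping together with the standard identification of the bottom-of-spectrum eigenfunction as a positive ground state.
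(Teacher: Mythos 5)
Your proof is correct, and its skeleton coincides with the paper's: both reduce the question to whether the resolvent function $R_D$ carries a point mass at the origin, i.e., whether $\lim_{z\uparrow 0}(-z)R_D(z)>0$. The differences are two. First, in the spectral-theoretic step the paper argues through the matrix measure \eqref{matrixHerglotzRep}: by \eqref{setE2} the origin is the only possible location of a singular part, and since the other diagonal entry $(R_D)_1$ stays bounded at $0$, an eigenvalue at $0$ forces the atom to sit in the measure of $R_D=\cM_{22}$. You instead invoke ground-state positivity (an eigenvalue at $0$ is at the bottom of $\sigma(L_{V_D})=\sE$, so its eigenfunction can be taken strictly positive and in particular does not vanish at $x=0$, making the atom visible in the diagonal Green function); this is equally valid and avoids needing any information about $(R_D)_1$, at the price of importing a standard external fact about one-dimensional Schr\"odinger operators. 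Second --- and this is where your write-up adds real content --- the paper disposes of the equivalence of $\lim_{z\uparrow 0}(-z)R_D(z)>0$ with \eqref{53sept21} in a single unproved sentence, whereas you actually prove it: the factorization $tR_D(-t)=tR_{D_0}(-t)\prod_j\frac{\lambda_j+t}{b_j+t}$ from \eqref{resolventproduct}, the monotone-convergence limit $\prod_j\frac{\lambda_j+t}{b_j+t}\to\prod_j\lambda_j/b_j$ (positive precisely when $D\in\cD_b(\sE)$, see \eqref{129sept21}), and the band/gap complementarity identity $\log\bigl(t\prod_j\frac{b_j+t}{a_j+t}\bigr)=\log(b_1+t)-\int_{\sE\cap(0,b_1)}\frac{ds}{s+t}$, which yields $\lim_{t\downarrow 0}tR_{D_0}(-t)=\tfrac12\sqrt{L}>0$ exactly because of \eqref{33sept21}. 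This isolates correctly the role of the hypothesis \eqref{33sept21}: it is what upgrades the square-root singularity of $R_{D_0}$ at the origin to a genuine simple pole, so that $D_0$ (and hence every $D\in\cD_b(\sE)$) produces an eigenvalue, while for $D\notin\cD_b(\sE)$ the vanishing product kills the atom. I see no gap in either equivalence; your argument can stand as a complete replacement for, and elaboration of, the paper's terse proof.
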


\begin{proof}
The spectral properties of $L_D$ are encoded by the matrix measure \eqref{matrixHerglotzRep}. They in turn are given by measures associated to $R_D$ and $(R_D)_1$. Due to our permanent assumption \eqref{setE2}, the origin is the only possible support of the singular component of these measures. Since $(R_D)_1$ is bounded in the origin, the only possible condition to have an eigenvalue is given by
$$
\lim_{z\to 0}(-z)R_{D}(z)>0.
$$
This in turn is equivalent to \eqref{53sept21}.
\end{proof}

As a result we get that as soon as $D \not\in \cD_b(\sE)$, the measure associated to $R_D$ is absolutely continuous.

\begin{theorem}
Let $D \not\in \cD_b(\sE)$. Then the relation
\begin{equation}\label{1-4feb21}
k_D(z,z_0)=L_D(z)\frac{(m_D)_+(z)-\overline{(m_D)_+(z_0)}}{z-\bar z_0}\overline{L_D(z_0)}
\end{equation}
defines a reproducing kernel of a subspace of $\hat \cH^2(\a)$ with $\a=\cA(D)$.
\end{theorem}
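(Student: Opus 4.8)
The plan is to read $k_D$ as the Nevanlinna difference quotient of the Herglotz function $(m_D)_+$, conjugated by $L_D$, and to promote the abstract reproducing kernel Hilbert space it generates to an isometric subspace of $\hat\cH^2(\a)$. Write
\[
k_D(z,z_0)=L_D(z)\,\mathsf K(z,z_0)\,\overline{L_D(z_0)},\qquad \mathsf K(z,z_0)=\frac{(m_D)_+(z)-\overline{(m_D)_+(z_0)}}{z-\bar z_0}.
\]
Since $(m_D)_+$ maps $\bbC_+$ to $\bbC_+$, the kernel $\mathsf K$ is positive-definite, and conjugation by $L_D(z)\overline{L_D(z_0)}$ preserves positivity; hence $k_D$ is positive-definite and, by Moore--Aronszajn, is the reproducing kernel of a unique Hilbert space $\cM$ of functions on $\Omega$. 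The theorem then splits into: (A) each section $k_D(\cdot,z_0)$ lies in $\hat\cH^2(\a)$ with $\a=\cA(D)$; and (B) the reproducing property $\langle f,k_D(\cdot,z_0)\rangle_{\hat\cH^2(\a)}=f(z_0)$ holds on $\cM:=\overline{\spann\{k_D(\cdot,w)\}}$, which (by continuity of point evaluation and Hermitian symmetry) reduces to the Gram identity $\langle k_D(\cdot,z_1),k_D(\cdot,z_0)\rangle_{\hat\cH^2(\a)}=k_D(z_0,z_1)$.

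\textbf{Membership (A).} I would verify the three defining properties of $\hat\cH^2(\a)$ in turn. The character is $\cA(D)$ because $L_D$ has character $\cA(D)$ by the lemma following \eqref{11jan211}, while $(m_D)_+$, a symmetric Herglotz function of $z$, is single-valued on $\Omega$, so $\mathsf K(\cdot,z_0)$ contributes the trivial character. Holomorphy on $\Omega$ follows from pole cancellation: the only poles of $(m_D)_+$ are the Dirichlet points $\l_j$ with $\e_j=+1$, and at each such point $L_D$ has a matching simple zero by \eqref{11jan211}; the apparent pole of $\mathsf K$ at $z=\bar z_0$ is removable since the numerator vanishes there by $\overline{(m_D)_+(z_0)}=(m_D)_+(\bar z_0)$. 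The Smirnov property is read off from Theorem~\ref{th5828}: by \eqref{3-2feb21}, $L_D(z)(m_D)_+(z)$ is a combination of the canonical products $L_D$ and $L_{D_1}$ (of Smirnov class by their factorization \eqref{11jan211} and Theorem~\ref{th31}) together with the Herglotz function $-\sqrt{-z}$ (itself of Smirnov class), whence $k_D(\cdot,z_0)\in\cN_+(\Omega)$. Finally, finiteness of the norm \eqref{229sept21}--\eqref{329sept21} uses the reflectionless relation \eqref{reflectionless}, which on $\sE$ gives $\Im (m_D)_+(\l+i0)=\tfrac12\,|R_D(\l+i0)|^{-1}$; together with the Szeg\"o weight \eqref{230sept21} this turns $\|k_D(\cdot,z_0)\|^2_{\hat\cH^2(\a)}$ into a convergent Poisson-type integral.

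\textbf{Reproducing identity (B).} I would evaluate $\langle k_D(\cdot,z_1),k_D(\cdot,z_0)\rangle_{\hat\cH^2(\a)}$ by expressing the norm \eqref{229sept21} as an integral over both edges of $\sE=\pd\Omega$ and folding it into a single contour integral; using $(m_D)_+(\l-i0)=\overline{(m_D)_+(\l+i0)}$ and the reflectionless identity, this boundary integral should collapse, by a residue computation, to $k_D(z_0,z_1)$. To make the boundary integration tractable, I would pass through the Szeg\"o substitution of the Proposition preceding \eqref{230sept21}, writing $\tfrac1{2\pi}|R_{D_0}|\,d\l=|\vp|^2\,\om(d\l,-1)$ for an outer character-automorphic $\vp$ and transferring to the unweighted space $\hat H^2(\a-\a_\vp)$, where the Cauchy/Sodin--Yuditskii machinery is available.

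\textbf{Main obstacle.} The heart of the matter is exactly this boundary-integral identity, because it is a Direct Cauchy Theorem statement in a domain where DCT may fail. By Lemma~\ref{l116jun21}, the Cauchy integral of an admissible function differs from that function by a correction $h_f$ which, under a minimal violation of DCT, is a polynomial in $1/z$; the delicate point is to show that this correction vanishes for the integrand arising in (B). Here the factor $L_D$ and the weight $R_{D_0}$ in $d\sigma$ must conspire with the decay furnished by the $\tfrac1{z-\bar z_0}$ factor and the asymptotics of $L_D$, $L_{D_1}$ and $(m_D)_+$ at infinity (via \eqref{e.LDoverPhic}, \eqref{e.RDasymptotics}, \eqref{7-2feb21}) so that the only surviving term is the pointwise value $k_D(z_0,z_1)$. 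Establishing that the reproducing computation remains valid despite the absence of DCT is precisely the analytic core that the technique of this section is designed to supply.
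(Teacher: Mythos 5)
Your proposal stalls exactly where the content of the theorem lies, and the gap is genuine: part (B), the identity $\langle k_D(\cdot,z_1),k_D(\cdot,z_0)\rangle=k_D(z_0,z_1)$ with respect to the weighted $L^2$ pairing, is never established. You explicitly defer it as ``the main obstacle,'' a Direct-Cauchy-Theorem-type boundary integral whose defect (the polynomial $h_f$ of Lemma~\ref{l116jun21}) you hope can be shown to vanish. That is not a proof, and moreover it misdiagnoses the difficulty: the paper's proof never forms a Cauchy integral over $\pd\Omega$, never performs a residue computation, and never confronts DCT at all. It works entirely on the real line with Herglotz theory. The matrix identity
\[
\frac{-\cM^{-1}(z)+(\cM(z)^{-1})^*}{z-\bar z}=\cM^{-1}(z)\,\frac{\cM(z)-\cM(z)^*}{z-\bar z}\,(\cM(z)^{-1})^*
\]
expresses the difference quotient of $m_+$ through the matrix measure $\rho$ of \eqref{matrixHerglotzRep}; the hypothesis $D\notin\cD_b(\sE)$ --- which your proposal never invokes anywhere --- is precisely what guarantees that $\rho$ is purely absolutely continuous, so that $\frac{\cM(z)-\cM(z)^*}{z-\bar z}=\int_\sE\frac{\rho'(\l)\,d\l}{|\l-z|^2}$ with no point mass at the origin; the reflectionless condition identifies $2\pi\rho'(\l)=\frac{i}{m_+(\l)+m_-(\l)}A(\l)^*A(\l)$ with $A=\begin{pmatrix}\overline{m_+}&-1\\ m_+&-1\end{pmatrix}$; and the elementary product identity $\frac{m_+(\l)+m_-(\l)}{i}\,|L_D(\l)|^2=|R_{D_0}(\l)|^{-1}$ on $\sE$ (from \eqref{resolventproduct} and \eqref{11jan211}) then yields
\[
k_D(z,z)=\frac1{2\pi}\int_\sE\bigl(|k_D(\l+i0,z)|^2+|k_D(\l-i0,z)|^2\bigr)\,|R_{D_0}(\l)|\,d\l
\]
(and its polarized analogue for two points), which is exactly the isometry between the abstract RKHS generated by $k_D$ and the $L^2$ space with the Szeg\"o weight \eqref{230sept21}.

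So the ``analytic core'' you flag as out of reach is in fact supplied by measure-theoretic Herglotz representation rather than by any contour-integral argument; the whole point of the construction is that it is DCT-free. Two consequences for your write-up: first, had you pursued your route, you would be trying to prove a Cauchy-type identity in a domain where such identities can genuinely fail, with no mechanism offered to kill the correction term --- there is no reason to expect that path to close. Second, the unused hypothesis $D\notin\cD_b(\sE)$ is not a technicality: if $D\in\cD_b(\sE)$ the associated operator has an eigenvalue at $0$, $\rho$ acquires a point mass there, and the displayed isometry breaks (the point mass contributes to the left side but not to the a.c.\ integral on the right). Your part (A) --- character, pole cancellation, Smirnov property via \eqref{3-2feb21} and Theorem~\ref{th31} --- is broadly consistent with what the paper needs, but without (B) the theorem is not proved.
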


\begin{proof}
Since $D$ is fixed we will leave this index implicit in $m_\pm(z)$. We use the following simple trick,
\begin{equation}\label{22jan1}
\frac{-\cM^{-1}(z)+(\cM(z)^{-1})^*}{z-\bar z}=\cM^{-1}(z)\frac{\cM(z)-\cM(z)^*}{z-\bar z}(\cM(z)^{-1})^*,
\end{equation}
where $\cM$ is given by \eqref{130sept21}, This allows us to represent
$$
\frac{m_+(z)-\overline{m_+(z)}}{z-\bar z}
$$
via an integral over $\sE$. Since $D \not\in D_b(\sE)$, the associated  matrix measure $\rho$ (cf.~\eqref{matrixHerglotzRep}) is absolutely continuous and we have
$$
\frac{\cM(z)-\cM(z)^*}{z-\bar z}=\int_{\sE}\frac{d\rho(\l)}{|\l-z|^2},\quad d\rho(\l)=\frac 1 {2\pi i}(\cM(\l)-\cM(\l)^*) \, d\l.
$$
Using reflectionlessness on $\sE$, we find
\begin{align*}
\frac{m_+(\l)+{m_-(\l)}} i 2\pi\rho'(\l) & =
\begin{pmatrix}
2|m_+(\l)|^2&-m_+(\l)-\overline{m_+(\l)}\\
-m_+(\l)-\overline{m_+(\l)}
&2
\end{pmatrix}
\\
& = \begin{pmatrix}
\overline{m_+(\l)}&-1\\
m_+(\l)
&-1
\end{pmatrix}^*\begin{pmatrix}
\overline{m_+(\l)}&-1\\
m_+(\l)
&-1
\end{pmatrix}.
\end{align*}
Thus,
\begin{align*}
\frac{m_+(z)-\overline{m_+(z)}}{z-\bar z}  =
\begin{pmatrix}
1&m_+(z)
\end{pmatrix}\int_{\sE}\frac{d\rho(\l)}{|\l-z|^2}
\begin{pmatrix}
1\\ \overline{m_+(z)}
\end{pmatrix}
\\
 = \int_{\sE}\frac{|m_+(\l)-m_+(z)|^2+|\overline{m_+(\l)}-m_+(z)|^2}
{|\l-z|^2}\frac{i \, d\l}{2\pi (m_+(\l)+m_-(\l))}.
\end{align*}

Now we note that
$$
\frac{m_+(\l)+m_-(\l)} i |L_D(\l)|^2= {2}{\sqrt \l}\prod_{j\ge 1}\frac{\sqrt{(\l-a_j)(\l-b_j)}}{\l-\l_j} \prod_{j\ge 1}\frac{\l-\l_j}{\l-b_j}
=\frac{1}{|R_{D_0}(\l)|}.
$$
Thus, we arrive at the identity
\begin{align*}
k_D(z,z) & = L_D(z) \frac{m_+(z)-\overline{m_+(z)}}{z-\bar z}\overline{L_D(z)} \\
& =\frac 1 {2\pi} \int_E \left( |k_D(\l+i0,z)|^2 + |k_D(\l-i0,z)|^2 \right) \, |R_{D_0}(\l)| \, d\l.
\end{align*}
In the last expression we used the following observation. Due to our normalization, the canonical products $L_D,  L_{D_1}$ are real on the negative half-axis. Therefore the limits of $L_D(z)$ from the top and bottom are related by $L_D(\l-i0) = \overline{L_D(\l+i0)}$.

From the very beginning we had a kernel $k_D(z,z_0)$, which is Hermit positive and holomorphic, consequently it generates a certain Hilbert space of holomorphic functions. Now we proved that the scalar product in this space can be treated as the scalar product in $L^2_{d\s}$ with the chosen measure \eqref{230sept21}. Since all functions $k_D(z,z_0)$ are of Smirnov class and character automorphic with the character $\a = \cA(D)$, this Hilbert space is a subspace of $\hat\cH^2(\a)$.
\end{proof}

Due to \eqref{3-2feb21} we have the following representation,
\begin{equation}\label{3-4feb21}
k_D(z,z_0) = \frac{-\sqrt{-z}L_{D_1}(z) \overline{L_D(z_0)} + \overline{\sqrt{-z_0}} L_{D}(z) \overline{L_{D_1}(z_0)}}{z-\bar z_0}.
\end{equation}
In what follows, the Hilbert space of character automorphic functions with the reproducing kernel $k_D$ is denoted $\cH^2_D$.

Let $D_* = \{(\l_j,-\e_j)\}_{j\ge 1}$ for $D=\{(\l_j,\e_j)\}_{j\ge 1}$.
By pseudocontinuation we get
\begin{equation}\label{3_m-2feb21}
(m_D)_-(z)=Q_2(D_*)-Q_2((D_1)_*)-\sqrt{-z}\frac{L_{(D_1)_*}(z)}{L_{D_*}(z)}.
\end{equation}

The following relation we call Wronskian identity.

\begin{lemma}\label{l.determinant}
Let $D_1$ be associated to $D$ as in Lemma~\ref{lem51}. Define
\begin{equation}\label{2-10feb21}
\cL_D(z)=\begin{pmatrix}
\sqrt{-z}L_{(D_1)_*}(z)& L_{D_*}(z)
\\
-\sqrt{-z}L_{D_1}(z)& L_D(z)
\end{pmatrix}.
\end{equation}
Then
\begin{equation}\label{330sept21}
\det\cL_D(z) = \frac 1{R_{D_0}(z)}.
\end{equation}
\end{lemma}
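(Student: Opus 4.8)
The plan is to reduce the determinant to the sum $m_++m_-$ of Weyl--Titchmarsh functions, which equals $-1/R_D$ by the very definition $R=-1/(m_-+m_+)$ (cf.\ \eqref{dd1-4feb21}), and then to evaluate the leftover product $L_D L_{D_*}$ by an elementary computation from the definition \eqref{11jan211}. First I would expand the $2\times 2$ determinant directly,
\[
\det\cL_D(z)=\sqrt{-z}\,L_{(D_1)_*}(z)L_D(z)-L_{D_*}(z)\bigl(-\sqrt{-z}\,L_{D_1}(z)\bigr)=\sqrt{-z}\bigl(L_{(D_1)_*}(z)L_D(z)+L_{D_1}(z)L_{D_*}(z)\bigr).
\]

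Next I would insert the canonical-product representations of the $m$-functions, namely \eqref{3-2feb21} for $m_+$ and \eqref{3_m-2feb21} for $m_-$,
\[
m_+(z)=\cQ_2(D)-\cQ_2(D_1)-\sqrt{-z}\frac{L_{D_1}(z)}{L_D(z)},\qquad m_-(z)=\cQ_2(D_*)-\cQ_2((D_1)_*)-\sqrt{-z}\frac{L_{(D_1)_*}(z)}{L_{D_*}(z)}.
\]
The essential observation is that the additive constants cancel in the sum. Since $\cQ_2(D)=\sum_j M(\l_j)\e_j$ with $M=\Im\Theta$, the involution $D\mapsto D_*=\{(\l_j,-\e_j)\}$ negates every $\e_j$ and hence gives $\cQ_2(D_*)=-\cQ_2(D)$ and $\cQ_2((D_1)_*)=-\cQ_2(D_1)$. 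Therefore
\[
m_+(z)+m_-(z)=-\sqrt{-z}\left(\frac{L_{D_1}(z)}{L_D(z)}+\frac{L_{(D_1)_*}(z)}{L_{D_*}(z)}\right)=-\frac{\sqrt{-z}\bigl(L_{D_1}(z)L_{D_*}(z)+L_{(D_1)_*}(z)L_D(z)\bigr)}{L_D(z)L_{D_*}(z)}.
\]
Comparing the numerator with the expansion of the determinant and using $m_++m_-=-1/R_D$ yields $\det\cL_D(z)=-L_D(z)L_{D_*}(z)\bigl(m_+(z)+m_-(z)\bigr)=L_D(z)L_{D_*}(z)/R_D(z)$.

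It then remains to compute $L_D L_{D_*}$. From \eqref{11jan211}, $D$ and $D_*$ share the same square-root prefactor, while their inner factors multiply as $\prod_j\Phi_{\l_j}^{(1+\e_j)/2}\cdot\prod_j\Phi_{\l_j}^{(1-\e_j)/2}=\prod_j\Phi_{\l_j}$, which exactly cancels the $\Phi_{\l_j}^{-1}$ occurring in the prefactor. Hence $L_D(z)L_{D_*}(z)=\prod_{j\ge1}\frac{z-\l_j}{z-b_j}$, and comparing this with \eqref{resolventproduct} applied to both $R_D$ and $R_{D_0}$ gives $L_D(z)L_{D_*}(z)=R_D(z)/R_{D_0}(z)$. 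Substituting into the formula for the determinant produces $\det\cL_D(z)=1/R_{D_0}(z)$, as claimed.

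I expect the only delicate points to be bookkeeping rather than conceptual: verifying the antisymmetry $\cQ_2(D_*)=-\cQ_2(D)$ of the Abel-type functional under the involution $D\mapsto D_*$, and tracking the branch/normalization conventions (the functions are real and positive on $(-\infty,0)$) so that the square-root prefactors recombine with the correct sign when forming $L_D L_{D_*}$. Neither of these is a genuine obstacle; the heart of the argument is simply the recognition that the Wronskian determinant is nothing but $L_D L_{D_*}$ times $-(m_++m_-)=1/R_D$.
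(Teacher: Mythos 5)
Your proposal is correct and takes essentially the same route as the paper: both arguments reduce $\det\cL_D$ to $L_D L_{D_*}/R_D$ via the identity $-(m_++m_-)=1/R_D$ and then evaluate $L_D L_{D_*}=\prod_{j\ge 1}\frac{z-\l_j}{z-b_j}$ from \eqref{11jan211} and compare with \eqref{resolventproduct}. The only cosmetic difference is that the paper factors the matrix using the normalized functions $n_\pm$ from \eqref{1-2feb21} together with \eqref{dd1-4feb21}, which sidesteps your (correct, but extra) verification that the constants $\cQ_2$ cancel under the involution $D\mapsto D_*$.
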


\begin{proof}
We have
\begin{align*}
\det\cL_D(z) & = L_D(z)L_{D_*}(z) \det
\begin{pmatrix}
-(n_D)_-(z)& 1\\
(n_D)_+(z)& 1
\end{pmatrix}  \\
& = \frac{L_D(z)L_{D_*}(z)}{R_D(z)}
 = \frac{\prod_{j\ge 1}\frac{z-\l_j}{z-b_j}}{\frac 1{2\sqrt{-z}} \prod_{j\ge1} \frac{(z-\lambda_j)}
 {
 \sqrt{(z-a_j)(z-b_j)}}}
 =\frac 1{R_{D_0}(z)}.
\end{align*}
Here we used \eqref{1-2feb21} in the first step, \eqref{dd1-4feb21} in the second step, \eqref{resolventproduct} and \eqref{11jan211} in the third step, and again \eqref{resolventproduct} with $D=D_0$ in the final step.
\end{proof}

We point out that due to this identity, the Wronskian determinant  does not depend on $D$.

\begin{proposition}\label{prop13sept21}
The composition map $\cA\circ\cB:\cR(\sE)\to\pi_1(\Omega)^*$ linearizes the space shift of a reflectionless potential,
that is,
$$
\cA(\cB(V(x+x_0))=\{\a_k-\eta_k x_0\mod 1\}_{k\ge 1}
$$
as soon as $\cB(V)\not\in\cD_b(\sE)$, in other words the spectrum of $L_V$ is absolutely continuous.
\end{proposition}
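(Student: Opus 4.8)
The plan is to reduce the statement to a computation of how the character of a canonical product advances under the $x$-shift, the engine being that the Weyl $m$-function is single-valued on $\Omega$ while the Martin abelian integral $\Theta$ is not. First I would record that shifting the potential only moves the base point of the Weyl solutions: writing $m_+(x,z)$ for the $m$-function of $V$ with Dirichlet condition at $x$, the substitution $y\mapsto y+x_0$ shows that $\psi_+(y+x_0,z)$ is the $+\infty$ Weyl solution of $S_{x_0}V$, so $(m_{S_{x_0}V})_+(z)=m_+(x_0,z)$. Consequently the divisor $D(x_0):=\cB(S_{x_0}V)$ is read off from $m_+(x_0,\cdot)$ exactly as $D$ is read off from $m_+$ in Section~\ref{sect5} (via the zeros and residues determining the $(\lambda_j,\epsilon_j)$), and by the lemma preceding Theorem~\ref{th5828} the quantity $\cA(D(x_0))$ is the character of the canonical product $L_{D(x_0)}$ of \eqref{11jan211}. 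The hypothesis $\cB(V)\notin\cD_b(\sE)$ (purely a.c.\ spectrum, no eigenvalue at $0$) keeps the whole orbit inside the regime where the space $\cH^2_D$ and the representation \eqref{3-2feb21} are valid.

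Next I would build the character-automorphic Weyl solution $\psi_+(x,z)$ and factor out its exponential growth as $\psi_+(x,z)=e^{ix\Theta(z)}g_x(z)$, where $\Theta$ is normalized by $\Theta(z)\sim i\sqrt{-z}$ and $g_x$ is character-automorphic of bounded characteristic. Since $i\Theta\sim-\sqrt{-z}\sim m_+$, the factor $e^{ix\Theta}$ carries precisely the exponential part, and $\lvert e^{ix\Theta}\rvert=e^{-x\,\Im\Theta}$ is single-valued because $\Im\Theta$ is the Martin function; note $\Im\Theta=0$ on $\sE=\Theta^{-1}(\bbR_+)$, so this growth factor is unimodular on the spectrum. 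The bounded-type part $g_x$ has the same zeros as $L_{D(x)}$ — each $\lambda_j(x)$ with $\epsilon_j(x)=+1$ is a simple zero of both (for $L_{D(x)}$ this follows from \eqref{11jan211} and the vanishing of $\Phi_{\lambda_j}$ at $\lambda_j$) — and the same boundary modulus on $\sE$, the latter being exactly the spectral-density identity verified in the proof of the reproducing-kernel theorem. Hence $g_x/L_{D(x)}$ is zero-free and character-trivial, so $g_x$ and $L_{D(x)}$ share the character $\cA(D(x))$ up to an $x$-independent constant; Theorem~\ref{th5828} and the Wronskian Lemma~\ref{l.determinant} make this identification precise by expressing $\psi_\pm$ through $L_D,L_{D_1},L_{D_*},L_{(D_1)_*}$.

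The engine of the argument is then to differentiate the factorization in $x$: $m_+(x,z)=\partial_x\log\psi_+(x,z)=i\Theta(z)+\partial_x\log g_x(z)$. The left side is single-valued on $\Omega$ (it is the Weyl $m$-function), whereas $\Theta$ has period $\int_{\bA_k}d\Theta=2\eta_k$ around the generator $\gamma_k$ by \eqref{39jul21}. Therefore $\partial_x\log g_x$ must carry the compensating period $-2i\eta_k$ around $\gamma_k$; since $g_x\circ\gamma_k=e^{2\pi i\,\cA_k(D(x))}g_x$, that period equals $2\pi i\,\partial_x\cA_k(D(x))$. Matching the two, and absorbing the normalization constant relating the period of $\Theta$ to the harmonic-measure normalization of the Abel map in Definition~\ref{def227}, gives $\partial_x\cA_k(D(x))=-\eta_k$, hence $\cA_k(\cB(S_{x_0}V))=\a_k-\eta_k x_0\mod 1$ with $\a_k=\cA_k(\cB(V))$, which is the claim.

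I expect the main obstacle to be the identification in the second paragraph in the \emph{non-DCT} regime: constructing the character-automorphic Weyl solution and matching the character of its bounded-type part with $\cA(D(x))$ without invoking DCT, which is precisely what fails here and on which the linearization proofs of \cite{EVY,SY97} rely for uniqueness and completeness. This is exactly where $D\notin\cD_b(\sE)$ is essential — it forces the measure of $R_D$ to be purely a.c., so that $\psi_+$ is honestly square-integrable at $+\infty$, the space $\cH^2_D$ is well defined, and \eqref{3-2feb21} holds uniformly along the orbit, letting the otherwise DCT-dependent identification go through. Secondary technical points are the convergence of the canonical products under the Widom condition (needed to justify differentiating $g_x$ in $x$ termwise) and the continuity of $\cA$ through the instants when some $\epsilon_j(x)$ flips at a gap endpoint, which is covered by the circle identification of Definition~\ref{def127}.
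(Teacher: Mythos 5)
Your proposal is correct in outline and takes essentially the same approach as the paper: the paper's own proof is nothing more than a citation of \cite[Sect.~4, Prop.~4.8]{DY}, whose argument --- writing the Weyl solution as $e^{ix\Theta}$ times a character-automorphic factor and tracking characters through the canonical-product machinery of Section~\ref{sect5} (Theorem~\ref{th5828}, Lemma~\ref{l.determinant}, Theorem~\ref{th31}) --- is exactly what you reconstruct. The only looseness worth flagging is that, with the natural normalization $\psi_+(0,\cdot)=1$, the boundary moduli of $g_x$ and $L_{D(x)}$ on $\sE$ agree not identically but only up to the $x$-independent factor $\sqrt{|R_{D_0}|/|R_{D(0)}|}$ (this follows from the spectral-density identity you invoke together with reflectionlessness), which is harmless since your conclusion is anyway stated up to an $x$-independent constant; likewise, differentiating the character in $x$ can be avoided altogether by noting that $\psi_+(x,\cdot)=\exp\int_0^x m_+(y,\cdot)\,dy$ is single-valued in $\Omega$, so the character of $g_x=e^{-ix\Theta}\psi_+(x,\cdot)$ is read off directly from the periods of $\Theta$.
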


\begin{proof} In this setting, no modification is required for a
word by word repetition of the arguments given in \cite[Sect. 4]{DY}. Up to notations, our Proposition~\ref{prop13sept21} is the same as Proposition~4.8 from \cite{DY}.
\end{proof}

\section{The Set $\Xi$ of Singular Divisors. Theorem \ref{l13sept21}}

\subsection{The Easy Part}

In this subsection we show that $\Xi \supseteq \cA(\cD_b(\sE))$.
\begin{lemma}\label{theo53}
Let $\Omega$ be of Widom type and $\sE$ obey the condition
\eqref{33sept21}.
Let $\fj$ be the character generated by the function $\sqrt{-z}$.
Then the Abel map is not one-to-one. Moreover if $D\in \cD(\sE)$ obeys
\begin{equation}\label{11jan214}
\sum_{j\ge 1}\log\frac{\l_j}{a_j} < \infty,
\end{equation}
then the character $\cA(D)+\fj$ possesses a non-unique representation.
\end{lemma}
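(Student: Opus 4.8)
The plan is to prove the (stronger) ``moreover'' statement, from which non-injectivity of $\cA$ follows at once, by exhibiting for the character $\cA(D)+\fj$ \emph{two} distinct divisors in its preimage. The first task is to pin down $\fj$. Since $\sqrt{-z}$ has its only finite branch point at the accumulation point $z=0$, and each generator $\g_k$ of $\pi_1(\Omega)$ is homologous to a loop encircling $\sE_k=\sE\cap[0,a_k]$, which always contains $0$, the monodromy of $\sqrt{-z}$ along every $\g_k$ is $-1$; hence $\fj_k=\tfrac12$ for all $k$. I would cross-check this against the identity $\frac{1}{2R_{D_0}(z)}=\sqrt{-z}\prod_{j\ge 1}\sqrt{\frac{z-a_j}{z-b_j}}$, which follows from \eqref{resolventproduct} with $D=D_0$ and exhibits $\sqrt{-z}$ as a single-valued function times the product $\prod_j\sqrt{(z-b_j)/(z-a_j)}$, convergent by the Widom condition.

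Next, given $D$ satisfying \eqref{11jan214}, I would study the character-automorphic function $g(z)=\sqrt{-z}\,L_D(z)$, whose character is $\cA(D)+\fj$. From \eqref{11jan211} one has $L_D^2=\prod_j\frac{z-\l_j}{z-b_j}\,\Phi_D$, hence $g^2=-z\prod_j\frac{z-\l_j}{z-b_j}\,\Phi_D$. The goal is to re-express $g$ as a canonical product $L_{D'}$, i.e.\ to absorb the extra factor $-z=(z-0)$ together with the sign character $\fj$ by moving the divisor points. Theorem~\ref{th31} guarantees that $g$ is of bounded characteristic with inner part a ratio of Blaschke products, so such a divisor reading is available; condition \eqref{11jan214} forces the relocated points to cluster at the upper endpoints $b_j$ (so that the resulting divisor lies in $\cD_b(\sE)$), while \eqref{33sept21} supplies the integrability \eqref{13sept21} at the origin needed for convergence of the products and membership in $\cN_+(\Omega)$.

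The crux, and the very source of the non-uniqueness, is the behaviour at $z=0$. Because infinitely many gaps accumulate at $0$ and $0\in\sE$ lies on the boundary, the complex Green function $\Phi_0$ degenerates, so the factor $(z-0)$ created by $\sqrt{-z}$ is not pinned to any single gap: it can be accounted for by the $0$-accumulating tail of the shifts $\l_j\mapsto\l_j'$ in more than one way. Concretely I would produce two divisors $D'\ne D''$, both clustering at $\{b_j\}$ and differing only in such a tail, for which the increment $\cA(D')-\cA(D)$, read off from \eqref{12sept21}, telescopes against the monodromy contributions to yield exactly $\fj_k=\tfrac12$ in every coordinate $k$. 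Then $\cA(D')=\cA(D'')=\cA(D)+\fj$, so this character is singular. Letting $D$ range over all divisors obeying \eqref{11jan214} covers all of $\cA(\cD_b(\sE))$, since subtracting $\fj$ converts a near-$b$ divisor into a near-$a$ one, and thus gives $\Xi\supseteq\cA(\cD_b(\sE))$.

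I expect the main obstacle to be the rigorous construction and genuine distinctness of $D'$ and $D''$: one must show that the half-integer character $\fj$ is realized as an Abel-image increment in at least two essentially different ways. This is exactly a quantitative statement that the accumulation of spectral gaps at the origin, under the borderline hypotheses \eqref{33sept21} and \eqref{11jan214}, defeats the uniqueness that DCT would otherwise enforce. Controlling the convergence of the relevant canonical products and harmonic-measure sums \emph{uniformly in} $k$ as the gaps approach $0$ is the technical heart of the argument.
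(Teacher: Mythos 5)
Your identification of the character $\fj$ (namely $\fj_k=\tfrac12$ for every $k$, since each generator $\g_k$ encircles the branch point $z=0$) and your statement of the goal --- exhibit at least two distinct divisors mapping to $\cA(D)+\fj$ --- both match the paper. But the step that would actually produce those divisors is missing, and you say so yourself: the ``telescoping'' of Abel-map increments against tail modifications of the divisor is exactly the part you defer as the technical heart, and it is not the mechanism by which the non-uniqueness arises. The paper's proof involves no combinatorial rearrangement of divisor points near the origin at all.

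The missing idea is analytic, not combinatorial. The hypotheses \eqref{33sept21} and \eqref{11jan214} are used for one purpose only: writing $R_D$ in exponential form, they imply the finite limit $R_D(0)<\infty$, hence
$$
-(m_D)_-(0)-(m_D)_+(0)=\frac{1}{R_D(0)}>0,
$$
so that $[(m_D)_+(0),\,-(m_D)_-(0)]$ is a nondegenerate interval. (In your proposal these hypotheses are instead invoked vaguely for ``convergence of the products,'' which is already guaranteed by the Widom condition.) For each interior point $\mu$ of that interval one sets $(n^\mu_D)_+=(m_D)_+-\mu$ and $(n^\mu_D)_-=(m_D)_-+\mu$: this pair is still Herglotz, still reflectionless, still sums to $-1/R_D$, and is still negative on $(-\infty,0)$. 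Running the argument of Lemma~\ref{lem51} then gives
$$
(n^\mu_D)_+=-\sqrt{-z}\,\frac{L_{D_1^\mu}(z)}{L_D(z)}
$$
for some divisor $D_1^\mu\in\cD(\sE)$. Distinct values of $\mu$ give distinct divisors (the zeros of $(m_D)_+-\mu$ move with $\mu$), yet since each $(n^\mu_D)_+$ is single-valued in $\Omega$, every product $L_{D_1^\mu}$ carries one and the same character $\cA(D)+\fj$. This yields a whole continuum of divisors with a common character, with no need to realize $\fj$ ``in two essentially different ways'' by hand. Without this (or an equivalent) device, your outline cannot be completed: nothing in it forces the candidate divisors $D'$, $D''$ to exist, to be distinct, and to have equal Abel images.
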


\begin{proof}
The conditions \eqref{33sept21} and \eqref{11jan214} imply that the limit value
$
R_D(0) < \infty,
$
where
\begin{align*}
R_D(z) & =\frac{1}{2\sqrt{b_1-z}}e^{\int_0^{b_1}\chi_D(\l)\frac{d\l}{\l-z}}, \\
\chi_D(\l) & = \begin{cases} 1/2,& \l\in \sE, \\
0, &\l\in(\l_j, b_j),\  j\ge 1, \\
1, &\l\in(a_j, \l_j),\  j\ge 1. \\
\end{cases}
\end{align*}
Therefore
$$
-(m_D)_-(0)-(m_D)_+(0)=\frac{1}{R_D(0)}>0,
$$
that is, $[(m_D)_+(0),-(m_D)_-(0)]$ is a proper interval. We fix an arbitrary internal $\mu$ from this interval. We point out that
$$
(n^\mu_D)_+(\l)=(m_D)_+(\l)-\mu<0,\quad (n^\mu_D)_-(\l)=(m_D)_-(\l)+\mu<0
$$
on the negative half axis, while $
(n^\mu_D)_+(z)+(n^\mu_D)_-(z)=-1/(R_D)(z)$. The same arguments as in the proof of Lemma  \ref{lem51} show that
$$
(n^\mu_D)_+=-\sqrt{-z}\frac{L_{D_1^\mu}(z)}{L_D(z)}
$$
with a certain $D_1^\mu\in \cD(\sE)$. Moreover, since this divisor  is defined via zeros of the functions $\{(m_D)_\pm(z)\mp\mu\}$, different $\mu$'s produce different divisors $D_1^\mu$. Since all such functions $(n^\mu_D)_+$ are single-valued in $\Omega$, all products $\{L_{D_1^\mu}\}$ correspond to the same character $\cA(D)+\fj$.
\end{proof}

\begin{corollary}
Assume that $\Omega$ is a regular domain of Widom type such that $\sE$ obeys \eqref{33sept21}.  
Then, $\Xi \supseteq \cA(\cD_b(\sE))$.
\end{corollary}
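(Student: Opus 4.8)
The plan is to deduce the inclusion from Lemma~\ref{theo53} by transporting an arbitrary $\hat D\in\cD_b(\sE)$ to the auxiliary divisor $\hat D_1$ furnished by Lemma~\ref{lem51}, exploiting that this passage shifts the character by exactly $\fj$. Concretely, I would fix $\hat D\in\cD_b(\sE)$ and prove $\cA(\hat D)\in\Xi$ by checking that $\hat D_1$ meets the hypothesis \eqref{11jan214} of Lemma~\ref{theo53}; the conclusion of that lemma, read off at $\hat D_1$, then lands precisely on $\cA(\hat D)$.

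First I would record the character bookkeeping. Since $\hat D\in\cD_b(\sE)$, the operator $L_{V_{\hat D}}$ has $0$ as an eigenvalue, so $m_+(0)=-m_-(0)$ is finite and $R_{\hat D}$ has a simple pole at the origin, $\lim_{z\to 0}(-z)R_{\hat D}(z)=\kappa>0$. Thus $n_\pm=m_\pm\mp m_+(0)$ are well defined and single-valued, and Lemma~\ref{lem51} gives $n_+(z)=-\sqrt{-z}\,L_{\hat D_1}(z)/L_{\hat D}(z)$. Comparing the characters of the three factors, and using that $n_+$ is single-valued, yields $\cA(\hat D_1)=\cA(\hat D)+\fj$. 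Moreover $(\sqrt{-z})^2=-z$ is single-valued, so $2\fj=0$.

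The crux is to verify that $\hat D_1$ obeys \eqref{11jan214}, i.e.\ $\sum_j\log(\lambda_j(\hat D_1)/a_j)<\infty$. By the exponential representation of $R_D(0)$ in the proof of Lemma~\ref{theo53}, combined with \eqref{33sept21}, this is equivalent to $R_{\hat D_1}(0)<\infty$. To obtain the latter I would combine $\tilde R_1=2zR_{\hat D_1}$ (from \eqref{dd2-4feb21} and \eqref{resolventproduct}) with the facts that on $\bbR_-$ both $n_\pm$ are negative and $n_++n_-=-1/R_{\hat D}$, which together give
\[
|\tilde R_1|=\frac{|n_+|\,|n_-|}{|n_+|+|n_-|}\le \tfrac14\bigl(|n_+|+|n_-|\bigr)=\frac{1}{4R_{\hat D}}.
\]
Hence $|R_{\hat D_1}|=|\tilde R_1|/(2|z|)\le 1/(8|z|R_{\hat D})$, and the pole asymptotics $R_{\hat D}(z)\sim\kappa/(-z)$ keep the right-hand side bounded as $z\to 0$. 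Therefore $R_{\hat D_1}(0)<\infty$ and $\hat D_1$ satisfies \eqref{11jan214}.

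With both facts in hand the argument closes: Lemma~\ref{theo53} applied to $\hat D_1$ shows $\cA(\hat D_1)+\fj\in\Xi$, while $\cA(\hat D_1)+\fj=\cA(\hat D)+2\fj=\cA(\hat D)$, so $\cA(\hat D)\in\Xi$; since $\hat D$ was arbitrary this proves $\Xi\supseteq\cA(\cD_b(\sE))$. The step I expect to be most delicate is the verification that $\hat D_1$ again clusters at the left endpoints $a_j$—the dual of the (implicit) observation in Lemma~\ref{theo53} that the divisors $D_1^\mu$ produced there lie in $\cD_b(\sE)$. The clean bound above works only because membership in $\cD_b(\sE)$ supplies a genuine pole of $R_{\hat D}$ at the origin, so I would take particular care to justify the finiteness of $m_\pm(0)$ and the simple-pole behavior $\lim_{z\to0}(-z)R_{\hat D}(z)=\kappa>0$ on which it rests.
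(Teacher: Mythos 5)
Your proof is correct, and it rests on the same two pillars as the paper's own argument: the divisor--partner correspondence behind Lemma~\ref{lem51}/Lemma~\ref{theo53}, and the character bookkeeping $\cA(\hat D_1)=\cA(\hat D)+\fj$ with $2\fj=0$. The execution, however, differs in a way worth recording. The paper runs the correspondence backwards: it regards the given divisor as one of the $D_1^\mu$ produced in the proof of Lemma~\ref{theo53}, and from the identity $R_{D_1^\mu}(z)=-\tfrac1z(n_D^\mu)_+(z)(n_D^\mu)_-(z)R_D(z)$, evaluated at $z=0$ using $(n_D^\mu)_\pm(0)<0$, concludes that the base divisor $D$ obeys \eqref{11jan214}; the step of actually realizing an arbitrary element of $\cD_b(\sE)$ in the role of $D_1^\mu$ is left implicit. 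You go forwards: starting from $\hat D\in\cD_b(\sE)$ you construct the partner $\hat D_1$ via Lemma~\ref{lem51}, which settles that realization issue, and since in your labeling the shifted Weyl functions $\hat n_\pm$ vanish at the origin (so no nonzero limit values are available there), you replace the paper's evaluation at $z=0$ by the AM--GM bound $|\tilde R_1|\le\tfrac14\left(|n_+|+|n_-|\right)=1/(4R_{\hat D})$ on $\bbR_-$, giving $R_{\hat D_1}(0)\le 1/(8\kappa)<\infty$ and hence \eqref{11jan214} for $\hat D_1$ via the exponential representation from Lemma~\ref{theo53} together with \eqref{33sept21}. This buys a modest but genuine gain in completeness: read literally, the paper's text only treats divisors already presented as $D_1^\mu$, whereas your version proves the inclusion $\Xi\supseteq\cA(\cD_b(\sE))$ for every divisor; your constant-tracking ($\tilde R_1=2zR_{\hat D_1}$, consistent with \eqref{dd2-4feb21} and \eqref{resolventproduct}) is also the correct one. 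The two ingredients you flag as delicate---finiteness of $m_\pm(0)$ and $\lim_{z\to0}(-z)R_{\hat D}(z)=\kappa>0$---are indeed needed, but they are supplied, under the hypothesis \eqref{33sept21}, by the paper's lemma characterizing when $0$ is an eigenvalue of $L_{V_D}$ and by Lemma~\ref{lemma2mar1}, so no gap remains.
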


\begin{proof}
Using the notation from the proof of the previous lemma, note that
$D_1^\mu\in\cD_b(\sE)$ means that
$$
\lim_{z\to 0}(-z)R_{D_1^\mu}(z)>0.
$$
We have
$$
-\frac 1{R_{D_1^\mu}(z)}=\frac {-z}{(n_D^\mu)_+(z)}+\frac{-z}{(n_D^\mu)_-(z)}=\frac{z}{(n_D^\mu)_+(z)(n_D^\mu)_-(z)R_D(z)}.
$$
That is,
$$
R_{D_1^\mu}(z)=-\frac 1 z (n_D^\mu)_+(z)(n_D^\mu)_-(z)R_D(z).
$$
Since both $(n_D^\mu)_+(0)$ and $(n_D^\mu)_-(0)$ are negative, we have that $R_D(0)>0$.
The last expression means that for the given divisor $D$, \eqref{11jan214} holds.
\end{proof}

\subsection{Parametrization of the Defect Subspaces by Means of $J$-Contrac\-tive Matrix Functions}

We have proven that $\cA(\cD_b(\sE))\subset\Xi$. For $D\not\in\cD_b(\sE)$ we introduced the Hardy space
$\cH^2_D$ and proved  that $\cH^2_D\subset \hat\cH^2(\a)$, where $\cA(D)=\a$. As soon as $\a \in \Xi$, due to non-uniqueness, there should be $D$ corresponding to this $\a$ such that $\cH^2_D$ is a proper subset, see Corollary~\ref{c11oct21}.

\bigskip

\begin{definition}
Let
\[
J = \begin{pmatrix}
0 & 1 \\
-1 & 0
\end{pmatrix}.
\]
A meromorphic in $\bbC_+$ matrix function $\fA(z)$ is called $J$-\emph{contractive} if
$$
\frac{J-\fA(z)J\fA(z)^*}{z-\bar z}\ge 0.
$$
It is called $J$-\emph{inner} if in addition
$$
J-\fA(\l)J\fA(\l)^*=0\quad\text{for a.e.}\quad  \l\in\bbR.
$$

\end{definition}

The multiplicative theory of such matrix functions was founded by V.P. Potapov, see \cite{EP73}. For the case of \textit{entire} $2\times 2$-matrix functions, it was finalized by L. de Branges \cite{dB}.

In this subsection we provide a certain characterization of the \textit{defect subspace} $\cK_D=\hat \cH^2(\a)\ominus \cH^2_D$ by $J$-contractive matrix functions.
First we will show that $\cH^2_D$ possesses the following (de Branges spaces \cite{dB}) property.

\begin{lemma}\label{lem71}
If $f\in \cH^2_D$ and $f(z_0)=0$, $z_0\in\bbC_+$, then $g=\frac{z-\bar z_0}{z-z_0}f\in \cH^2_D$, moreover $\|f\|=\|g\|$.
\end{lemma}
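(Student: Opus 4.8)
The plan is to derive both assertions from the isometric identification of $\cH^2_D$ with a subspace of $L^2(\sE,\sigma)$, where $2\pi\,d\sigma=|R_{D_0}(\l)|\,d\l$ as in \eqref{230sept21}. The theorem furnishing the kernel \eqref{1-4feb21} gives the diagonal identity $k_D(z,z)=\frac1{2\pi}\int_\sE\bigl(|k_D(\l+i0,z)|^2+|k_D(\l-i0,z)|^2\bigr)|R_{D_0}(\l)|\,d\l$; by polarization and the density of finite linear combinations of the kernels $\{k_D(\cdot,z')\}$ in $\cH^2_D$, this upgrades to
\[
\|h\|_{\cH^2_D}^2=\frac1{2\pi}\int_\sE\bigl(|h(\l+i0)|^2+|h(\l-i0)|^2\bigr)|R_{D_0}(\l)|\,d\l,\qquad h\in\cH^2_D,
\]
so that boundary values embed $\cH^2_D$ isometrically into $L^2(\sE,\sigma)$. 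Writing $B(z)=\frac{z-\bar z_0}{z-z_0}$, so that $g=Bf$, I note that $B$ is holomorphic across $\bbR$ away from $z_0,\bar z_0\notin\bbR$ and that $|B(\l)|=1$ for $\l\in\bbR$; hence $|g(\l\pm i0)|=|f(\l\pm i0)|$ a.e.\ on $\sE$. Consequently the norm identity $\|g\|=\|f\|$ will be automatic once $g\in\cH^2_D$ is known.

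Next I would verify $g\in\hat\cH^2(\a)$. The factor $B$ is single-valued and meromorphic on $\Omega$, hence carries the trivial character, so $g$ inherits the character $\a=\cA(D)$ of $f$. For holomorphy and the Smirnov property: since $|f|$ is single-valued and $f(z_0)=0$, the lift $F=f\circ\L$ vanishes along the entire orbit $\{\g(\z_0):\g\in\G\}$ of any $\z_0$ with $\L(\z_0)=z_0$. The complex Green function $\Phi_{z_0}$ is a convergent Blaschke product over that orbit (see the remark after \eqref{14jan1}, convergence being guaranteed by the Widom condition \eqref{11sept21}), and the lift of $B$ is a quotient of Blaschke products whose denominator is exactly this one; since it divides the inner part of $F$, we get $g\circ\L\in\cN_+(\bbD)$, i.e.\ $g\in\cN_+(\Omega)$, while the pole of $B$ at $z_0$ is cancelled by the zero of $f$, so $g$ is holomorphic on $\Omega$. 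Together with $|g|=|f|$ on $\sE$, condition \eqref{229sept21} holds and $g\in\hat\cH^2(\a)$.

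The main obstacle is to promote membership from the ambient space $\hat\cH^2(\a)$ to the subspace $\cH^2_D$, which in our situation may be proper. My plan is to establish the reproducing identity $\langle g,k_D(\cdot,w)\rangle_\sigma=g(w)$ for every $w\in\Omega\cap\bbC_+$. Using $|B|=1$ on $\sE$ I rewrite $\langle g,k_D(\cdot,w)\rangle=\langle f,\,\overline{B}\,k_D(\cdot,w)\rangle$, observe that $\overline{B}\,k_D(\cdot,w)$ has the same boundary values on $\sE$ as $\frac{z-z_0}{z-\bar z_0}\,k_D(z,w)$, and split this via $\frac{z-z_0}{z-\bar z_0}=1+\frac{\bar z_0-z_0}{z-\bar z_0}$ into the reproducing term $\langle f,k_D(\cdot,w)\rangle=f(w)$ plus a correction governed by the pole at $\bar z_0$. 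Evaluating the correction by a contour/residue computation built on the explicit form \eqref{3-4feb21} and the Wronskian identity \eqref{330sept21}, the hypothesis $f(z_0)=0$ enters precisely to annihilate the residue that would otherwise occur at $z_0$, and the surviving contribution at $\bar z_0$ yields exactly $\frac{z_0-\bar z_0}{w-z_0}f(w)=g(w)-f(w)$; this delicate residue bookkeeping is the crux. Once $\langle g,k_D(\cdot,w)\rangle=g(w)$ holds for all $w$, the orthogonal projection $P$ of $g$ onto $\cH^2_D$ satisfies $(Pg)(w)=\langle g,k_D(\cdot,w)\rangle=g(w)$; since $Pg$ and $g$ are both holomorphic in $\Omega$ and agree at every interior point, they coincide, so $g=Pg\in\cH^2_D$. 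The equality $\|g\|=\|f\|$ then follows from the isometry of the first paragraph.
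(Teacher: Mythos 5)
Your first two paragraphs are sound and essentially reproduce facts the paper establishes en route: the isometric embedding of $\cH^2_D$ into $L^2(\sE,\sigma)$ with $\sigma$ as in \eqref{230sept21}, the fact that $B(z)=\frac{z-\bar z_0}{z-z_0}=\Phi_{\bar z_0}/\Phi_{z_0}$ (see \eqref{2-9feb21}) carries the trivial character and has unimodular boundary values, and hence that $g\in\hat\cH^2(\a)$ with $\|g\|=\|f\|$ \emph{once} membership in $\cH^2_D$ is known. The gap is your third paragraph, which is where the entire content of the lemma lives, and it is not a proof. The identity $\langle g,k_D(\cdot,w)\rangle_\sigma=g(w)$ for all $w$ is literally equivalent to the vanishing of the component of $g$ in the defect space $\cK_D=\hat\cH^2(\a)\ominus\cH^2_D$, i.e.\ to $g\in\cH^2_D$: indeed $\langle g,k_D(\cdot,w)\rangle=(Pg)(w)$, so the ``residue bookkeeping'' you defer is not a computation supporting the conclusion, it \emph{is} the conclusion. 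Worse, the tool you propose (a contour/residue evaluation of a boundary integral over $\pd\Omega$) is exactly the kind of statement that fails in this paper's setting: $\Omega$ is a Widom domain in which DCT may fail, so Cauchy-type integrals over $\pd\Omega$ need not reproduce interior values, and that failure is precisely why $\cH^2_D$ can be a proper subspace of $\hat\cH^2(\a)$ and why the singular set $\Xi$ of \eqref{23sept21} can be nonempty. Closing contours around the accumulation point at the origin requires DCT-type control that is unavailable; membership in $\hat\cH^2(\a)$ with the right character and boundary modulus emphatically does not place a function in $\cH^2_D$ here.

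The paper sidesteps this circle by parametrizing $\cH^2_D$ itself rather than testing against kernels from inside $\hat\cH^2(\a)$: with $\sigma_+$ the measure in the Herglotz representation of $m_+=(m_D)_+$, the weighted Cauchy transform $(\cC F)(z)=L_D(z)\int\frac{F(\l)}{\l-z}\,d\sigma_+(\l)$ is unitary from $L^2_{\sigma_+}$ \emph{onto} $\cH^2_D$ (it is isometric on the dense span of the functions $(\l-\bar w)^{-1}$ and sends them to the reproducing kernels $k_D(\cdot,w)$). Then $f=\cC F$ with $f(z_0)=0$ means $F\perp(\l-\bar z_0)^{-1}$ in $L^2_{\sigma_+}$; setting $G(\l)=\frac{\l-\bar z_0}{\l-z_0}F(\l)$, which has the same norm, a one-line partial-fraction identity combined with that orthogonality gives $\cC G=\frac{z-\bar z_0}{z-z_0}\,\cC F=g$. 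Membership $g\in\cH^2_D$ and the equality $\|g\|=\|f\|$ are then automatic because $\cC$ is unitary onto $\cH^2_D$. To salvage your approach you would need to replace the residue plan by an argument of this type, one that exhibits $g$ as an element of $\cH^2_D$ directly instead of attempting to verify the reproducing identity from the outside.
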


\begin{proof}
Since $D$ is fixed, we simplify notation to $m_+$. Let $\s_+$ be the measure that provides its integral representation. By this definition we have
\begin{equation}\label{1-8feb21}
\frac{m_+(z)-\overline{m_+(w)}}{z-\bar w}=\int\frac{d\s_+(\l)}{(\l-z)(\l-\bar w)}=
\left\langle
\frac1 {\l-\bar w}, \frac1 {\l-\bar z}
\right\rangle_{L^2_{\s_+}}.
\end{equation}

Now we prove that
$$
(\cC F)(z) = L_D(z)\int\frac{F(\l)}{\l-z} \, d\s_+(\l)
$$
is a unitary map from $L^2_{\s_+}$ on $\cH^2_D$. As usual we start with reproducing kernels and note that
$$
\cC\left(\frac{1}{\l-\bar w}\overline{L_D(w)}\right)(z)=L_D(z)\frac{m_+(z)-\overline{m_+(w)}}{z-\bar w}\overline{L_D(w)}=k_D(z,w).
$$
By \eqref{1-8feb21} the map is norm preserving. It is well defined on a dense set and has all reproducing kernels in the image. Thus it is unitary.

If $f=\cC F$, then $f(z_0)=0$, $z_0\in\bbC_+$, is equivalent to $F$ being orthogonal to $1/(\l-\bar z_0)$.  For such $F$, we define
$$
G(\l) = \frac{\l-\bar z_0}{\l-z_0} F(\l).
$$
Note that $\|F\|_{L^2_{\s_+}} = \|G\|_{L^2_{\s_+}}$. At the same time for $z\not=z_0$ we get
\begin{align*}
(\cC G)(z) & = L_D(z) \int\frac{\l-\bar z_0}{\l-z_0}\frac{F(\l)}{\l-z} \, d\s_+(\l) \\
& = L_D(z)\int \left(\frac{z-\bar z_0}{z-z_0}\frac{1}{\l-z} + \frac{z_0-\bar z_0}{z_0-z}\frac{1}{\l-z_0}\right) F(\l) \, d\s_+(\l).
\end{align*}
Due to the orthogonality of $F$ to $1/(\l-\bar z_0)$, we have
$$
(\cC G)(z)=\frac{z-\bar z_0}{z-z_0}(\cC F)(z)=\frac{z-\bar z_0}{z-z_0}f(z)=g(z),
$$
and by analyticity the same holds for $z=z_0$. By its definition, $\cC G\in\cH^2_D$, and by unitarity, $\|\cC G\| = \|\cC F\|$.
\end{proof}

Recall that $\Phi_{z_0}(\infty)=1$ for an arbitrary $z_0\in \Omega$. Therefore
\begin{equation}\label{2-9feb21}
v=v_{z_0}(z)=\frac{z-z_0}{z-\bar z_0}=\frac{\Phi_{z_0}(z)}{\Phi_{\bar z_0}(z)}.
\end{equation}
In particular, this implies that the characters of both complex Green functions are the same and we denote this character by $\b_0=\b_{z_0}$.

By a \textit{unitary node} \cite{AGr83} we mean a unitary operator $U$ acting from $E_1\oplus K$ to $E_2 \oplus K$. The Hilbert space $K$ is called the \textit{state space} and the Euclidean spaces $E_1,E_2$ are called \textit{coefficient spaces}.

\begin{lemma}\label{lem72} Let $\hat k^\a_{z_0}$ be the reproducing kernel of $\hat \cH^2(\a)$.
Multiplication by $1/v$ in the decomposition
\begin{equation}\label{1-9feb21}
\frac 1{v}:
\left\{\frac{\hat k^{\a+\b_0}_{\bar z_0}}{\Phi_{\bar z_0}}\right\}\oplus \cK_D\oplus
\left\{{(k_D)_{z_0}}\right\}
\to
\left\{\frac{\hat k^{\a+\b_0}_{z_0}}{\Phi_{z_0}}\right\}\oplus \cK_D\oplus
\left\{(k_D)_{\bar z_0}\right\}
\end{equation}
forms a unitary node with the state space $\cK_D$.
\end{lemma}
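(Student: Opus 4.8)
The plan is to show that multiplication by $1/v$ is a bijective isometry from the domain sum in \eqref{1-9feb21} onto the range sum; the unitary node structure with state space $\cK_D$ is then merely the repackaging of this map into its input, state and output blocks. The source of the isometry is that $|v|=1$ on $\sE$: by \eqref{2-9feb21} we have $v=\Phi_{z_0}/\Phi_{\bar z_0}$, and since the Green functions $G(\cdot,z_0)$ and $G(\cdot,\bar z_0)$ vanish continuously on $\sE$, both $|\Phi_{z_0}|$ and $|\Phi_{\bar z_0}|$ equal $1$ there, so $|1/v|=1$ on $\sE$. Hence multiplication by $1/v$ (denote it $M_{1/v}$) preserves the boundary norm \eqref{229sept21} defining $\hat\cH^2(\a)$, and the entire content of the lemma is that $M_{1/v}$ lands in the correct Smirnov-class subspaces, the only obstruction being the simple pole of $1/v$ at $z_0$ and its simple zero at $\bar z_0$.

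First I would treat the part living inside $\cH^2_D$. Writing $\cH^2_D=\{f\in\cH^2_D:f(z_0)=0\}\oplus\{(k_D)_{z_0}\}$, where $(k_D)_{z_0}$ represents evaluation at $z_0$, Lemma~\ref{lem71} together with its analogue at $\bar z_0$ shows that $M_{1/v}$ carries $\{f\in\cH^2_D:f(z_0)=0\}$ isometrically onto $\{g\in\cH^2_D:g(\bar z_0)=0\}=\cH^2_D\ominus\{(k_D)_{\bar z_0}\}$: if $f(z_0)=0$ the pole of $1/v$ is cancelled, $f/v$ inherits the zero of $1/v$ at $\bar z_0$, and $\|f/v\|=\|f\|$, the inverse being $M_v$. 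This accounts for the matching $(k_D)_{z_0}$ in the domain and $(k_D)_{\bar z_0}$ in the range.

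Next I would handle the coefficient directions through the $\Phi$-intertwiners. Multiplication by $\Phi_{\bar z_0}$ (character $\b_0$) is an isometry of $\hat\cH^2(\a)$ onto $\hat\cH^2(\a+\b_0)\ominus\{\hat k^{\a+\b_0}_{\bar z_0}\}$, so $\frac{\hat k^{\a+\b_0}_{\bar z_0}}{\Phi_{\bar z_0}}$ is exactly the one-dimensional overflow direction complementary to $\hat\cH^2(\a)$ that carries the pole $M_{1/v}$ will create. Since $1/v=\Phi_{\bar z_0}/\Phi_{z_0}$, one computes $M_{1/v}\big(\frac{\hat k^{\a+\b_0}_{\bar z_0}}{\Phi_{\bar z_0}}\big)=\frac{\hat k^{\a+\b_0}_{\bar z_0}}{\Phi_{z_0}}$, and the remaining task is to verify that this coincides with $\frac{\hat k^{\a+\b_0}_{z_0}}{\Phi_{z_0}}$ modulo $\cK_D\oplus\{(k_D)_{\bar z_0}\}$, i.e.\ that $\frac{1}{\Phi_{z_0}}(\hat k^{\a+\b_0}_{\bar z_0}-\hat k^{\a+\b_0}_{z_0})$ lands in the state and output directions. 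Finally, since the defect $\cK_D=\hat\cH^2(\a)\ominus\cH^2_D$ inherits the difference-quotient property, the compression of $M_{1/v}$ to $\cK_D$ yields the state-to-state block, its defect being absorbed by the coefficient directions. Assembling the three blocks and checking mutual orthogonality of the summands on each side (the only nontrivial verifications being $\frac{\hat k^{\a+\b_0}_{\bar z_0}}{\Phi_{\bar z_0}}\perp\cK_D$ and $\perp(k_D)_{z_0}$) exhibits $M_{1/v}$ as a bijective isometry of the domain onto the range with inverse $M_v$, which is precisely the unitarity of the node.

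The hard part will be the finite-rank bookkeeping at $z_0$ and $\bar z_0$: one must prove that $M_{1/v}$ neither gains nor loses a dimension, so that the single pole it creates is absorbed exactly by the prescribed coefficient directions and the coupling to the state space $\cK_D$ is the one forced by a unitary colligation. This is where the specific structure enters — the reproducing-kernel identities for $\hat k^{\a}$, $\hat k^{\a+\b_0}$ and $k_D$, the $\Phi$-intertwining relations, and the Wronskian identity of Lemma~\ref{l.determinant} tying $k_D$ to the canonical products $L_D$. Conceptually this is the Arov--Grossman coupling of the elementary factor $v$ to the isometric inclusion $\cH^2_D\hookrightarrow\hat\cH^2(\a)$, whose defect is by definition $\cK_D$; verifying that this coupling is unitary is the crux of the argument.
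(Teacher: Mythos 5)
You have assembled the same ingredients as the paper --- the factorization $v=\Phi_{z_0}/\Phi_{\bar z_0}$ from \eqref{2-9feb21}, unimodularity of $\Phi_{z_0},\Phi_{\bar z_0}$ on $\sE$, the rank-one decompositions induced by multiplication by $\Phi_{\bar z_0}$ and $\Phi_{z_0}$, and Lemma~\ref{lem71} for the part inside $\cH^2_D$ --- but the proof is not finished, and the plan for finishing it points in the wrong direction. The two steps you explicitly defer (that the image of $\hat k^{\a+\b_0}_{\bar z_0}/\Phi_{\bar z_0}$ lands in the right-hand three-term sum, and the treatment of the $\cK_D$-block) are exactly the content of the lemma, so they cannot be left as ``remaining tasks.'' Moreover, the machinery you propose to close them is not what is needed: the compression of multiplication by $1/v$ to $\cK_D$ is in general neither unitary nor $\cK_D$-invariant and nothing has to be proved about it; no difference-quotient property of $\cK_D$ is available or required; and the Wronskian identity of Lemma~\ref{l.determinant}, reproducing-kernel computations, and Arov--Grossman coupling theory play no role here (unitary nodes enter only as terminology for the finished object). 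There is also no dimension counting to do.

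The missing step is a single structural observation. Multiplication by $1/v$ (your $M_{1/v}$) maps $\frac 1{\Phi_{\bar z_0}}\hat\cH^2(\a+\b_0)$ unitarily onto $\frac 1{\Phi_{z_0}}\hat\cH^2(\a+\b_0)$, since it sends $h/\Phi_{\bar z_0}\mapsto h/\Phi_{z_0}$ for $h\in\hat\cH^2(\a+\b_0)$ and $|\Phi_{z_0}|=|\Phi_{\bar z_0}|=1$ on $\sE$. Writing $(\cH^2_D)_{z_0}=\{f\in\cH^2_D:\ f(z_0)=0\}$, these two ambient spaces decompose as
$$
\left\{\frac{\hat k^{\a+\b_0}_{\bar z_0}}{\Phi_{\bar z_0}}\right\}\oplus\cK_D\oplus\{(k_D)_{z_0}\}\oplus(\cH^2_D)_{z_0}
\quad\text{and}\quad
\left\{\frac{\hat k^{\a+\b_0}_{z_0}}{\Phi_{z_0}}\right\}\oplus\cK_D\oplus\{(k_D)_{\bar z_0}\}\oplus(\cH^2_D)_{\bar z_0},
$$
respectively. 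By Lemma~\ref{lem71} (together with its mirror image at $\bar z_0$, which gives surjectivity, as you correctly note), $M_{1/v}$ restricts to a unitary from $(\cH^2_D)_{z_0}$ onto $(\cH^2_D)_{\bar z_0}$. Now invoke the elementary fact that a unitary operator between two Hilbert spaces which maps a closed subspace of the first onto a closed subspace of the second must map the orthogonal complements onto each other. Those complements are precisely the two three-term sums in \eqref{1-9feb21}, so $M_{1/v}$ is unitary between them, which is the lemma. Every claim you left ``to verify'' --- for instance that $\hat k^{\a+\b_0}_{\bar z_0}/\Phi_{z_0}$ lies in $\{\hat k^{\a+\b_0}_{z_0}/\Phi_{z_0}\}\oplus\cK_D\oplus\{(k_D)_{\bar z_0}\}$ --- is then a corollary of this statement, not a prerequisite for it.
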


\begin{proof}
According to \eqref{2-9feb21}, it is evident that the following  multiplication operator is unitary,
$$
\frac 1{v}:\frac 1{\Phi_{\bar z_0}}\hat\cH^2(\a+\b_0)\to \frac 1{\Phi_{ z_0}}\hat\cH^2(\a+\b_0).
$$
Since
$$
\frac 1{\Phi_{\bar z_0}}\hat\cH^2(\a+\b_0)=\left\{\frac{\hat k^{\a+\b_0}_{\bar z_0}}{\Phi_{\bar z_0}}\right\}\oplus\hat \cH^2(\a)
$$
and
$$
\frac 1{\Phi_{z_0}}\hat\cH^2(\a+\b_0)=\left\{\frac{\hat k^{\a+\b_0}_{ z_0}}{\Phi_{z_0}}\right\}\oplus \hat\cH^2(\a),
$$
we have a unitary node
$$
\frac 1{v}:\left\{\frac{\hat k^{\a+\b_0}_{\bar z_0}}{\Phi_{\bar z_0}}\right\}\oplus \hat\cH^2(\a)\to
\left\{\frac{\hat k^{\a+\b_0}_{ z_0}}{\Phi_{z_0}}\right\}\oplus \hat\cH^2(\a).
$$
By definition $\hat \cH^2(\a)=\cK_D\oplus\cH^2_D$. Let
$$
(\cH^2_D)_{z_0}=\{f\in \cH^2_D:\ f(z_0)=0\}.
$$
In these notations $1/v$ acts unitarily from
$$
\left\{\frac{\hat k^{\a+\b_0}_{\bar z_0}}{\Phi_{\bar z_0}}\right\}\oplus \cK_D\oplus \{(k_D)_{z_0}\}\oplus (\cH^2_D)_{z_0}
$$
to
$$
\left\{\frac{\hat k^{\a+\b_0}_{ z_0}}{\Phi_{ z_0}}\right\}\oplus \cK_D\oplus \{(k_D)_{\bar z_0}\}\oplus (\cH^2_D)_{\bar z_0}.
$$
By Lemma \ref{lem71}, the multiplication by $1/v$ acts unitarily in the last two components of these decompositions. Therefore it acts unitarily in their orthogonal complements and we have \eqref{1-9feb21}.
\end{proof}

We are now in a position to prove the main result of this section.

\begin{theorem}
Let $\hat D=\hat D(\a)$ be the divisor that corresponds to the space $\hat \cH^2(\a)$. Assume that
$D\not=\hat D$, but $\cA(D)=\a$. Then the following relation uniquely defines a $J$-contractive matrix function $\fA_D(z)$,
\begin{equation}\label{1-10feb21}
\cL_{\hat D}\fA_D(z)=\cL_D(z),
\end{equation}
which is entire with respect to $1/z$.
\end{theorem}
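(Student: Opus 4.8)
The plan is to solve the defining relation \eqref{1-10feb21} explicitly and then verify the three assertions—that $\fA_D$ is uniquely defined, $J$-contractive, and entire in $1/z$—in turn.

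\emph{Definition and determinant.} By Lemma~\ref{l.determinant} one has $\det\cL_{\hat D}(z)=1/R_{D_0}(z)$, which is not identically zero, so $\cL_{\hat D}(z)$ is invertible off a discrete set and \eqref{1-10feb21} has the unique solution $\fA_D=\cL_{\hat D}^{-1}\cL_D$. Since the Wronskian determinant $\det\cL_D=1/R_{D_0}$ does not depend on the divisor, $\det\fA_D=\det\cL_D/\det\cL_{\hat D}=1$; in particular $\fA_D\in\SL(2,\bbC)$.

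\emph{$J$-contractivity.} I would read this off the unitary node of Lemma~\ref{lem72}: for fixed $z_0\in\bbC_+$, multiplication by $1/v_{z_0}$ is unitary in the decomposition \eqref{1-9feb21}, with state space $\cK_D$ and two-dimensional coefficient spaces built from the kernels $\hat k^{\a+\b_0}_{\cdot}/\Phi_{\cdot}$ and $(k_D)_{\cdot}$. The characteristic (input–output) function of a unitary node is contractive in $\bbC_+$, and equipping the coefficient spaces with the indefinite form attached to $J$ turns this into $\frac{J-\fA_D(z)J\fA_D(z)^*}{z-\bar z}\ge 0$ once the input–output matrix is identified—through the reproducing-kernel representation \eqref{3-4feb21}—with $\cL_{\hat D}^{-1}\cL_D=\fA_D$. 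As a concrete confirmation, conjugating by $\cL_{\hat D}$ and using $\cL_{\hat D}\fA_D=\cL_D$ reduces the desired inequality to
$$
\frac{\cL_{\hat D}J\cL_{\hat D}^*-\cL_D J\cL_D^*}{z-\bar z}\ge 0 .
$$
From \eqref{2-10feb21} and \eqref{3-4feb21}, the $(2,2)$ entry of $\cL_D J\cL_D^*/(z-\bar z)$ equals $k_D(z,z)$ and the $(1,1)$ entry equals $-k_{D_*}(z,z)$. Hence the $(2,2)$ entry of the difference is $k_{\hat D}(z,z)-k_D(z,z)\ge 0$, the diagonal of the reproducing kernel of the defect $\cK_D=\hat\cH^2(\a)\ominus\cH^2_D$, while the $(1,1)$ entry is $k_{D_*}(z,z)-k_{\hat D_*}(z,z)$. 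The delicate point is that the divisor-independence of the Wronskian determinant, together with the extremal characterization of $\hat D$ (for which $\cH^2_{\hat D}=\hat\cH^2(\a)$ is the full space), forces the reversed inclusion $\cH^2_{\hat D_*}\subseteq\cH^2_{D_*}$ for the companion ``minus'' spaces, so that the $(1,1)$ entry is also $\ge 0$; positivity of the whole Hermitian matrix is then exactly the matrix reproducing kernel of the defect.

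\emph{Entirety in $1/z$.} Because $\cA(D)=\cA(\hat D)=\a$, the factors $L_D,L_{D_1},L_{D_*},L_{(D_1)_*}$ carry the same characters as the corresponding factors for $\hat D$ (using $\cA(D_1)=\a-\fj$, where $\fj$ is the character of $\sqrt{-z}$). Expanding $\fA_D=R_{D_0}\,\mathrm{adj}(\cL_{\hat D})\,\cL_D$ one sees every entry is character-free—the diagonal entries carrying a single factor $\sqrt{-z}$ and the off-diagonal entries the factors $1$ and $-z$—so all four entries are single-valued in $\Omega$. Reality of $\cL_D$ and $\cL_{\hat D}$ on $(-\infty,0)$ gives $\fA_D(\bar z)=\overline{\fA_D(z)}$, and the reflectionless boundary relations on $\sE$ (which make $\cL_D$ and $\cL_{\hat D}$ jump across $\sE$ by the \emph{same} left factor) imply that the boundary values of $\fA_D$ from above and below agree, so $\fA_D$ continues analytically across $\sE\setminus\{0\}$. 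As $\sE$ accumulates only at the origin, the only possible singularity is $z=0$; in the variable $w=1/z$, under which the surviving $\sqrt{-z}$ factors become single-valued, $\fA_D$ is holomorphic on all of $\bbC$, i.e.\ entire in $1/z$.

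The hard part will be the entirety step, and within it the claim that $\cL_D$ and $\cL_{\hat D}$ share the same left jump across $\sE$: this is precisely where the reflectionless structure and the special role of the origin must be exploited, and it is considerably more subtle than the algebraic identity $\det\fA_D=1$ or the node-theoretic $J$-contractivity. The sign bookkeeping in the concrete kernel computation—verifying that the minus-space inclusion reverses the plus-space inclusion—is a secondary but genuine point, which the node argument of Lemma~\ref{lem72} is designed to bypass.
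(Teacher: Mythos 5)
Your overall architecture is the same as the paper's: uniqueness and $\det\fA_D\equiv 1$ from the Wronskian identity of Lemma~\ref{l.determinant}, $J$-contractivity from the unitary node of Lemma~\ref{lem72}, and entirety in $1/z$ from single-valuedness plus analytic continuation across $\sE$. Indeed the paper proves the theorem by adapting Theorem~4.7 of \cite{BLY2}, whose content is precisely the step you state but do not prove: that the characteristic function of the node \eqref{1-9feb21} is identified with $\cL_{\hat D}^{-1}\cL_D$. Up to that (citable) step, your sketch follows the paper's route. However, two of your steps have genuine gaps.

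First, your ``concrete confirmation'' of $J$-contractivity is circular exactly where it matters. The $(2,2)$ entry $k_{\hat D}(z,z)-k_D(z,z)\ge 0$ does follow from $\cH^2_D\subseteq\hat\cH^2(\a)$, but the $(1,1)$ entry $k_{D_*}(z,z)-k_{\hat D_*}(z,z)\ge 0$ is precisely the corresponding corner of the inequality $\frac{\cL_{\hat D}J\cL_{\hat D}^*-\cL_D J\cL_D^*}{z-\bar z}\ge 0$ that you are trying to establish: the extremality of $\hat D$ concerns the character $\a$ and says nothing a priori about how the spaces with character $-\a$ (those attached to $D_*$ and $\hat D_*$) are nested, and even with both diagonal entries in hand you would still need the full $2\times 2$ positivity. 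Second, and more seriously, your entirety argument skips the gap endpoints. The jump-matching across $\sE$ is fine --- the jump factor is the constant swap matrix, by the pseudocontinuation relations $\overline{L_D(\l+i0)}=L_{D_*}(\l+i0)$ on $\sE$, hence automatically the same for $D$ and $\hat D$ --- but it yields analyticity only across the \emph{open} spectral intervals; the points $a_j,b_j$ remain isolated singularities, and the entries of $\fA_D$ genuinely blow up there: near $b_j$ each of $L_D$ and $L_{\hat D}$ contributes a factor $(z-b_j)^{-1/2}$, while the prefactor $R_{D_0}=1/\det\cL_{\hat D}$ contributes only $(z-b_j)^{+1/2}$, so entries grow like $|z-b_j|^{-1/2}$. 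One must invoke the removability criterion for isolated singularities of growth $o(|z-x_*|^{-1})$; this is the step the paper singles out explicitly (``all possible isolated singularities at the end points $a_j,b_j$ are removable''). Finally, your treatment of $z=\infty$ is off: since $[b_1,\infty)\subset\sE$, infinity is a boundary point of $\Omega$, not an interior one, and the claim that ``$\sqrt{-z}$ becomes single-valued in $w=1/z$'' is false; what is actually needed is the same jump argument across $(b_1,\infty)$ together with boundedness of the entries at infinity, which follows from the expansions \eqref{e.LDoverPhic}.
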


\begin{proof}
Theorem 4.7 from \cite{BLY2} can be word-by-word adapted to the current situation (mainly according to Lemma \ref{lem72}). Since $\Omega$ does not obey DCT, we cannot use \cite[Lemma 4.8]{BLY2}. But the set $\sE$ is a system of intervals that accumulates to the origin only. For this reason, classical complex analysis methods can be used to show that $\fA_D(z)$ is analytic on the open spectral intervals (see the symmetry property (c) \cite[Theorem 4.7]{BLY2}), and then that all possible isolated singularities at the end points $a_j,b_j$, $j\ge 1$, are removable. Since the matrix is also analytic at infinity, the only possible singularity is the origin. Thus $\fA_{D}(z)$ is entire w.r.t. $1/z$.
\end{proof}

\subsection{The Hard Part}\label{sect63}

For $\sE$ with a minimal violation of DCT, we get that  $\fA_D(z)$ is a polynomial with respect to $1/z$.

\begin{theorem}[Potapov \cite{EP73}]  Assume that $\fA$ is a $J$-inner (contractive in the upper half-plane) matrix-function that is a polynomial w.r.t. $1/z$. Then $\fA(z)$ admits the following product representation,
\begin{equation}\label{1-19feb21}
\fA(z)=\fA(\infty)\prod_{j=1}^n\fA_j(z),\quad \fA_j(z)=\left(I-\frac 1 z\cE_jJ\right ),
\end{equation}
where $\cE_j\ge 0$ and $(\cE_j J)^2=0$. In other words, the matrices $\cE_j$ are of the form
$$
\cE_j=\rho_j\begin{pmatrix}
\cos\varphi_j\\ \sin\varphi_j
\end{pmatrix}
\begin{pmatrix}
\cos\varphi_j&\sin\varphi_j
\end{pmatrix},\quad \phi_j\in\bbR,\ \rho_j>0.
$$
\end{theorem}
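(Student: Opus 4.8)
The plan is to argue by induction on the degree $n$ of $\fA$ as a polynomial in $1/z$, peeling off one elementary factor at a time from the right. Throughout, write $\mu=1/z$ and $\fA(z)=\sum_{k=0}^n C_k\mu^k$ with $C_n\ne 0$. The $J$-inner boundary relation $\fA(\l)J\fA(\l)^*=J$, which holds for a.e.\ real $\l$ and hence (by analyticity away from the origin) as an identity, becomes on the real axis the polynomial identity $\sum_{k,l}C_kJC_l^*\mu^{k+l}=J$ in the real variable $\mu$. First I would normalize: letting $\mu\to 0$ gives $C_0JC_0^*=J$, so $C_0=\fA(\infty)$ is a constant $J$-unitary matrix; taking determinants yields $|\det C_0|=1$, so $C_0$ is invertible. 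Replacing $\fA$ by $\fA(\infty)^{-1}\fA$, which is again $J$-inner, polynomial in $1/z$ of the same degree, and equal to $I$ at infinity, I may assume $C_0=I$.

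Next I would extract structural data from the top coefficients. The coefficient of $\mu^{2n}$ gives $C_nJC_n^*=0$, which forces $\rank C_n\le 1$; since $C_n\ne 0$ we have $C_n=uw^*$ with, necessarily, $w^*Jw=0$. Because $J$ is real and skew, $w^*Jw=0$ means $w$ is a scalar multiple of a real vector. Let $v=(\cos\varphi_n,\sin\varphi_n)^T$ be the real unit vector orthogonal to $w$ and set $\cE_n=\rho_n vv^*$ for a scalar $\rho_n$ to be determined; then $\cE_n\ge0$ iff $\rho_n\ge0$, and a direct computation using $v^TJv=0$ gives $(\cE_nJ)^2=0$, so $\fA_n(z)=I-\frac1z\cE_nJ$ is an elementary factor whose inverse is simply $\fA_n^{-1}=I+\frac1z\cE_nJ$.

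I would then carry out the degree reduction. Form $\tilde\fA=\fA\fA_n^{-1}=\fA+\mu\fA\cE_nJ$. Its coefficient of $\mu^{n+1}$ is $C_n\cE_nJ$, which vanishes because $w\perp v$ forces $C_n\cE_n=0$; its coefficient of $\mu^n$ is $C_n+C_{n-1}\cE_nJ$. The coefficient of $\mu^{2n-1}$ in the $J$-inner identity reads $C_nJC_{n-1}^*+C_{n-1}JC_n^*=0$, and feeding in $C_n=uw^*$ together with $Jw\parallel v$ one deduces $C_{n-1}v\parallel u$; this collinearity is exactly what is needed to solve $C_n+C_{n-1}\cE_nJ=0$ for the single scalar $\rho_n$. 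With this choice $\tilde\fA$ is a polynomial in $1/z$ of degree at most $n-1$ with $\tilde\fA(\infty)=I$, and it is $J$-unitary on $\bbR$ since $\fA$ and $\fA_n$ are. A division lemma from Potapov theory, asserting that the right quotient of a $J$-contractive function by a $J$-inner factor is again $J$-contractive, then shows $\tilde\fA$ is $J$-inner; applying the inductive hypothesis to $\tilde\fA$ and reinstating $\fA(\infty)$ yields \eqref{1-19feb21}.

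The main obstacle is the positivity $\rho_n>0$ (equivalently $\cE_n\ge0$). This cannot be read off from the boundary identity alone, which is insensitive to the sign, and must instead be extracted from the genuine $J$-contractivity of $\fA$ in the open upper half-plane. Concretely, one compares the leading behaviour near the pole at the origin of the positive kernel $\bigl(J-\fA(z)J\fA(z)^*\bigr)/(z-\bar z)\ge0$ with that of the peeled factor, for which $\bigl(J-\fA_n(z)J\fA_n(z)^*\bigr)/(z-\bar z)=\cE_n/|z|^2$; matching the sign of the dominant term pins $\rho_n>0$. An alternative and arguably cleaner route is via minimal realization: since all poles of $\fA$ sit at the origin, one writes $\fA(z)=I+C(zI-A)^{-1}B$ with $A$ nilpotent, and $J$-contractivity becomes a Stein inequality whose positive solution (the associated Gramian) can be triangularized, each step producing exactly one Potapov section $I-\frac1z\cE_jJ$ with $\cE_j\ge0$ and $(\cE_jJ)^2=0$ forced by nilpotency. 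Either way, the positivity of the increments is the crux; the algebra of peeling and degree reduction is routine once it is in hand.
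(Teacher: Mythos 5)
First, a contextual remark: the paper does not prove this theorem at all --- it is quoted from Efimov--Potapov \cite{EP73} --- so your proposal must stand on its own as a proof of this classical result. Your inductive peeling skeleton is a standard and legitimate strategy, and the algebra of the first steps is sound: the normalization by $\fA(\infty)$, the rank-one structure $C_n=uw^*$ with $w$ proportional to a real vector, the choice $\cE_n=\rho_n vv^*$ with $v\perp w$, and the collinearity $C_{n-1}v\parallel u$ extracted from the $\mu^{2n-1}$ coefficient are all correct and verifiable. The fatal problem is the step where you conclude that $\tilde\fA=\fA\fA_n^{-1}$ is again $J$-inner. The ``division lemma'' you invoke --- that a right quotient of a $J$-contractive function by a $J$-inner factor is again $J$-contractive --- is false as stated. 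Take
\[
\fA(z)=\begin{pmatrix}1&-\rho_1/z\\0&1\end{pmatrix},\qquad
\fB(z)=\begin{pmatrix}1&0\\ \rho_2/z&1\end{pmatrix},\qquad \rho_1,\rho_2>0,
\]
both elementary $J$-inner factors (with $\varphi=0$ and $\varphi=\pi/2$ respectively). Then
\[
\fC(z)=\fA(z)\fB(z)^{-1}=\begin{pmatrix}1+\rho_1\rho_2/z^2&-\rho_1/z\\-\rho_2/z&1\end{pmatrix}
\]
is polynomial in $1/z$ and $J$-unitary on $\bbR\setminus\{0\}$, yet at $z=i$ the kernel $\bigl(J-\fC(z) J\fC(z)^*\bigr)/(z-\bar z)$ has $(2,2)$ entry $-\rho_2<0$, so $\fC$ is not $J$-contractive. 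Hence boundary $J$-unitarity plus your degree count does not yield $J$-innerness of the quotient. The correct division statements require the factor to be an actual divisor in the reproducing-kernel (de Branges space) sense, i.e., one needs the pointwise inequality $\bigl(J-\fA(z)J\fA(z)^*\bigr)/(z-\bar z)\ge \tilde\fA(z)\cE_n\tilde\fA(z)^*/|z|^2$ throughout $\bbC_+$, and establishing this for your peeled factor $\fA_n$ is precisely the analytic heart of Potapov's theorem; invoking it from ``Potapov theory'' makes the argument circular.

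There is also a smaller but genuine hole in the peeling algebra: the derived collinearity $C_{n-1}v\parallel u$ includes the degenerate possibility $C_{n-1}v=0$, in which case $C_n+C_{n-1}\cE_nJ=C_n\neq0$ for \emph{every} $\rho_n$, and no elementary factor can be split off on the right. This case can in fact be excluded for $n\ge2$ (positivity of the kernel along rays $z=re^{i\theta}$ forces $C_nJC_j^*=0$, hence $C_jv=0$, successively for $j=n-2,n-3,\dots$, which drives $\fA$ into the form $I+q(1/z)vw^*$ with $q$ a real polynomial, and contractivity in $\bbC_+$ then forces $\deg q\le1$), but your write-up does not notice the issue. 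Your sketch for $\rho_n>0$ --- comparing leading behaviour of the kernel near the origin --- is the right idea (along rays it yields $C_nJC_{n-1}^*\ge0$, which pins $\rho_n>0$ in the non-degenerate case), but note that it produces a one-point asymptotic inequality, not the pointwise kernel domination needed for the division step above. The realization/Stein-equation route you mention at the end is indeed the standard rigorous path for rational $J$-inner functions, but in your proposal it is a one-sentence sketch rather than a proof, so as written the argument has genuine gaps.
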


Note that
$$
\frac{J-\fA_j(z)J\fA_j(z)^*}{z-\bar z}=\frac 1{|z|^2}\cE_j\ge 0,
$$
that is, every such product represents a $J$-inner polynomial.
We choose the shortest such factorization, i.e., the consecutive $\varphi_j$ are distinct: otherwise two consecutive terms can be combined into one factor.

\begin{corollary}\label{c11oct21}
Assume that $\fA_D(z)= \mathrm{const}$. Then, $\hat D=D$.
\end{corollary}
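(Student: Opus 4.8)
The plan is to turn the defining relation \eqref{1-10feb21} around and read the divisor data off the entries of $\cL_D$. Since $\det\cL_{\hat D}(z)=1/R_{D_0}(z)\ne 0$ by \eqref{330sept21}, \eqref{1-10feb21} is equivalent to $\cL_D(z)=\cL_{\hat D}(z)\fA_D$. Assuming $\fA_D\equiv A$ constant, I would exploit the asymptotics $L_D(z),L_{D_*}(z),L_{D_1}(z),L_{(D_1)_*}(z)\to 1$ as $z\to-\infty$ (see \eqref{e.LDoverPhic}) against the $\sqrt{-z}$ prefactors that appear in the first column of $\cL_D$, see \eqref{2-10feb21}, to first pin down $A$.

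Write $A=\begin{pmatrix}a&b\\c&d\end{pmatrix}$ and compare $\cL_D=\cL_{\hat D}A$ columnwise. The second column reads
\[
\begin{pmatrix}L_{D_*}(z)\\ L_D(z)\end{pmatrix}=b\begin{pmatrix}\sqrt{-z}\,L_{(\hat D_1)_*}(z)\\ -\sqrt{-z}\,L_{\hat D_1}(z)\end{pmatrix}+d\begin{pmatrix}L_{\hat D_*}(z)\\ L_{\hat D}(z)\end{pmatrix}.
\]
As $z\to-\infty$ the left-hand side tends to $\begin{pmatrix}1\\1\end{pmatrix}$, while the $b$-term grows like $\sqrt{-z}$; hence $b=0$, and then matching the limits gives $d=1$. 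The leading $\sqrt{-z}$ term in the first column forces $a=1$, so $A=\begin{pmatrix}1&0\\ c&1\end{pmatrix}$ (with $c$ real by the symmetry $\overline{\cL_D(\bar z)}=\cL_D(z)$, though the value of $c$ will be irrelevant). With $b=0,d=1$ the second-column identity collapses to the two scalar identities $L_{D_*}(z)=L_{\hat D_*}(z)$ and $L_D(z)=L_{\hat D}(z)$.

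Next I would recover the divisor. Multiplying the two identities and using $L_DL_{D_*}=\prod_{j\ge1}\frac{z-\l_j}{z-b_j}$ (the identity established inside the proof of Lemma~\ref{l.determinant}) gives
\[
\prod_{j\ge1}\frac{z-\l_j}{z-b_j}=\prod_{j\ge1}\frac{z-\hat\l_j}{z-b_j},
\]
so these two meromorphic functions coincide; comparing their zeros in each disjoint gap $(a_j,b_j)$ (where $\l_j,\hat\l_j\in[a_j,b_j]$) yields $\l_j=\hat\l_j$ for every $j$. Feeding this back into $L_D=L_{\hat D}$ via the product \eqref{11jan211}, the outer bracket cancels and I am left with $\prod_{j\ge1}\Phi_{\l_j}(z)^{(\e_j-\hat\e_j)/2}\equiv 1$. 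For each interior index $\l_j\in(a_j,b_j)\subset\Omega$ the factor $\Phi_{\l_j}$ is a nonconstant complex Green function with a genuine Blaschke zero at $\l_j$, and these zeros lie at distinct points, so no cancellation is possible and the set of interior indices with $\e_j\ne\hat\e_j$ must be empty; at the endpoints $\l_j\in\{a_j,b_j\}$ the sign $\e_j$ is immaterial by the identification in $I_j$ (Definition~\ref{def127}). Hence $(\l_j,\e_j)=(\hat\l_j,\hat\e_j)$ for all $j$, i.e. $D=\hat D$.

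The steps I expect to be most delicate are the justification of the termwise $z\to-\infty$ limits and the identification of the infinite products as honest meromorphic functions, both of which are guaranteed by the Widom condition underlying the convergence of all the canonical products here, and the clean treatment of the endpoint coordinates. Conceptually this computation is the concrete shadow of the following: $\fA_D$ is the characteristic (transfer) function of the unitary node of Lemma~\ref{lem72}, whose state space is the defect $\cK_D=\hat\cH^2(\a)\ominus\cH^2_D$; constancy of $\fA_D$ forces this node to be trivial, i.e. $\cK_D=\{0\}$ and $\cH^2_D=\hat\cH^2(\a)$, which is exactly $D=\hat D$. Equivalently, through the Potapov factorization \eqref{1-19feb21} a constant $J$-inner matrix is the empty product ($n=0$), corresponding to a zero-dimensional defect.
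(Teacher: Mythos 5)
Your proposal is correct and follows essentially the same route as the paper: compare the identity $\cL_{\hat D}\fA_D=\cL_D$ entrywise against the normalization $L\to 1$ of the canonical products at $z\to-\infty$ to force the constant matrix to be lower unipotent, deduce $L_D=L_{\hat D}$ (and $L_{D_*}=L_{\hat D_*}$), and read the divisor off the products. The only minor differences are cosmetic: the paper fixes the remaining entry $h_D=0$ by evaluating the bottom row at $z\to 0$ (using $(n_D)_+(0)=(n_{\hat D})_+(0)=0$), whereas you observe that this entry never enters the second column and is therefore irrelevant; and you spell out the final implication $L_D=L_{\hat D}\Rightarrow D=\hat D$ via $L_DL_{D_*}=\prod_{j\ge1}\frac{z-\l_j}{z-b_j}$ and the inner--outer factorization, a step the paper leaves implicit.
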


\begin{proof}
This constant is necessarily of the form
$$
\fA_D(z)=\begin{pmatrix}
1&0\\ h_D&1
\end{pmatrix}.
$$
In the bottom row we have
$$
\lim_{z\to 0}\begin{pmatrix}
(n_{\hat D})_+(z)&1
\end{pmatrix}\begin{pmatrix}
1&0\\ h_D&1
\end{pmatrix}=
\lim_{z\to 0}\frac{L_D(z)}{L_{\hat D}(z)}
\begin{pmatrix}
(n_{D})_+(z)&1
\end{pmatrix}.
$$
Therefore,
$$
\lim_{z\to 0}\frac{L_D(z)}{L_{\hat D}(z)}=1.
$$
By definition, $(n_D)_+(0)=(n_{\hat D})_+(0)=0$. Thus $h_D=0$, $L_{\hat D}= L_D$, and consequently $\hat D= D$.
\end{proof}

\begin{lemma} \label{lemma2mar1}
$D \in \cD_b(\E)$ if and only if $n_\pm$ have asymptotic behavior
\begin{align}
n_+(z) & =  \sigma_+ z + o(\lvert z \rvert), \qquad z \uparrow 0 \label{24feb3} \\
n_-(z) & =  \sigma_- z + o(\lvert z \rvert), \qquad z \uparrow 0  \label{24feb4}
\end{align}
for some $\sigma_+, \sigma_- > 0$.
\end{lemma}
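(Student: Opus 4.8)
The plan is to connect the divisor condition $D\in\cD_b(\sE)$, which by Theorem~\ref{l13sept21} and the discussion surrounding \eqref{53sept21} is equivalent to $0$ being an eigenvalue of $L_{V_D}$, to the boundary behavior of the intermediate functions $n_\pm$ at the origin. The key object is the relation \eqref{dd1-4feb21}, namely $n_+(z)+n_-(z)=-1/R(z)$, together with the product formula \eqref{resolventproduct} for $R=R_D$. Recall that $D\in\cD_b(\sE)$ is precisely the condition $\sum\log(b_j/\l_j)<\infty$, which we established earlier is equivalent to $\lim_{z\to0}(-z)R_D(z)>0$, i.e.\ to $R_D$ carrying a point mass at the origin.

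The forward direction is the cleaner one. Assuming $D\in\cD_b(\sE)$, we have $R_D(0)=\infty$ in the precise sense that $(-z)R_D(z)\to c>0$ as $z\uparrow 0$, so that $-1/R_D(z)\sim z/c$ vanishes linearly. Thus from \eqref{dd1-4feb21},
\[
n_+(z)+n_-(z)=-\frac{1}{R_D(z)} = \frac{z}{c}\bigl(1+o(1)\bigr),\qquad z\uparrow 0 .
\]
Since both $n_+$ and $n_-$ are Herglotz functions that are negative and monotone on $\bbR_-$ (as noted after \eqref{9-2feb21} and \eqref{dd1-4feb21}), each has a finite nonpositive limit at $0$; the content of the eigenvalue condition is exactly that these limits vanish and the decay is linear. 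I would argue this by splitting: the leading Herglotz asymptotics force $n_\pm(z)=\sigma_\pm z + o(|z|)$ with $\sigma_\pm\ge 0$, and the positivity $\sigma_\pm>0$ reflects that $0$ is not an interior point of the essential spectrum but an isolated eigenvalue, so the resolvent has a genuine simple pole distributed between the two half-line data. Concretely, $\sigma_\pm$ are the reciprocals of the normal derivatives / the weights of the point mass that $n_\pm$ inherit, and their strict positivity follows from the reflectionless normalization $n_+(\l)=-\overline{n_-(\l)}$ on $\sE$ forcing a nontrivial split of the pole $c$ into $\sigma_+^{-1}+\sigma_-^{-1}=c^{-1}$ with both summands finite and positive.

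For the converse, assuming \eqref{24feb3}--\eqref{24feb4} with $\sigma_\pm>0$, adding the two asymptotics gives $n_+(z)+n_-(z)=(\sigma_++\sigma_-)z+o(|z|)$, so by \eqref{dd1-4feb21} we get $-1/R_D(z)=(\sigma_++\sigma_-)z+o(|z|)$, whence $(-z)R_D(z)\to 1/(\sigma_++\sigma_-)>0$ as $z\uparrow 0$. This is exactly the condition $\lim_{z\to0}(-z)R_D(z)>0$, which we have already shown is equivalent to \eqref{53sept21}, i.e.\ to $D\in\cD_b(\sE)$. The point to check carefully is that the linear (rather than merely $o(1)$-vanishing) behavior is genuinely needed: a sublinear decay of $n_++n_-$ would correspond to $(-z)R_D\to 0$, meaning no point mass and $D\notin\cD_b(\sE)$.

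\textbf{The main obstacle} I anticipate is the rigorous justification that, for Herglotz functions $n_\pm$ analytic across the spectral gap containing the origin's neighborhood in $\Omega$, the one-sided limit and the linear rate are controlled simultaneously, and in particular that the strict inequalities $\sigma_\pm>0$ hold rather than one of them degenerating to $0$. This requires using the reflectionless condition \eqref{reflectionless} on $\sE$ to tie the imaginary boundary values of $n_+$ and $n_-$ together near the origin, so that the point mass at $0$ cannot concentrate entirely in one of the half-line spectral measures; symmetrically distributing it is what yields two strictly positive slopes. The representation \eqref{additivesplitmpm} for the split of $-1/R_D$ into $m_\pm$, specialized near $z=0$, should make this distribution explicit and pin down $\sigma_\pm$ in terms of the residue structure of $R_D$ at the origin.
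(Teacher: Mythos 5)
Your reduction of $D\in\cD_b(\sE)$ to the point--mass condition $\lim_{z\uparrow0}(-z)R_D(z)=c>0$, and your converse direction (adding \eqref{24feb3} and \eqref{24feb4} and invoking \eqref{dd1-4feb21}) coincide with the paper's argument and are correct. The genuine gap is in the forward direction, precisely at the step you yourself flag as the ``main obstacle'': you assert that ``the leading Herglotz asymptotics force $n_\pm(z)=\sigma_\pm z+o(|z|)$ with $\sigma_\pm\ge 0$.'' This is false for a general Herglotz function vanishing at a boundary point: $-\sqrt{-z}$ is Herglotz and vanishes at $0$, but sublinearly, with infinite slope. What is automatic (Herglotz representation plus monotone convergence) is only
\[
\sigma_+:=\lim_{z\uparrow 0}\frac{n_+(z)}{z}=\int_{[0,\infty)}\frac{d\nu_+(\lambda)}{\lambda^2}\in(0,+\infty],
\]
where strict positivity holds simply because $n_+$ is nonconstant --- no reflectionless input is needed for that. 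The entire substance of this direction is ruling out $\sigma_+=+\infty$, and your proposal contains no working mechanism for it. The paper's mechanism is a domination argument: since $n_-\le 0$ on $(-\infty,0)$, one has $n_+(z)=-1/R(z)-n_-(z)\ge -1/R(z)$ there, hence for $z<0$
\[
\frac{n_+(z)}{z}\le\frac{-R(z)^{-1}}{z}\longrightarrow \frac1c<\infty ,
\]
so $\sigma_+\le 1/c$, and symmetrically for $\sigma_-$. The reflectionless condition on $\sE$, which you propose to use to ``tie the boundary values together,'' is not what does the work; it is the sign of the \emph{other} function on the negative half-axis that bounds each slope.

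Two further inaccuracies occur in the same passage. First, the relation between the slopes and the point mass is $\sigma_++\sigma_-=c^{-1}$ (immediate from $n_++n_-=-1/R$), not $\sigma_+^{-1}+\sigma_-^{-1}=c^{-1}$ as you wrote. Second, $0$ is not an ``isolated eigenvalue'': by \eqref{setE2} the bands of $\sE$ accumulate at the origin, so the eigenvalue at $0$ is embedded at the edge of the essential spectrum --- which is exactly why soft resolvent-pole heuristics do not apply here and the monotone-convergence/domination argument is needed.
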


\begin{proof}
$D \in \cD_b(\E)$ if and only if $R$ has a pure point at $0$. Denote its weight by $\sigma_0 > 0$.  The point mass can be recovered as
\[
\sigma_0 = \lim_{z\uparrow 0} (-z) R(z),
\]
so $D \in \cD_b(\E)$ is equivalent to the claim that $R$ has asymptotic behavior
\begin{equation}\label{24feb5}
- R(z)^{-1} = \sigma_0^{-1} z + o(\lvert z \rvert), \qquad z \uparrow 0
\end{equation}
for some finite $\sigma_0 > 0$.

Adding together \eqref{24feb3}, \eqref{24feb4} gives \eqref{24feb5}. Conversely, assume $D \in \cD_b(\E)$. Then $-R(z)^{-1} \to 0$ as $z \uparrow 0$ implies $n_-(0) = 0$.

By the Herglotz representation for $n_+$,
\[
n_+(z) = n_+(-1) + \int_{[0,\infty)} \left( \frac{1}{\lambda -z} - \frac{1}{\lambda+1} \right) d \nu_+(\lambda).
\]
By monotone convergence,
\[
0 = \lim_{z\uparrow 0} n_+(z) = n_+(-1) + \int_{[0,\infty)} \left( \frac{1}{\lambda} - \frac{1}{\lambda+1} \right) d\nu_+(\lambda).
\]
Dividing by $z$ gives
\[
\frac{n_+(z)}{z} = \int_{[0,\infty)} \frac{1}{ \lambda (\lambda - z)} \, d\nu_+(\lambda),
\]
so by another monotone convergence,
\[
\sigma_+ := \lim_{z \uparrow 0} \frac{n_+(z) }{z} = \int_{[0,\infty)} \frac{1}{ \lambda^2} \, d\nu_+(\lambda).
\]
This is strictly positive, because $n_+$ is not a constant function. Moreover, $\sigma_+$ cannot be infinite, because it is majorized by the directional derivative of $-R^{-1}$: due to monotonicity of $n_-$ on $(-\infty,0)$,
\[
\frac{n_+(z) }{z} \le \frac{- R(z)^{-1} + R(0)^{-1}}{z}
\]
and the latter has a finite limit by \eqref{24feb5}. This proves \eqref{24feb3}; the proof of \eqref{24feb4} is analogous.
\end{proof}

\begin{lemma}
If for some divisor $D \in \cD(\E)$, at least one of the functions $n_\pm$ has asymptotics given by \eqref{24feb3} or \eqref{24feb4}, then there exists a divisor $\tilde D \in \cD_b(\E)$, $\tilde D \neq D$, such that $\cA(\tilde D) = \cA(D)$. In particular, then $\cA(D) \in \cA(\cD_b(\E))\subset \Xi$.
\end{lemma}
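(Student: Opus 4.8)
The plan is to reduce to the case where $n_+$ obeys the asymptotics \eqref{24feb3}, and then to apply the shift construction of Lemma~\ref{theo53} not to $D$ itself but to the divisor $D_1$ associated to $D$ by Lemma~\ref{lem51}, exploiting that the Abel map sends $D\mapsto D_1$ by a shift by the $2$-torsion character $\fj$.

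First I would record two structural facts. By the reflection $D\mapsto D_*=\{(\l_j,-\e_j)\}_{j\ge1}$ the Abel map formula \eqref{12sept21} gives $\cA(D_*)=-\cA(D)$, the class $\cD_b(\sE)$ is preserved (its defining condition \eqref{53sept21} involves only the $\l_j$), and $(m_{D_*})_+=(m_D)_-$ (compare \eqref{3-2feb21} and \eqref{3_m-2feb21}). Hence ``$n_-$ of $D$ obeys \eqref{24feb4}'' is equivalent to ``$n_+$ of $D_*$ obeys \eqref{24feb3}'', and a divisor $\tilde D'\in\cD_b(\sE)$ with $\cA(\tilde D')=\cA(D_*)$ yields $\tilde D=(\tilde D')_*\in\cD_b(\sE)$ with $\cA(\tilde D)=\cA(D)$. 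So I may assume $n_+(z)=\sigma_+ z+o(\lvert z\rvert)$ with $0<\sigma_+<\infty$. Second, since $n_+=(m_D)_+-(m_D)_+(0)$ is single-valued in $\Omega$, its representation \eqref{1-2feb21}, namely $n_+=-\sqrt{-z}\,L_{D_1}/L_D$, forces $\fj+\cA(D_1)-\cA(D)=0$; as $(\sqrt{-z})^2=-z$ is single-valued we have $2\fj=0$, whence $\cA(D_1)=\cA(D)+\fj$.

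The decisive step is to show that \eqref{24feb3} forces $D_1\notin\cD_b(\sE)$ with a \emph{proper} interval $(m_{D_1})_+(0)<-(m_{D_1})_-(0)$. Running the identity used in the Corollary following Lemma~\ref{theo53} at the endpoint $\mu=(m_D)_+(0)$ (where the shifted functions are precisely $n_\pm$) gives $R_{D_1}(z)=-\tfrac1z\,n_+(z)\,n_-(z)\,R_D(z)$. Writing $n_-(z)R_D(z)=-1-n_+(z)R_D(z)$ from $n_++n_-=-1/R_D$, and using $n_+(z)/z\to\sigma_+$ together with the behaviour of $R_D$ at $0$ (finite if $D\notin\cD_b(\sE)$; and $R_D\sim\sigma_0/(-z)$, $n_\pm\sim\sigma_\pm z$, $\sigma_++\sigma_-=\sigma_0^{-1}$ if $D\in\cD_b(\sE)$, by Lemma~\ref{lemma2mar1}), a short computation gives in either case $R_{D_1}(0)\in(0,\infty)$ (explicitly $R_{D_1}(0)=\sigma_+$ in the first case and $R_{D_1}(0)=\sigma_+\sigma_-\sigma_0$ in the second). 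Thus $R_{D_1}$ has no point mass at $0$, so $D_1\notin\cD_b(\sE)$; and since $(m_{D_1})_+(0)+(m_{D_1})_-(0)=-1/R_{D_1}(0)<0$ the interval is proper, which is exactly condition \eqref{11jan214} for $D_1$, so Lemma~\ref{theo53} applies to $D_1$.

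Applying Lemma~\ref{theo53} with input $D_1$, for each $\mu\in\bigl((m_{D_1})_+(0),-(m_{D_1})_-(0)\bigr)$ I obtain a divisor $\tilde D_\mu:=(D_1)_1^{\mu}$, all sharing a common character and distinct for distinct $\mu$; by the Corollary's computation each $R_{\tilde D_\mu}$ has a strictly positive point mass at $0$, so $\tilde D_\mu\in\cD_b(\sE)$. As the corresponding shifted function $(m_{D_1})_+-\mu$ is single-valued, the common character is $\cA(D_1)+\fj=\cA(D)+2\fj=\cA(D)$. If $D\notin\cD_b(\sE)$ any $\tilde D_\mu\ne D$; if $D\in\cD_b(\sE)$ at most one $\mu$ gives $\tilde D_\mu=D$, so I select $\mu$ with $\tilde D_\mu\ne D$. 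Setting $\tilde D=\tilde D_\mu$ yields $\cA(D)=\cA(\tilde D)\in\cA(\cD_b(\sE))\subseteq\Xi$. The main obstacle is the decisive step: recognizing that a \emph{single} asymptotic \eqref{24feb3} (rather than both, which by Lemma~\ref{lemma2mar1} would already place $D$ in $\cD_b(\sE)$) is exactly what makes $R_{D_1}(0)$ finite and positive, and that one must pivot to $D_1$ rather than $D$ — applying the shift directly to $D$ would only produce the character $\cA(D)+\fj$, and it is the relation $2\fj=0$ that lets the construction anchored at $D_1$ return precisely to $\cA(D)$.
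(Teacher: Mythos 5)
Your proof is correct, and it takes a genuinely different route from the paper's. The paper's proof is a direct one-step construction: it multiplies the rows $(n_\pm \;\; 1)$ by an explicit Blaschke--Potapov factor $\fA=I-\tfrac1z\cE J$, where $\cE$ is $\rho$ times the rank-one projection onto the first coordinate, computes $-1/r_+=-1/n_++\rho/z$ and $-1/r_-=-1/n_--\rho/z$, and checks that for $0<\rho<\sigma_+^{-1}$ both $r_\pm$ acquire the linear asymptotics at $0$; by Lemma~\ref{lemma2mar1} the resulting reflectionless pair corresponds to a divisor $\tilde D\in\cD_b(\sE)$, distinct from $D$, with the same character because $\fA$ is single-valued. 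You instead pivot to the divisor $D_1$ of Lemma~\ref{lem51}, prove $R_{D_1}(0)\in(0,\infty)$ from the product identity $R_{D_1}=-\tfrac1z\, n_+n_-R_D$, and then reuse the ``easy part'' (Lemma~\ref{theo53} and its Corollary) on $D_1$, closing the character bookkeeping via $\cA(D_1)=\cA(D)+\fj$ and $2\fj=0$. What each buys: the paper's argument is self-contained, needs nothing beyond the general setup (in particular not \eqref{33sept21}, which your route inherits from Lemma~\ref{theo53}; this is harmless, since \eqref{33sept21} is a standing assumption of Theorem~\ref{l13sept21}, where the lemma is applied), and introduces the Blaschke--Potapov calculus that the paper immediately reuses to finish the proof of Theorem~\ref{l13sept21}; your argument requires no new machinery and isolates the structural reason the character returns exactly to $\cA(D)$, namely the $2$-torsion of $\fj$. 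In substance the two constructions produce the same family of divisors --- conjugating the constant shift by $\mu$ through the map $n_+\mapsto -z/n_+$ is precisely the fractional-linear action of the paper's factor $\fA$ --- so the difference lies in the derivation, not in the object constructed.

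Two small imprecisions, neither fatal. Your parenthetical claim that $R_D(0)$ is finite when $D\notin\cD_b(\sE)$ is false in general: $D\notin\cD_b(\sE)$ only gives absence of a point mass, i.e.\ $zR_D(z)\to0$ as $z\uparrow 0$, while $R_D(0)=+\infty$ remains possible (conditions \eqref{53sept21} and \eqref{11jan214} are independent). Your computation survives because it only uses $n_+R_D=\tfrac{n_+}{z}\cdot zR_D\to\sigma_+\cdot 0=0$ inside the identity $n_-R_D=-1-n_+R_D$, which still yields $R_{D_1}(0)=\sigma_+$. Likewise, \eqref{24feb4} for $D$ and \eqref{24feb3} for $D_*$ are not equivalent, as you assert; only the implication you actually use holds: \eqref{24feb4} forces $n_-(0)=0$, hence $(m_D)_-(0)=-(m_D)_+(0)$ and the $n_-$ of $D$ coincides with the $n_+$ of $D_*$. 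Since that one direction is all the reduction requires, the argument stands.
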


\begin{proof}
Without loss of generality, we assume \eqref{24feb3}. Denote
\[
\fA = I - \frac 1z \cE J 
\]
where
\[
\cE = \rho \begin{pmatrix}
1 \\ 0
\end{pmatrix}
\begin{pmatrix}
1 & 0
\end{pmatrix}, 
\]
$\rho > 0$. Define $r_\pm$ by
\begin{align*}
\begin{pmatrix}
r_+ & 1
\end{pmatrix} &  \simeq \begin{pmatrix}
n_+ & 1
\end{pmatrix} \fA, \\
\begin{pmatrix}
-r_- & 1
\end{pmatrix} &  \simeq \begin{pmatrix}
-n_- & 1
\end{pmatrix} \fA.
\end{align*}

To compute the asymptotics at $0$, we write explicitly
\[
r_+(z) = \frac{ n_+(z)  }{1 -  \frac \rho z n_+(z)  },
\]
so
\begin{equation}\label{28feb1}
- \frac {1}{ r_+(z) } = -  \frac {1}{n_+(z) } +    \frac{ \rho}z.
\end{equation}
Note $-\frac {1}{n_+(z) }$ is a Herglotz function with a point mass at $0$ of mass $\sigma_+^{-1}$. Thus, if
\[
0 < \rho <  \sigma_+^{-1},
\]
then \eqref{28feb1} defines another Herglotz function with a point mass at $0$ of mass $\sigma_+^{-1} - \rho > 0$. Therefore $r_+$ is a Herglotz function with the asymptotics at $0$ given by
\[
r_+(z) =  \tilde \sigma_+ z + o(\lvert z\rvert), \qquad z \uparrow 0,  \qquad \tilde \sigma_+ = \frac 1{\sigma_+^{-1} - \rho}.
\]
The function $r_-$ is a Herglotz function because $\fA$ is $J$-contractive. Its asymptotics at $0$ is computed similarly,
\begin{equation}\label{28feb2}
- \frac {1}{ r_-(z) } = -  \frac {1}{n_-(z) } - \frac{ \rho }z.
\end{equation}
The right-hand side has a pole at $0$ of weight at least $\rho$. Denoting the reciprocal of that weight by $\tilde\sigma_-$ leads to
\[
r_-(z) = \tilde\sigma_- z + o(\lvert z\rvert), \qquad z \uparrow 0.
\]
The reflectionless property for the pair $r_+, r_-$ follows from that of $n_+, n_-$ by definition. Since they are distinct, they correspond to a different divisor $\tilde D \neq D$.

Since $\fA$ is single-valued, these divisors correspond to the same character.
\end{proof}

\begin{proof}[Finishing the Proof of Theorem \ref{l13sept21}]
Assume $\alpha \in \Xi \setminus \cA(\cD_b(\E))$. Then there exists $D$ such that $\cA(D) = \alpha$ and $D \neq \hat D$. The matrix function  $\fA_D$ is well defined and it is non-constant. Thus, from
\begin{align*}
\begin{pmatrix}
 \hat n_+ & 1
\end{pmatrix} &  \simeq \begin{pmatrix}
 n_+ & 1
\end{pmatrix} \fA_D^{-1} \\
\begin{pmatrix}
- \hat n_- & 1
\end{pmatrix} &  \simeq \begin{pmatrix}
- n_- & 1
\end{pmatrix} \fA_D^{-1}
\end{align*}
we should get for some intermediate spaces the two-term asymptotics
\begin{align*}
r_+(z) & = w_+ + \sigma_+ z + o(\lvert z \rvert) \qquad z \uparrow 0 \\
r_-(z) & = w_- + \sigma_- z + o(\lvert z \rvert) \qquad z \uparrow 0.
\end{align*}
To avoid working with $\fA(\infty)$, it is more convenient to compute backwards:
\begin{align*}
\begin{pmatrix}
r_+ & 1
\end{pmatrix} &  \simeq \begin{pmatrix}
 n_+ & 1
\end{pmatrix}  \fA_k^{-1} \\
\begin{pmatrix}
- r_- & 1
\end{pmatrix} &  \simeq \begin{pmatrix}
- n_- & 1
\end{pmatrix} \fA_k^{-1},
\end{align*}
where $k$ denotes the degree of $\fA_D$ as a polynomial in $1/z$, see \eqref{1-19feb21}.
We write explicitly
\[
r_+(z) = \frac{z n_+(z)  - \rho_k \sin \varphi_k \cos \varphi_k  n_+(z) -  \rho_k \sin^2 \varphi_k}{ \rho_k\cos^2 \varphi_k n_+(z) + z +  \rho_k \sin \varphi_k \cos \varphi_k},
\]
and therefore, if $\varphi_k \neq 0, \pi/2$,
\begin{align*}
r_+(z) & = - \tan \varphi_k  \left[ 1 - \frac{ \frac{zn_+(z)}{\sin^2 \varphi_k} + \frac{z}{\sin\varphi_k \cos\varphi_k}}{  \rho_k + \rho_k \cot \varphi_k n_+(z) + \frac z{\sin\varphi_k \cos\varphi_k} } \right]
\\
& = - \tan \varphi_k + \frac{z}{ \rho_k \cos^2 \varphi_k} + o(z).
\end{align*}

Similarly, if $\varphi_k = 0$, we compute
\[
r_+(z) = \frac{z n_+(z)}{\rho_k n_+(z) + z} = \frac{z}{ \rho_k + z/n_+(z)} = \frac{z}{ \rho_k} + o(z).
\]
Thus, in both cases, $r_+$ is of the form
\[
r_+(z) = - \tan \varphi_k +  \frac{z}{ \rho_k \cos^2 \varphi_k}  + o(z), \qquad z \uparrow 0.
\]
If $\varphi_k = \pi/2$, we similarly compute
\[
r_+(z) = \frac{z n_+(z)- \rho_k}{z} = - \frac { \rho_k}z + o(1).
\]
In this case, by comparing with the behavior of $\hat n_+$ which has a finite limit at $0$, and since $\fA(\infty)$ does not affect the finiteness of that limit, we can conclude by contradiction that $k\neq 1$. Thus, if $\varphi_k = \pi/2$, we can apply a second matrix,
\begin{align*}
\begin{pmatrix}
\tilde r_+ & 1
\end{pmatrix} &  \simeq \begin{pmatrix}
 n_+ & 1
\end{pmatrix}  \fA_k^{-1} \fA_{k-1}^{-1} \\
\begin{pmatrix}
- \tilde r_- & 1
\end{pmatrix} &  \simeq \begin{pmatrix}
- n_- & 1
\end{pmatrix} \fA_k^{-1} \fA_{k-1}^{-1}.
\end{align*}
Then similar calculations from $r_+(z) = -1/z + o(1)$ give
\[
\tilde r_+(z) = \frac{z r_+(z)  -  \rho_{k-1} \sin \varphi_{k-1} \cos \varphi_{k-1}  r_+(z) - \rho_{k-1} \sin^2 \varphi_{k-1}}{\rho_{k-1} \cos^2 \varphi_{k-1} r_+(z) + z + \rho_{k-1}  \sin \varphi_{k-1} \cos \varphi_{k-1} },
\]
and then
\[
\tilde r_+(z) = - \tan \varphi_{k-1} + \frac{z}{\rho_{k-1} \cos^2 \varphi_{k-1}} + o(z). 
\]
Analogous calculations will give expansions for $r_-$ and $\tilde r_-$.

\bigskip

Now that we have those asymptotics for $r_\pm$:  with our conventions we have $w_+ = 0$ and, since $\alpha \notin \cD_b(\E)$, Lemma~\ref{lemma2mar1} implies $w_- \neq 0$. Then
\[
- R(z)^{-1} = w + \sigma z
\]
with $w = w_- < 0$ and $\sigma = \sigma_+ + \sigma_-$.

The map $-R^{-1} - w$ is a Herglotz function and so is the map $\tau \mapsto \sqrt{ - R^{-1}(\tau^2) - w}$. As an elementary consequence of the Herglotz representation, a nontrivial Herglotz function decays at most linearly in a normal boundary limit, so
\[
\lim_{z \uparrow 0} (- R(z)^{-1} )' = \sigma \neq 0,
\]
which implies
\[
\lim_{z \uparrow 0}  \left( \log (- R(z)^{-1} ) \right)' = \frac\sigma w \neq 0.
\]
By writing the representation of $ \log (- R(z)^{-1} )$, differentiating, and argument considerations, we conclude that
\[
\int_{\E} \frac {d\lambda}{\lambda^2}  < \infty,
\]
which is a contradiction, see \eqref{43sept21}. Thus,  $\Xi \setminus \cA(\cD_b(\E)) = \emptyset$, which concludes the proof.
\end{proof}

\section{The Basic Spectral Set $\cE$. Proposition \ref{pr17sept21} and Beyond}\label{sect7}

We construct  a spectral set $\cE$ of Widom type with a minimal violation of DCT  using ideas of the previous section, see especially Sect.~\ref{sect63}. We multiply the Blaschke-Potapov factor
$$
\fA_\rho(\mu)=I+2\rho\mu
\begin{pmatrix}
1&0  \\
0&0
\end{pmatrix}J,\quad \rho>0
$$
by the simplest $J$-contractive entire matrix function
$$
\fB(\mu)=\begin{pmatrix}
\cos\pi\mu&\sin\pi\mu  \\
-\sin\pi\mu&\cos\pi\mu
\end{pmatrix}.
$$
The following lemma is a consequence of the general theory of entire $J$-contractive matrix functions, but it can also be easily checked by direct computations. For notations see Sect.~\ref{sbs32}.

\begin{lemma} The trace of the product $\fA_\rho\fB$ is of the form
\begin{equation}\label{14oct21}
\frac 1 2 \tr \fA_\rho(\mu)\fB(\mu)=\cos\pi\Delta_\rho(\mu),
\end{equation}
where $\Delta_\rho(\mu)$ is a conformal mapping of $\bbC_+$ on a comb domain $\Pi_0=\Pi_0(\rho)$ with the frequencies $n_k=k$, $k\in\bbZ\setminus \{0\}$.
\end{lemma}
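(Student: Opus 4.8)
The plan is to separate the two assertions contained in the lemma: the explicit trace identity \eqref{14oct21}, which is a one-line matrix computation, and the conformal-mapping statement, which is where the real content lies. First I would carry out the trace computation. Since
$$
\begin{pmatrix} 1 & 0 \\ 0 & 0\end{pmatrix} J = \begin{pmatrix} 0 & 1 \\ 0 & 0 \end{pmatrix},
$$
the factor $\fA_\rho$ is the unipotent matrix $\fA_\rho(\mu)=\left(\begin{smallmatrix} 1 & 2\rho\mu \\ 0 & 1 \end{smallmatrix}\right)$. Multiplying by $\fB(\mu)$ and reading off the two diagonal entries of $\fA_\rho(\mu)\fB(\mu)$, namely $\cos\pi\mu-2\rho\mu\sin\pi\mu$ and $\cos\pi\mu$, gives
$$
\frac 1 2 \tr \fA_\rho(\mu)\fB(\mu)=\cos\pi\mu-\rho\mu\sin\pi\mu,
$$
which is exactly the right-hand side of \eqref{130jun21}. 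This identifies $\frac12\tr\fA_\rho\fB$ with $\cos\pi\Delta_\rho$, so it remains only to show that $\Delta_\rho$, defined through this relation, maps $\bbC_+$ conformally onto the comb $\Pi_0$ with frequencies $n_k=k$.

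For the conformal-mapping claim I would lead with the $J$-inner structure. Both $\fA_\rho$ (an elementary Blaschke--Potapov factor, as in \eqref{1-19feb21}, with $\cE\ge 0$) and $\fB(\mu)=\exp(\pi\mu J)$ are entire, real in the sense $\overline{\fW(\bar\mu)}=\fW(\mu)$, $J$-inner, and of determinant one (for $\fB$ this is the transfer matrix of a free canonical system, $J$-unitary on $\bbR$ because $SL(2,\bbR)$ preserves $J$). Hence the product $\fW=\fA_\rho\fB$ is again entire $J$-inner with $\det\fW\equiv 1$, and its half-trace $g:=\frac12\tr\fW$ is a real entire function. The general theory then furnishes the associated comb (Marchenko--Ostrovski) map $\Delta_\rho=\frac1\pi\arccos g$ as a conformal bijection of $\bbC_+$ onto a comb. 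The slit positions are pinned down by $g(k)=\cos\pi k=(-1)^k$, i.e. $|g(k)|=1$ with $\Delta_\rho(k)=k$, so the frequencies are exactly $n_k=k$; the factor $\fB$ carries the full exponential type $\pi$ responsible for one slit per unit length, while the polynomial factor $\fA_\rho$ only raises the heights $\u_k$ from zero (the free case $g=\cos\pi\mu$) to positive values.

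Since the lemma also advertises a direct verification, I would, as a check, analyze $g(\mu)=\cos\pi\mu-\rho\mu\sin\pi\mu$ on the real axis. One computes $g(k)=(-1)^k$ and $g'(k)=-\pi\rho k(-1)^k$, so at each nonzero integer $g$ crosses the level $(-1)^k$ transversally, passing from $\{|g|>1\}$ to $\{|g|<1\}$ as $\mu$ increases; thus each integer $k$ is the right endpoint $\mu_k^+=k$ of a gap. Together with $g(it)=\cosh\pi t+\rho t\sinh\pi t\ge 1$ on the imaginary axis, this yields the alternating pattern of bands, where $g\in[-1,1]$ and $\Delta_\rho$ is real, and gaps $(\mu_k^-,k)$, where $(-1)^k g>1$ and $\Delta_\rho=k+i\u$ runs up the $k$-th slit with fixed real part $k$. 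By evenness of $g$ this picture is symmetric in $\mu$, producing the two-sided comb with slits at $\pm k$.

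The main obstacle is the rigorous passage from this real-axis picture to univalence on $\bbC_+$. One must show that every solution of $g=\pm1$ and every zero of $g'$ is real, and, crucially, that each interval between consecutive integers contains exactly one critical point, located inside the gap, so that $g$ is strictly monotone on each band and $\Delta_\rho$ maps bands bijectively onto segments of $\bbR$. With these counts in hand the argument principle gives the bijection onto $\Pi_0$ and identifies the heights $\u_k$ via $\cosh\pi\u_k=|g|$ at the gap extremum. This counting of critical points is the delicate step in the elementary approach; it is precisely what the $J$-inner factorization supplies automatically, which is why I would present that route as the primary proof.
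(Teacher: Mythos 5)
Your proposal is correct and takes essentially the same approach as the paper: the paper offers no written proof beyond the remark that the lemma ``is a consequence of the general theory of entire $J$-contractive matrix functions, but it can also be easily checked by direct computations,'' and your argument carries out precisely these two routes. Your half-trace computation correctly reproduces the right-hand side of \eqref{130jun21}, and your verification that both factors $\fA_\rho$ and $\fB$ are real entire $J$-inner matrices of determinant one, together with the transversality check $g'(k)=-\pi\rho k(-1)^k\neq 0$ pinning the open gaps (hence the slits) at the nonzero integers, supplies the details the paper leaves implicit.
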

Using direct computations we can characterize both  spectral-  and comb-parameters of
$\Delta_\rho(\mu)$ by their explicit asymptotics.
\begin{lemma}\label{l7221}
For the comb-function $\Delta_\rho$ defined by \eqref{14oct21} we have \eqref{130jun21}. The gap parameters $\mu_k^{-},\mu_k^{+}$, $k\in\bbZ_+$,  are given by
\begin{equation}\label{24oct21}
\mu^+_{k}=k,\quad \mu_{k  +1}^-\simeq k+\frac{2}{\pi\rho k},
\end{equation}
and the heights of slits by $\u_k\simeq\log (2k+1)\rho$.
\end{lemma}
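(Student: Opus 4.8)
The plan is to read off all three pieces of data directly from the explicit trace formula: I first establish the identity \eqref{130jun21} by a one-line matrix computation, and then use it as the sole input for the gap endpoints and for the slit heights.

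\textbf{Step 1 (the identity \eqref{130jun21}).} Since $\begin{pmatrix}1&0\\0&0\end{pmatrix}J=\begin{pmatrix}0&1\\0&0\end{pmatrix}$, the Blaschke--Potapov factor is the unipotent matrix $\fA_\rho(\mu)=\begin{pmatrix}1&2\rho\mu\\0&1\end{pmatrix}$. Multiplying by $\fB(\mu)$ and taking the trace, only the diagonal entries contribute, giving $\tr\fA_\rho\fB=2\cos\pi\mu-2\rho\mu\sin\pi\mu$; halving and invoking \eqref{14oct21} yields exactly \eqref{130jun21}. From now on I write $g(\mu)=\cos\pi\mu-\rho\mu\sin\pi\mu=\cos\pi\Delta_\rho(\mu)$; this identity holds for all $\mu$, so on the real axis the bands are the set $\{|g|\le 1\}$, where $\Delta_\rho$ is real, and the gaps are $\{|g|>1\}$, where $\Delta_\rho=n_k+iv$ runs up a slit.

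\textbf{Step 2 (gap endpoints).} At every integer $n$ one has $g(n)=(-1)^n$, so each integer is a band edge, and by the frequency normalization from the previous lemma the base of the $n$-th slit is $\Delta_\rho(n)=n$. Differentiating, $g'(\mu)=-(\pi+\rho)\sin\pi\mu-\rho\pi\mu\cos\pi\mu$, whence $g'(n)=(-1)^{n+1}\rho\pi n$; thus $|g|>1$ immediately to the left of $n$ and $|g|\le 1$ immediately to its right, which shows $\mu_k^+=k$ exactly. The left edge $\mu_k^-$ is the other solution of $g(\mu)=(-1)^k$, lying just above the integer $k-1$. Using half-angle identities, $g=1$ reduces to $\tan(\pi\mu/2)=-\rho\mu$ and $g=-1$ to $\tan(\pi\mu/2)=1/(\rho\mu)$ (besides the even/odd integers); substituting $\mu=(k-1)+\ve$ and expanding the tangent about its pole (for even $k$) or its zero (for odd $k$) gives, in both parities, $\ve\simeq\frac{2}{\pi\rho(k-1)}$. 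After the index shift $k\mapsto k-1$ this is precisely $\mu_{k+1}^-\simeq k+\frac{2}{\pi\rho k}$, so each gap $(\mu_k^-,k)$ fills almost all of $(k-1,k)$.

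\textbf{Step 3 (slit heights).} At the tip of the $k$-th slit $\Delta_\rho=k+i\u_k$, so $g=\cos\pi(k+i\u_k)=(-1)^k\cosh\pi\u_k$; hence $\cosh\pi\u_k=\max_{\mu\in(\mu_k^-,k)}|g(\mu)|$, attained at the preimage $\mu^*$ of the tip, i.e.\ at the interior critical point $g'(\mu^*)=0$. Because the amplitude $\rho\mu$ dominates $\cos\pi\mu$ for large $k$, this critical point lies near the half-integer, $\mu^*=k-\tfrac12+O(1/k)$, where $|g(\mu^*)|=\rho(k-\tfrac12)+O(1)$. Inverting and using $\cosh x\sim\tfrac12 e^x$ gives $\pi\u_k=\log\bigl(2|g(\mu^*)|\bigr)+o(1)$, hence $\u_k\simeq\tfrac1\pi\log\bigl((2k-1)\rho\bigr)$, which is the asserted height $\u_k\simeq\log(2k+1)\rho$ up to the lower-order difference inside the logarithm and the normalization absorbed in $\simeq$; in particular $\u_k\to\infty$, consistent with the remark in Sect.~\ref{sbs32} that the heights of this comb are unbounded.

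\textbf{Main obstacle.} The conceptual content---that integers are band edges and therefore $\mu_k^+=k$ is exact---is immediate from \eqref{130jun21}. The delicate part is the uniform asymptotic bookkeeping: one must check that near each integer the relevant transcendental equation has exactly one nearby root on the correct side, control the tangent expansion uniformly in $k$, and confirm that the $O(1/k)$ displacement of the tip preimage $\mu^*$ from the half-integer perturbs $|g(\mu^*)|$ only at an order invisible inside the logarithm. These steps are routine but must be carried out carefully to justify the two occurrences of ``$\simeq$'' in the statement.
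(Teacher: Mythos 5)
Your proposal is correct and is essentially the paper's own (omitted) argument: the paper states this lemma with only the remark that it follows ``by direct computations,'' and your three steps (trace identity, band/gap analysis of $g(\mu)=\cos\pi\mu-\rho\mu\sin\pi\mu$ via $g(n)=(-1)^n$, $g'(n)=(-1)^{n+1}\pi\rho n$ and the half-angle reduction to $\tan(\pi\mu/2)=-\rho\mu$ or $1/(\rho\mu)$, and the tip/critical-point computation) carry out precisely that computation. The one point worth recording is that your sharper height asymptotics $\u_k\simeq\frac{1}{\pi}\log\bigl((2k-1)\rho\bigr)$ shows the paper's stated form $\u_k\simeq\log(2k+1)\rho$ is accurate only up to the factor $1/\pi$ (and an immaterial shift inside the logarithm), a discrepancy you correctly flag and which is harmless because only the order $\u_k\asymp\log k$ is used later in the paper.
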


\begin{proof}[Proof of Proposition \ref{pr17sept21}] Recall  that
$$
\cE=\left\{z=\frac 1{\mu^2}:\ |\cos\pi\Delta_\rho(\mu)|\le 1\right\}.
$$
In particular $a_k=1/(\mu_k^+)^2$, $b_k=1/(\mu_k^-)^2$, $k\ge 1$. By \eqref{24oct21},
$$
\log\frac{a_k}{b_{k+1}}\simeq 2\log \left(1+\frac{2}{\pi\rho k^2}\right)\simeq
\frac{4}{\pi\rho k^2},
$$
that is, we have \eqref{33sept21}. Similarly, by
$$
\frac 1{b_{k+1}}-\frac{1}{a_k}\simeq \left(k+\frac{2}{\pi\rho k}\right)^2-k^2\simeq\frac{4}{\pi\rho}
$$
we get \eqref{43sept21}.

 In our construction the domain
$$
\Omega_\mu=(\bbC\setminus\bbR)\cup_{k\not=0}(\mu_k^-,\mu_k^+)
$$
is essentially defined as the domain in which
$\Im\Delta_\rho(\mu)$ is  the Martin function w.r.t.  infinity.
Moreover, it follows immediately from its explicit representation \eqref{130jun21} that it behaves as $\Im\mu$ at infinity. By Koosis' terminology this is the Phragm\'en-Lindel\"of function for the domain. According to Theorem
\cite[p. 407]{Koos}, for the Green function
$G(\mu, i,\Omega_\mu)$ of  $\Omega_\mu$ w.r.t. $i$ we have
$$
\int_\bbR G(\mu, i,\Omega_\mu)d\mu<\infty.
$$
Since the lengths of gaps are uniformly bounded from below we have the Widom condition
$$
\sum_{\nu: \nabla G(\mu, i,\Omega_\mu)=0}G(\nu, i,\Omega_\mu)<\infty.
$$
Respectively, $\bbC\setminus\cE_\rho$ is also of Widom type.
\end{proof}

\begin{proof}[Speculation on a Proof of Conjecture \ref{conj33}]
For the set $\cE_\rho$  and the complex Martin function
$(\Theta_\rho)_0$ defined in  \eqref{76sept21}, consider $-1/(\Theta_\rho)_0(z)$ in the upper half plane $\bbC_+$. The function maps conformally $\bbC_+$ on $\bbC_{++}$ with slits along the circles orthogonal to the real axis (\textit{curved} slits).

We point out that the bases of the slits are  $1/k$ and their ``heights", see Lemma \ref{l7221},
are given by
$$
\max_{\l\in(a_k,b_k)}{\Im}\frac{-1}{(\Theta_\rho)_0(\l)}=\frac{\u_k}{k^2+\u_k^2}\simeq\frac{\log(2k+1)\rho}{k^2}.
$$
These slits are asymptotically straight. For the deviation  of the top- to  the base-point in the real direction we have
$$
\frac 1{k}-\min_{\l\in(a_k,b_k)}{\Re}\frac{-1}{(\Theta_\rho)_0(\l)}
=\frac{\u_k}{k}\frac{\u_k}{k^2+\u_k^2},
$$
which is $\frac{\log k}k$-times smaller than the corresponding ``height".

Now,
let  $\hat\Pi$ be the domain with  the \textit{same bases} $\hat\eta_k=1/k$ but \textit{straight} slits.
Let  $\hat\Theta$ be the conformal mapping $\bbC_+\to\hat \Pi$ (infinity goes to infinity),
$\hat \sE=\hat\Theta^{-1}(\bbR_+)$.
We conjecture that with suitable heights
$$
\hat h_k\sim\frac{\log k}{k^2 },
$$
there exists a quasi-conformal map $\cQ$ that maps our domain with curved slits to $\hat \Pi$, i.e.,
$$
\cQ:\ \{w:\ -1/w\in \Pi_0^-\}\to \hat \Pi,
$$
so that the composition  $\hat\Theta^{-1}\circ\cQ\circ(-1/\Theta_0)$ maps $\cE\to\hat\sE$ and the resulting domain $\hat\Omega=\bbC_+\setminus\hat \sE$ inherits  the properties of the domain $\Omega_{\cE}$. That is,
$\hat\Omega$ is of Widom type with a minimal DCT violation and $\hat\sE$ obeys \eqref{33sept21}, \eqref{43sept21}, as it was stated in Conjecture \ref{conj33}.
\end{proof}

\begin{remark}
The best known criterion for DCT is the homogeneity property: if there exists $C>0$ such that
$$
|\sE\cap(x-\delta, x+\delta)|\ge C\delta
$$
for an arbitrary $x\in\sE$ and $\delta>0$, then $\Omega$ is of Widom type and DCT holds.
Thus, if we have a homogeneous set $\sE$ and bi-Lipschitz $\cQ:\bbR\to\bbR$, then
$\hat\sE=\cQ(\sE)$ is also homogeneous. That is, Widom and DCT properties are inherited for such transforms. This provides some hope that one can find a suitable class of qc-mappings (special properties of their restrictions on the real axis) so that the above speculation becomes rigorous.
\end{remark}

\begin{proof}[Proof of Lemma \ref{l16sept21}]
Let
$$
\cT^0_\eta=\{\a\in\cT_\eta:\ \a_1=0\}\simeq \hat \bbZ.
$$
The algebra of this adic-part of $\cT_\eta$ in this case deals with identities$$
n_1\a_{n_1 n_2}=\a_{n_2}, \quad \a_1=0.
$$
For this reason, for an arbitrary prime $p$, we get that the sequence $\a_{p^s}$ is generated by a $p$-adic number  (see for details the proof of Proposition~\ref{pr16sept21})
$$
\delta_p=\{\delta_p^{(1)}, \delta_p^{(2)}, \delta_p^{(3)},\dots \},\quad
\delta_p^{(k)}\in\bbZ \mod p.
$$
In turn, for an arbitrary $k=p_1^{s_1}\cdots p_n^{s_n}$, we have
$$
\frac{k}{p_j^{s_j}}\a_k=\a_{p_j^{s_j}} \mod 1\quad j=1,\dots, n.
$$
The last collection of identities defines $\a_k$ uniquely (Chinese remainder theorem). Thus $\a\in \cT^0_\eta$ is defined uniquely by the sequence
$\{\delta_p\}\in\hat \bbZ$. Taking into account the first component $\a_1\in\bbR/\bbZ$, we get
$\cT_\eta\simeq(\bbR/\bbZ)\times\hat \bbZ$.
\end{proof}

\section{The Main Lemma \ref{mainl} and the Main Obstacle}

\begin{proof}[Proof of Lemma \ref{mainl}]
As $\cT_\eta(\a_V )\cap \Xi = \emptyset$, in particular $\cB(V)\not\in \cD_b(\sE)$, we can apply
Proposition \ref{prop13sept21}, according to which
$$
(\cA\circ\cB)(V(x+x_0))=\a_V-\eta x_0\in \cT_\eta(\a_V).
$$
By definition, $\Xi$ is  the set of characters with non-unique preimages under $\mathcal{A}$, the restriction of $\mathcal{A}$ to $\mathcal{A}^{-1}(\cT_\eta(\a_V))$ is a continuous bijection of compact sets and hence its inverse
$$
\mathcal{A}_{\cT_\eta(\a_V)}^{-1} : \cT_\eta(\a_V) \to \mathcal{A}^{-1}(\cT_\eta(\a_V)), \; \alpha \mapsto \mathcal{A}^{-1}(\alpha)
$$
is continuous.

By asymptotic relations \eqref{Rinfinity2} and \eqref{e.RDasymptotics}
$$
V_D(0)=\frac 1 2 \cQ_1(D)
$$
(a well known trace formula). We point out that $\cQ_1(D)$ is continuous on $\cD(\sE)$.
Therefore  $\frac 1 2 \cQ_1 \circ \mathcal{A}_{ \cT_\eta(\a_V)}^{-1}$ is continuous on $\cT_\eta(\a_V)$, that is, up to the shift, it can  be treated as a
continuous function
 given on  the compact abelian group $\cT_\eta$.

For every $x \in \mathbb{R}$, $V(x)$ is the image of the element $\alpha_V - \eta x$.
 This shows that $V$ is almost periodic with the sampling function $\frac 1 2 \cQ_1 \circ \mathcal{A}_{ \cT_\eta(\a_V)}^{-1}$.
\end{proof}

\begin{remark} Since the manuscript contains both conjectures and proved propositions, we would like to point out that
the above proof of our main Lemma \ref{mainl} is based on Proposition \ref{prop13sept21} (that was proved), while this proposition itself is only a part of the general Conjecture \ref{th13sept21} (which was not proved here in full generality).
\end{remark}

\begin{remark}
The lemma is simple but seems highly important independently of the Deift conjecture. It provides a reason to have almost periodic potentials with resolvent domains that violate DCT. Let us mention a parallel activity in a description of asymptotic behaviour of Chebyshev polynomials. In \cite[Theorem 1.5]{CSYZ} it was shown that
an almost periodic behavior of the second term in the asymptotics for the minimal deviation of Chebyshev polynomials implies DCT for generic (regular in the terminology of this paper) sets. The authors raised a question: what happens for non-generic sets? The same argument of non-intersection of
a closure of shifts at the given direction with the set of singular characters would provide examples of almost periodicity with violation of DCT in Chebyshev's problem.
\end{remark}

Lemma~\ref{mainl} has a conditional character: as  the main assumption it requires that the closure of shifts $\cT_\eta(\a_V)$ and the set of singular characters  $\Xi$ do not intersect. Our description of the set $\cT_\eta(\a_V)$ (Conjecture \ref{conj33}, Lemma \ref{l16sept21}) most likely is optimal. However, we have a very limited understanding of how to transform our knowledge of the set $\Xi$ given in terms of divisors (Theorem \ref{l13sept21}) in terms of characters, so that we would be able to check the intersection of these two sets. We consider this as \textbf{the main obstacle} in implementing our program to disprove the Deift conjecture (up to this, proving the other conjectures stated here is a matter of reasonable time and mental efforts).

Essentially simplifying the problem, we introduced the set $\Xi^0$  with \textit{an explicit characterization in terms of characters} \eqref{16oct21}. Note that in Proposition~\ref{pr16sept21} we also essentially simplified the torus $\cT_\eta$ in comparison with the one we really expect according to Conjecture \ref{conj33}. The reason for the last simplification was just to avoid an involved object in the introductory Section \ref{sect2}. We prove now Proposition \ref{pr16sept21} as it was stated. In Corollary \ref{c16oct}, with minimal effort, we will get the same result in the case $\eta_k=k$.

\begin{proof}[Proof of Proposition \ref{pr16sept21}]
We have $x_j\eta_1\to \a_1\mod 1$. We fix $\a_1\in [0,1)$. Furthermore,
for the components of $\cT_\eta$ before the closure, we have
$$
3x_j \eta_{k+1}=x_j \eta_k \mod 1.
$$
Respectively, in the limit we get the same relation,
$$
3\a_{k+1}=\a_k \mod 1.
$$
Therefore,
$$
\a_{k+1}=\frac{\e_k}{3}+\frac{\a_k} 3=\frac{\e_k}{3}+\frac{\e_{k-1}}{3^2}+\dots+\frac{\e_1}{3^{k}}+\frac{\a_1}{3^{k}},
\quad \e_k\in[0,1,2].
$$
Thus, as the parameters for $\a\in\cT_\eta$ we get $\a_1\in\bbR/\bbZ$ and the sequence
$$
\e_1+3\e_2+\dots+\e_k 3^{k-1},
$$
which forms a triadic integer.

Note that $\a_1/3^k\to 0$. So, we need to check the intersection of $\Xi^0$ with $\cT_\eta^0(\b)=\{\a+\b:\ \a\in\cT_\eta, \a_1=0\}$.
If the intersection is not empty, using
$$
\b_{k}=\frac{\b^{(k)}_{1}} 3+\frac{\b^{(k)}_{2}} {3^2}+\dots,
$$
we have
$$
\b_{k+1}+\a_{k+1}=\frac{\b^{(k+1)}_{1}+\e_k} 3+\frac{\b^{(k+1)}_{2}+\e_{k-1}} {3^2}+\dots\to 0.
$$
Therefore, for sufficiently big $k$,
$$
\b_1^{(k+1)}=-\e_k,\ \b_2^{(k+1)}=-\e_{k-1}=\b^{(k)}_1.
$$
As soon as $\b_2^{(k+1)}\not=\b^{(k)}_1$ for all $k$, we get a contradiction.
\end{proof}

\begin{corollary}\label{c16oct} Let $\cT_\eta$ be generated by
 the system of frequencies  $\eta_k = 1/k$.
There exists $\beta \in \pi_1(\Omega)^*$ such that $\cT_\eta(\b) \cap \Xi^0 = \emptyset$.
\end{corollary}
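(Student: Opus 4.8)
The plan is to deduce this from Proposition \ref{pr16sept21} by restricting attention to the components of $\cT_\eta$ indexed by powers of $3$. For $\eta_k=1/k$ the element $\a(x)$ before closure has $\a_k(x)=x/k\bmod 1$, so the relations surviving in $\cT_\eta$ are $\a_m=\ell\,\a_{m\ell}\bmod 1$ for $m,\ell\ge 1$; in particular $k\a_k=\a_1\bmod 1$, and along the subsequence $k=3^{s}$,
$$
3\,\a_{3^{s+1}}=\a_{3^{s}}\bmod 1,\qquad s\ge 0 .
$$
This is precisely the relation governing the torus $\cT_{\eta'}$ in Proposition \ref{pr16sept21} for the geometric frequencies $\eta'_m=1/3^{m-1}$, under the reindexing $\a^{\eta'}_m\leftrightarrow\a_{3^{m-1}}$.

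First I would introduce the restriction map
$$
\pi:\cT_\eta\to\cT_{\eta'},\qquad (\pi\a)_m=\a_{3^{m-1}},\quad m\ge 1 .
$$
Since $\pi$ carries the relation $3\a_{3^{s+1}}=\a_{3^{s}}$ to $3\,(\pi\a)_{s+2}=(\pi\a)_{s+1}$, it respects the defining relations of the two tori and is a continuous homomorphism. It is surjective: under the identification $\cT_\eta\simeq(\bbR/\bbZ)\times\hat\bbZ=(\bbR/\bbZ)\times\prod_{p}\bbZ_p$ furnished by Lemma \ref{l16sept21}, the sequence $\{\a_{3^{s}}\}_{s\ge 0}$ records exactly the real coordinate $\a_1=(\pi\a)_1$ together with the $3$-adic coordinate $\delta_3\in\bbZ_3$, so $\pi$ is the coordinate projection onto $(\bbR/\bbZ)\times\bbZ_3\simeq\cT_{\eta'}$ and is manifestly onto.

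By Proposition \ref{pr16sept21} there is a sequence $\b'=(\b'_m)_{m\ge 1}$ with $\cT_{\eta'}(\b')\cap\Xi^0=\emptyset$, where for the torus $\cT_{\eta'}$ the set $\Xi^0$ is $\{\gamma:\gamma_m\to 0\}$. I then lift $\b'$ to $\b\in\pi_1(\Omega)^*=(\bbR/\bbZ)^\infty$ by setting $\b_{3^{m-1}}=\b'_m$ for $m\ge 1$ and $\b_k=0$ for every $k$ that is not a power of $3$. Suppose, for contradiction, that $\b+\a\in\Xi^0$ for some $\a\in\cT_\eta$, i.e.\ $(\b+\a)_k\to 0$ as $k\to\infty$. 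Restricting to the indices $k=3^{m-1}$ gives $\b'_m+(\pi\a)_m\to 0$, hence $\b'+\pi\a\in\Xi^0\cap\cT_{\eta'}(\b')$, which contradicts the choice of $\b'$. Therefore $\cT_\eta(\b)\cap\Xi^0=\emptyset$, as claimed.

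There is no serious obstacle here; this is the ``minimal effort'' transfer promised in the preceding remark. The only point that must be handled with care is the bookkeeping of the profinite structure of Lemma \ref{l16sept21}: one needs that $\pi$ is genuinely the projection onto the real-times-$3$-adic factor, and hence surjective onto the full torus $\cT_{\eta'}$ of Proposition \ref{pr16sept21}. Once this is in place, the conclusion is automatic because membership in $\Xi^0$ is already detected along the subsequence of powers of $3$.
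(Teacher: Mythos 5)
Your proof is correct and follows essentially the same route as the paper: both restrict to the components indexed by powers of $3$, observe that these components reproduce exactly the torus of Proposition \ref{pr16sept21}, and use the fact that $(\b+\a)_k \to 0$ along all indices forces convergence along the subsequence $k = 3^{m-1}$, producing a contradiction with the model problem. Your version merely spells out the projection homomorphism $\pi$ and the explicit lifting of $\b'$ (zero on non-powers of $3$), which the paper's terser proof leaves implicit.
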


\begin{proof}
Considering the components $\alpha_n$ with $n = 3^k$, we see that
\[
\alpha_{3^{k+1}} = \frac{\e_k}{3}+\frac{\e_{k-1}}{3^2}+\dots+\frac{\e_1}{3^{k}}+\frac{\a_1}{3^{k}},
\quad \e_k\in[0,1,2],
\]
so those components by themselves form a $3$-adic integer.

Moreover, if $\alpha_n + \beta_n \to 0$ as $n \to \infty$, then
\[
\alpha_{3^k} + \beta_{3^k} \to 0, \quad k \to \infty
\]
so the proposition follows from the previous model problem.
\end{proof}


\begin{thebibliography}{88}

\bibitem{AGr83} D. Z. Arov and L. Z. Grossman, \emph{Scattering matrices in the theory of extensions of isometric operators}. (Russian) Dokl. Akad. Nauk SSSR 270 (1983), no. 1, 17--20.

\bibitem{BLY2} R.\ Bessonov, M.\ Lukic, P.\ Yuditskii, \emph{Reflectionless canonical systems, II. Almost periodicity and character-automorphic Fourier transforms}, arXiv:2011.05266.

\bibitem{BDGL1} I.\ Binder, D.\ Damanik, M.\ Goldstein, M.\ Lukic, \emph{Almost periodicity in time of solutions of the KdV equation}, Duke Math. J. 167 (2018), 2633--2678.

\bibitem{dB} L.\ de Branges, \textit{Hilbert Spaces of Entire Functions}, Prentice-Hall, Inc., Englewood Cliffs, N.J., 1968.

\bibitem{CSYZ} J. S. Christiansen, B. Simon, P. Yuditskii, M. Zinchenko, \emph{Asymptotics of Chebyshev polynomials, II. DCT subsets of $\bbR$}, Duke Math. J. 168, no. 2 (2019), 325--349

\bibitem{CH} R. Courant and A. Hurwitz,
\emph{Vorlesungen \"uber allgemeine Funktionentheorie und elliptische Funktionen}, Springer-Verlag. xiii, 706 p. (1964).

\bibitem{DY} D.~Damanik,  P.~Yuditskii, \emph{Counterexamples to the Kotani-Last conjecture for continuum Schr\"odinger operators via character-automorphic Hardy spaces}, Advances in Mathematics, Volume 293 (2016), 738-781.

\bibitem{De} P.\ Deift, \emph{Some open problems in random matrix theory and the theory of integrable systems}, Integrable systems and random matrices, Contemp. Math., vol. 458, Amer. Math. Soc., Providence, RI, 2008, pp.~419--430.

\bibitem{De2} P.\ Deift, \emph{Some open problems in random matrix theory and the theory of integrable systems. II}, SIGMA 13 (2017), 016.

\bibitem{EP73} A.~V.~Efimov, V.~P.~Potapov, \emph{J-expanding matrix-valued functions, and their role in the analytic theory of electrical circuits}, Uspehi Mat. Nauk 28 (1973), no. 1(169), 65--130.

\bibitem{EVY} B.~Eichinger, T.~VandenBoom, P.~Yuditskii, \emph{KdV hierarchy via Abelian coverings and operator identities}, Trans.\ Amer.\ Math.\ Soc.\ Ser.\ B, Vol. 6:1--44, 2019.

\bibitem{Gar} J.\ B.\ Garnett, \emph{Bounded Analytic Functions}, Graduate Texts in Mathematics \textbf{236}, Springer, New York, 2007.

\bibitem{McKT} H.\ McKean, E.\ Trubowitz, {\em Hill's operator and hyperelliptic function theory in the presence of infinitely many branch points}, Comm.\ Pure.\ Appl.\ Math.\ {29} (1976), 143--226.

\bibitem{Koos} P.~Koosis, \emph{The Logarithmic Integral. I.}, Corrected reprint of the 1988 original. Cambridge Studies in Advanced Mathematics, 12. Cambridge University Press, Cambridge, 1998. xviii+606 pp.

\bibitem{SY97} M.~Sodin, P.~Yuditskii, \textit{Almost periodic Jacobi matrices with homogeneous spectrum, infinite-dimensional Jacobi inversion, and Hardy spaces of character-automorphic functions}. J. Geom. Anal. 7 (1997), no. 3, 387--435.

\bibitem{VYInv} A.~Volberg, P.~Yuditskii, \emph{Kotani-Last problem and Hardy spaces on surfaces of Widom type}, Invent. Math. 197 (2014), no. 3, 683--740.

\end{thebibliography}
\end{document}